\newcommand*{\LONG}{}%
\pdfpagewidth=8.5in
\pdfpageheight=11in

\pdfoutput=1

\ifdefined\LONG
\documentclass[11pt]{article}
\usepackage{amsfonts,amsmath,amssymb,latexsym,comment,amsthm}

\usepackage{fullpage}
\usepackage{mathrsfs}
\usepackage{times}
\usepackage{xspace}
\usepackage{cases}
\usepackage{amsmath} 
\usepackage[colorlinks,linkcolor=blue,filecolor=blue,citecolor=blue,urlcolor=blue,pdfstartview=FitH]{hyperref}
\usepackage[ruled,boxed,commentsnumbered,noresetcount]{algorithm2e}

%


\setlength{\unitlength}{1mm}
\setlength{\marginparwidth}{23mm}

\def\squarebox#1{\hbox to #1{\hfill\vbox to #1{\vfill}}}


\newtheorem{theorem}{Theorem}

\newtheorem{lemma}{Lemma}

\newtheorem{definition}{Definition}

\newtheorem{claim}{Claim}

\newcommand{\namedref}[2]{\hyperref[#2]{#1~\ref*{#2}}}
\newcommand{\sectionref}[1]{\namedref{Section}{#1}}

\newcommand{\theoremref}[1]{\namedref{Theorem}{#1}}

\newcommand{\figureref}[1]{\namedref{Figure}{#1}}

\newcommand{\lemmaref}[1]{\namedref{Lemma}{#1}}

\newcommand{\algref}[1]{\namedref{Algorithm}{#1}}

\newenvironment{RETHM}[2]{\trivlist \item[\hskip\labelsep{\bf
#1\hskip 5pt\relax\ref{#2}.}]\it}{\endtrivlist}
\newcommand{\rethm}[1]{\begin{RETHM}{Theorem}{#1}}
\newcommand{\repro}[1]{\begin{RETHM}{Proposition}{#1}}
\newcommand{\relem}[1]{\begin{RETHM}{Lemma}{#1}}
\newcommand{\recor}[1]{\begin{RETHM}{Corollary}{#1}}
\newcommand{\reclm}[1]{\begin{RETHM}{Claim}{#1}}

\newcommand{\erethm}{\end{RETHM}}
\newcommand{\erepro}{\end{RETHM}}
\newcommand{\erelem}{\end{RETHM}}
\newcommand{\erecor}{\end{RETHM}}
\newcommand{\ereclm}{\end{RETHM}}

\newcommand{\true}{\mathit{true}}




\newcommand{\ie}{\emph{i.e.,\ }}

\setlength{\parskip}{.5ex}
\def\beginsmall#1{\vspace{-\parskip}\begin{#1}\itemsep-\parskip}



\newcommand{\dd}[1]{\textbf{\color{blue}
[[[dd: #1]]]}}
\newcommand{\ms}[1]{\textbf{\color{red}
[[[meir: #1]]]}}

\newcommand{\tb}{\makebox[0.6cm]{}}
\newcommand{\due}{\makebox[1cm]{}}

\newcommand{\hide}[1]{}

\newcommand{\commentout}[1]{}



\newcommand{\M}{\mathcal{M}}

\newcommand{\RR}{Reliable Broadcast\xspace}


\usepackage{graphicx}

\graphicspath{ {images/} }



\begin{document}


\title{Possibility and Impossibility of Reliable Broadcast in the Bounded Model}
\author{
Danny Dolev\footnote{Danny Dolev is Incumbent of the Berthold Badler Chair in Computer Science. 
The Rachel and Selim Benin School of Computer Science and Engineering
Edmond J. Safra Campus.
This work was supported in part by the HUJI Cyber Security Research Center in conjunction with the Israel National Cyber Bureau in the Prime Minister's Office. This work was supported in part by The Israeli Centers of Research Excellence (I-CORE) program, (Center  No. 4/11).  
}\\
Hebrew University of Jerusalem\\
Jerusalem, Israel\\
\texttt{dolev@cs.huji.ac.il}
\and
Meir Spielrien\\
Hebrew University of Jerusalem\\
Jerusalem, Israel\\
\texttt{meir.spielrein@mail.huji.ac.il}
}

\maketitle
\thispagestyle{empty}

\setcounter{page}{1}


\begin{abstract}

The Reliable Broadcast concept allows an honest party to send a message to all other parties and to make sure that all honest parties receive this message. In addition, it allows an honest party that received a message to know that all other honest parties would also receive the same message. This technique is important to ensure distributed consistency when facing failures. 

In the current paper, we study the ability to use \RR to consistently transmit a sequence of input values in an asynchronous environment with a designated sender. The task can be easily achieved using counters, but cannot be achieved with a bounded memory facing  failures. We weaken the problem and ask whether the receivers can at least share a common suffix.  We prove that in a standard (lossless) asynchronous system no bounded memory protocol can  guarantee a common suffix at all receivers for every input sequence if a single party might crash. 

We further study the problem facing transient faults and prove that when limiting the problem to transmitting a stream of a single value being sent repeatedly we show a bounded memory self-stabilizing protocol that can ensure a common suffix even in the presence of transient faults and an arbitrary number of crash faults.  We  further prove that this last problem is not solvable in the presence of a single Byzantine fault.
Thus, this problem {\bf separates} Byzantine behavior from crash faults in an asynchronous environment. 

\end{abstract}

\newpage

\section{Introduction}
Many distributed algorithms make use of the  `Reliable Broadcast' technique that allows one party to send a message to all other parties  guaranteeing that if any honest party receives a message, all honest parties will receive the same message.
Reliable Broadcast is somewhat weaker than Consensus since it  allows for more flexibility in dealing with a faulty sender - if the sender is faulty there is no requirement that any message will ever be received by honest parties. 
Therefore, \RR is solvable in an asynchronous environment despite of crash faults.

The current paper studies more deeply the ability to achieve
Reliable Broadcast in a `Bounded Model', in which  the memory of honest parties is bounded by some constant, and the system is asynchronous.
For a single message transmission, Reliable Broadcast is achievable with bounded memory in such a system.
For transmitting a sequence of inbound messages via Reliable Broadcast we prove the impossibility of 
even guaranteeing the delivery of only a common suffix of the sequence of inputs, for some sequences of inputs, given that a single party might crash.

We study two variants of the Bounded Model. In the first variant, in addition to having a bounded memory, the capacity of each link is also bounded, i.e., there is a constant $\bar c$ such that the number of messages simultaneously present over each link never exceeds $\bar c$.
In the second variant, in addition to having a bounded memory, the links are unbounded and lossless (the standard asynchronous environment). 
In both models there is no assumption about message ordering over the links, thus, no FIFO is assumed.\footnote{If FIFO is assumed, the problem is trivially solvable for the unbounded link capacity variant. The lower bound for the bounded link capacity variant can be proved for FIFO links as well.
} 
We  show a common suffix impossibility results for both models. 

%
%

We also study the problem in a system that is subjected to transient faults.
For the self-stabilization model, we prove that a common suffix can be guaranteed for a stream of a single input value that repeats itself, even when facing any number of crash faults. 
The solution makes use of a link layer, inspired by \cite{StabilizingDataLink,StabilizingEndToEndCommunication}.
Our link layer (as we will explain later) guarantees that the number of ghost messages (messages that weren't really sent) that might be received, following the last transient fault, is at most 3. The solution does not exchange acknowledge messages. 
The lower bound technique we developed for proving the previous lower bounds  is extended  to prove that 
a common suffix cannot be guaranteed even for a stream of a single value in the presence of  a single Byzantine party. Thus, this problem shows a clear {\bf separation} between crash and Byzantine faults.

It is important to point out that the the problem discussed in this article is not equivalent to Total Ordering Reliable Broadcast. In this paper we consider a single sender in opposed to total ordering that considers multiple senders and tries to achieve a total order among all concurrent sending of messages. In addition the impossibility results of this paper are not derived from the $FLP$ impossibility (\cite{FLP}), since the single sender message sequence is solvable with infinite memory.  

We are not the first to notice that sequential (repeated) Reliable Broadcast is unsolvable under the bounded model. 
%
\cite{Ricciardi96impossibilityofRepeatedReliableBroadcast,WithFiniteMemoryConsensusIsEasierThanReliableBroadcast} also show relevant  impossibility result, though the model assumes message loss. 
Here we show that message loss is not the source of the impossibility result and the same impossibility result can be achieved without assuming messages loss.  
In addition the impossibility result in previous papers is based on the observation  that a party cannot generate a message that has been lost from the internal memory of every party. 
But this does not necessarily imply that the sequences received by different parties differ.
For example, if the input sequence is periodic, a party that lost a message may still receive the same sequence, or the same suffix, it only needs to identify when the period starts. 
Moreover, the requirement that the input sequence will be generated by an external entity is not enough, the external entity may enter the same input to each party, so it is trivial to guarantee an identical sequence of received messages.
\cite{ReliableBroadcastInSynchronousAndAsynchronousEnvironments} 
also mentioned that the repeated Reliable Broadcast problem is unsolvable when considering finite memory at each party. 

There are many distributed problems that make use of Reliable Broadcast, Tal Rabin and Ran Canetti used a version of Reliable Broadcast in their paper Fast Asynchronous Byzantine Agreement with Optimal Resilience~\cite{FastAsynchronousByzantineAgreementwithOptimalResilience}. They called this version of Reliable Broadcast A-CAST protocol. They use this protocol to achieve an asynchronous secret sharing  which leads to asynchronous Byzantine agreement.

In~\cite{ByzantinePaxos}  a weaker version of Reliable Broadcast is used to replace digital signatures. In that  version, there is a specific receiver that needs to be able to know that enough other parties received the sent messages.
In~\cite{Abraham_optimalresilience}  a version of repeated Reliable Broadcast is used to achieve approximate agreement with optimal resilience. 
In both articles, the repeated Reliable Broadcast protocol makes use of round numbers in the messages. This technique will clearly not hold in the bounded model.
Can  a bounded time stamp technique (\cite{DolevShavit:1989,BoundedConcurrentTime-StampSystemsAreConstructible,
Dwork:1999}) be used to overcome this limitation?

We concentrate on the 
bounded memory and bounded links model when considering self-stabilization, since \cite{self-stabilizationofdynamicsystems,Stabilizingcommunicationprotocols} proved that when considering unbounded links, the construction of self-stabilizing links requires an unbounded memory. 

In \cite{SelfStabilizePaxos,Dolev2015, WhenConsensusMeetsSelfStabilization} the practically self-stabilize concept is introduced. 
A practically stabilizing protocol assumes that most protocols such as Paxos, under realistic operation have a lifetime that could not lead their counter to exceed a very high maximal value, e.g., of $2^{64}$. This can only take place in the case where a transient fault occurs. Thus, a practically infinite run is a run that lasts a long enough number of successive steps, \ie, $2^{64}$ steps. 
A practically self-stabilize protocol needs to achieve this exact behavior and be able to reach the long enough run from any initial configuration. 
In the current paper, we consider the classical (strong) self-stabilization and require that the stabilized run will last forever.

\section{The Model and Problem Definition}
Let $\Pi = [p_1, ...,  p_n]$ be a set of parties
and let $p_{sender}$ be a special party called the sender. 
An {\em honest party} follows the protocol's instructions, a crashed party follows the protocol's instructions until it crashes, and an adversary controls all Byzantine parties and instructs them what to do,  regardless of the protocol's instructions. 
Parties communicate via message passing.
There is a finite set of possible messages that may be sent, thus one cannot use infinite counters, or damp all past history in a message.   
We assume a standard asynchronous environment 
with fully connected network graph, \ie there are two directional links between any two parties. 
Pending messages over an incoming link may arrive at arbitrary order.
$p_{sender}$ has an external input stream containing values that the sender needs to broadcast to all receivers.  
The values are generated by an external source and this external source is not subjected to transient faults.

An {\em Internal State} of a party is the values of all internal variables.
A {\em Configuration}, $C$, is the internal states of all parties. 
The {\em Network State}, $N$, contains all the links and the sets of messages in them.
The {\em System State} is $S=(C,N)$.

A {\em step} is a function from one system state to another. A step  chooses a party,  
according to the party's current internal state it
potentially receives a pending message from one of its incoming links (if the chosen party is $p_{sender}$, it may read the next value from the input stream). As a function of the specific message obtained (if any) it performs the  instructed task, potentially sends messages to other parties and returns a new internal state. 
A step may also contain an action to deliver (output) some value, in this case, a single output is delivered in a step. 
Notice that not all actions can be taken at all system states. 
A step $s$ is {\em feasible} at internal state $C$ and network state $N$ if $s$ is the next step in $C$ and $s$ can be performed at $N$, \ie if the message (or input)  received in $s$ is  in an incoming link (or at the top of the input stream) of the party. 

When a party  sends a message to another party, the message is added to the set of pending messages over the communication link connecting them. 
If the link is bounded and exceeds its maximal capacity, then an arbitrary message (either the new one or a pending one) is lost. When a party tries to receive a message it arbitrarily receives one of the messages present on one of its incoming links or none (if no message is being received), and the received message is  removed from the respective link.

A {\em run} is a sequence of  configurations and steps, $[c_1, s_1, ...,  s_{n-1}, c_n]$ (could be infinite). 
A run specifies only the sequence of configurations and the steps that are performed. It does not specify the network states. Once a run is applied to a given network state it implies a sequence (could be infinite) of System States. 
A run is feasible at a network state $N$, if for each $i$, $s_i$ is feasible at $(c_i,N_i)$, where $N_i$ is the result of applying the prefix $[c_1, s_1, ...,  s_{i-1}]$ to $N$, and if $c_{i+1}$ is the result of applying $s_i$ on $c_i$. 
The {\em initial configuration} is $c_1$. A partial run is a run that starts at some intermediate configuration of some run and includes some sequence of consecutive steps of that run. 

Consider two models. 
The {\em semi-bounded model}, in which the memory is bounded and the  links are reliable and unbounded.
The fully bounded model, in which the memory is bounded and the links are unreliable, \ie messages may get lost and the number of messages in transit over a link never exceeds some constant $\bar c$. 
In the case of self-stabilization, as a result of a transient fault, the link may contain fake messages, 
The only way this may happened is because of transient faults.
Thus, there can be at most $\bar c$ fake messages that may arrive on that link. 
To eliminate adversarial link scheduling,  assume that if a message  is sent infinitely often it will arrive infinitely often. 

\begin{definition}[Suffix]
Let $a,b$ be two sequences (could be infinite). We say that $a$ is a suffix of $b$ if there is a location $i$ in b such that the sub-sequence of $b$ that starts from $i$ is the sequence $a$.
\end{definition}

The traditional {\em Reliable Broadcast} problem (cf. \cite{Chang:1984,CommunicationandAgreementAbstractionsforFault-TolerantAsynchronousDistributedSystems}) focuses on a message delivery consistency. In the current paper we are interested only in a suffix consistency.
%
%
The {\em Suffix Reliable Broadcast} (SuRB) problem is to implement a protocol that satisfies: $p_{sender}$ sends a sequence of messages, subjected to:
\beginsmall{enumerate}

\item[S1)] If an honest party delivers an infinite sequence of messages,  every honest party delivers an infinite sequence of messages and all these sequences share a non-trivial common suffix. 
\item [S2)] If $p_{sender}$ is honest and 
broadcasts an infinite sequence of messages, all honest parties  share a non-trivial common suffix with the input sequence of $p_{sender}$.

\end{enumerate}

\section{Suffix Reliable Broadcast with Unbounded Lossless Links}
We  examine first the Suffix Reliable Broadcast in the semi-bounded model, thus links are lossless.
The adversary may prevent a message from reaching its destination for as long as it wants, but eventually, all messages must reach their destinations. We show impossibility of {\em SuRB}, \ie there is no protocol that can satisfy SuRB for all input sequences in the standard asynchronous model
in the presence of a single crash fault.
The proof technique is used later in the paper to prove the lower bound results for the transient-fault model.

\commentout{
\ms{this hole part is a little tricky and "invite" more comments from the reviewers}
Paxos uses \ms{to create a replicated state machine. This technique uses an integer counters to specify the proposal number (for Paxos) and the consensus instance (for replicated state machine) (comment 2 reviewer 2)}. It is clear that in the bounded model those integers may reach their maximal value and so the \ms{replicated state machine} will stop functioning correctly. \ms{This claim is not enough to say that we cannot create a protocol that achieve the same behaviure in our bounded model as Paxos does in the unbounded}. One may try to replace the integer counters with bounded timestamps ( such as~\cite{DolevShavit:1989,BoundedConcurrentTime-StampSystemsAreConstructible} 
or~\cite{SimpleandEfficientBoundedConcurrentTimestamping,Dwork:1999}\ms{those are not self-stabilize (comment 2 reviewer 3)}) and achieve the same desired behaviour. The impossibility result we prove implies, among other things, that it is still impossible  to create a \ms{replicated state machine} in the bounded model even when only one party is prone to crash.
}

We  assume that parties do not have any prior knowledge 
regarding the input stream to be provided to the sender.

Since  communication links are unbounded we need a compact representation of the network state.
Let $\M=\{m_1,...,m_{\tiny M}\}$ be the set of possible messages, which we assume is bounded. 
The total number of links in the system is $\bar\ell=n(n-1).$
Define the {\em network matrix}, $N$, to be a matrix of size $M \times\bar\ell$, where  $N[i,j]$  is the number of $m_i$ messages  currently in link $j$. 
For two network matrices, $N_1,N_2$,  $N_1 \leq N_2$ if for each entry $i,j$, $N_1[i,j] \leq N_2[i,j]$.
Thus, the system state can be represented as $(C,N)$, where  $C$ is a configuration and $N$ is a network matrix.

Let $R_1 = [c^{(1)}_1, s^{(1)}_1, ...,  c^{(1)}_n]$ be a run feasible at state $N^{(1)}$ 
and $R_2 = [c^{(2)}_1, s^{(2)}_1, ..., c^{(2)}_n]$ at $N^{(2)}$.  
Let $ N^{(1)}_n=N^{(1)}R_1$ be the resulting network state of applying $R_1$ on $N^{(1)}$.
Assume that $c^{(1)}_n = c^{(2)}_1$
and $N^{(2)} \leq N^{(1)}_n$. 
It is possible to concatenate $R_1$ and $R_2$ and receive a new partial run 
$R = [c^{(1)}_1, s^{(1)}_1, ..., c^{(2)}_1, s^{(2)}_1, ..., c^{(2)}_n]$  feasible at state $S_1=(c^{(1)}_1,N^{(1)})$. 
This concatenation is possible, because if a run is feasible from a given network matrix then the same run is also feasible at any greater or equal network matrix, since all the messages used during the run also exist in any greater or equal network matrix.
Notice that $R_2$ could be infinite only if $N^{(1)}_n = N^{(2)}$,  since all messages need to eventually be delivered, so we need at some point to deliver messages that has not been delivered within $R_2$.

The following is the main result of this section. The full proof of this theorem appears in \sectionref{surb-imp}, and other missing proofs appear in the appendix, \sectionref{sec:unbounded}. 
\def\thmsurbtxt{
In the semi-bounded model with potentially a single crash fault, there is no protocol that satisfies the SuRB properties for all input sequences.}
\begin{theorem}[SuRB impossibility with one crash]\label{thm21}
\thmsurbtxt
\end{theorem}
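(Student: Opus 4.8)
The plan is to assume, for contradiction, that some bounded-memory protocol $P$ satisfies the SuRB properties in the semi-bounded model while tolerating one crash, and then to build an adversarial execution (using a single crash together with asynchronous delaying of messages) in which two honest receivers deliver infinite sequences that share no common suffix, contradicting property~S1. A crucial freedom is that we need only defeat $P$ on \emph{some} input sequence; I would therefore choose the sender's input stream adversarially as a function of $P$, and in particular make it \emph{aperiodic}, so as to close the realignment loophole noted in the introduction (on a periodic stream a party that falls behind could otherwise resynchronize and recover the same suffix).

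First I would isolate the only leverage a single crash provides: since $P$ must keep delivering to the surviving receivers even when one receiver has crashed, $P$ cannot let the sender block until \emph{every} receiver confirms a value; it must allow progress after hearing from all but one party. I exploit this by silencing one receiver at a time. Concretely, I delay all messages to and from $p_2$; to $p_{sender}$ and $p_1$ this partial run is indistinguishable from one in which $p_2$ has crashed, so by crash-tolerance the sender keeps consuming fresh input values and $p_1$ keeps delivering them. I then swap roles, releasing $p_2$'s now-stale messages (which, by the no-FIFO assumption, may arrive arbitrarily reordered) while delaying $p_1$. Because the sender has meanwhile advanced through a block of new input symbols, $p_2$ resumes delivery at a later position than $p_1$, and the two receivers drift out of phase.

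The bounded-memory core is what makes the drift \emph{permanent and repeatable}, and this is where the network-matrix machinery enters. Starting from an all-honest reference run, I would use that there are only finitely many configurations, together with the fact that $(\mathbb{N}^{M\times\bar\ell},\le)$ is a well-quasi-order, to extract an infinite family of checkpoints that all share a single configuration $c^{*}$ and have non-decreasing network matrices $N_{v_0}\le N_{v_1}\le\cdots$, between which the sender consumes a nonempty input block and each receiver makes progress. By the concatenation property established above, the segment between two such checkpoints is feasible from any network matrix dominating $N_{v_0}$, so it can be relocated and repeated; this turns the one-shot ``silence $p_2$, then silence $p_1$'' maneuver into an infinite alternation that keeps both receivers delivering forever while the sender keeps consuming the aperiodic stream.

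I expect the main obstacle to be twofold. The first difficulty is to keep the constructed run both \emph{fair} (every message eventually delivered on the lossless links) and productive of \emph{infinite} delivered sequences at both receivers, so that S1 is not satisfied vacuously; this is exactly where the observation that an infinite tail forces $N^{(1)}_n = N^{(2)}$ is binding, and it will require arranging the pumped segment to return the network matrix to its starting value (or otherwise discharging the finitely many accumulated messages during the role-swaps). The second, and technically deeper, difficulty is to prove that the drift is \emph{irreparable}: I must show that, for the aperiodic input chosen against $P$, no finite shift ever aligns the two delivered tails, so that they share no common suffix. Since the protocol has only finitely many states, only finitely many candidate alignments are available, and the input must be diagonalized against all of them. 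Establishing this permanence, rather than the single-step divergence, is the crux of the argument.
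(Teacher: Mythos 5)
You have assembled the right scaffolding --- an adversarially chosen aperiodic input, the observation that crash-tolerance forces progress while one party is silent, finitely many configurations plus the well-quasi-order on network matrices yielding checkpoints with a common configuration and non-decreasing matrices, and relocatable pumpable segments --- and all of this matches the paper's machinery (\lemmaref{Lemma16}, \lemmaref{Lemma18}, and the incremental-sequence construction). But your divergence engine is different from the paper's, and it is exactly there that the proposal has a genuine hole. The claim that after the role-swap ``$p_2$ resumes delivery at a later position than $p_1$, and the two receivers drift out of phase'' is unsupported in the semi-bounded model: the links are lossless and unbounded, so \emph{every} message sent toward the silenced $p_2$ is still in transit and must eventually arrive; nothing you have said rules out that $p_2$ delivers the entire backlog and realigns (for a constant input stream it trivially does, which is why the paper's positive result for a single repeated message exists at all). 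Showing that $p_2$ cannot reconstruct the right sequence from the reordered backlog is the whole content of the theorem, and you explicitly defer it --- ``the input must be diagonalized against all of them \dots\ is the crux'' --- without supplying that argument. So the proposal identifies the decisive step but does not prove it.

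The paper closes this gap with a one-shot indistinguishability lemma rather than a permanent-drift argument. \lemmaref{Lemma23} takes three checkpoints $S_{\inf-1},S_{\inf-2},S_{\inf-3}$ on the increasing chain, chosen (via \lemmaref{Lemma20}) so that the input blocks consumed in the intermediate segments $R_2$ and $R_3$ admit no common divider, and builds the two \emph{finite} runs $R_1+R_2+R_2+R_3$ and $R_1+R_2+R_3+R_2$; the commutation lemma (\lemmaref{Lemma30}) shows both end in the \emph{identical} system state (same configuration \emph{and} same network matrix), while the common-divider lemma (\lemmaref{Lemma19}) shows the two consumed input sequences genuinely differ. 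The party that slept through both runs wakes at the common state, cannot tell which run occurred, and hence in one of the two scenarios delivers a sequence different from the other honest parties' --- no alignment analysis against all finite shifts is needed. The proof of \theoremref{thm21} then iterates this lemma, draining all in-flight messages between applications to keep the run fair; ``divergence recurs forever'' replaces your ``drift is irreparable.'' This also dissolves your first worry: since the pumped segments are finite, feasibility needs only domination $N^{(2)}\leq N^{(1)}_n$, and equality of network matrices (needed only for infinite tails) is never required, so there is no need to engineer a segment that returns the matrix to its starting value. If you substitute the permuted-segments argument plus the common-divider lemma for your drift-plus-diagonalization step, your outline becomes essentially the paper's proof; as written, it is incomplete at its central point.
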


The following discussions are about the main elements of the proof of  \theoremref{thm21} for the semi-bounded model. The similar impossibility result for the fully-bounded model appears in the Appendix.

\subsection{An Infinite Chain Existence}


\begin{definition}[Partial ordered Sequence]
Let $Set_S=\{a_1, ... , a_n\}$ be a partial ordered set of elements (could be infinite) and let '$\leq$' be the binary relation among elements of $Set_S$. A sequence 
$S = a_1, a_2 ... $ of the elements is called  a {\em Partial Ordered Sequence}.
$S$ is called  a {\em Fully Ordered Sequence} (or a Chain) if for every $i< j$, $a_i\leq a_j$.
\end{definition}


\begin{definition}[Chain in a Sequence]
Let $S = a_1, a_2 ... $ be a Partial Ordered Sequence (could be infinite). Let $\sigma = t_1,t_2 ... $ be an increasing sequence of natural  numbers (could be infinite).  Let $S'' = a_{t_1}, a_{t_2} ...  $ be a  sequence over some of the elements of $S$. We say that $S''$ is a {\em Chain in $S$} if $S''$ is a Fully Ordered Sequence.
\end{definition}

\begin{definition}[ Antichain in a Sequence]
Let $S = a_1, a_2 ... $ be a Partial Ordered Sequence (could be infinite). Let $\sigma = t_1,t_2 ... $ be an increasing sequence of natural  numbers (could be infinite).  Let $S'' = a_{t_1}, a_{t_2} ...  $ be a  sequence over some of the elements of $S$. We say that $S''$ is a {\em Antichain in $S$} if each two elements in $S''$ are not comparable.
\end{definition}

The new presentation of the network state enables us to talk about properties regarding sequences of network states. It is easy to see that a sequence of Network Matrices composes a partial order sequence. Dilworth's lemma~\cite{IntroductionToCombinatoric} states that each infinite Partial Order set must contain an infinite chain or an infinite Antichain. Unfortunately, the lemma does not refer to sequences. In order to use Dilworth's lemma we must first expand it to sequences:

\def\DilworthsLemmaSeq{
A Partial Ordered Sequence with infinite number of elements must have an infinite Chain or an infinite Antichain.
}

\begin{lemma}[Dilworth's lemma for sequences (infinite version)]\label{Lemma14}
\DilworthsLemmaSeq
\end{lemma}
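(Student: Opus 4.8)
The plan is to obtain the sequence version from the infinite Ramsey theorem for pairs, which subsumes the set-form of Dilworth's lemma cited just above. I color every pair of indices $\{i,j\}$ with $i<j$ by one of three colors according to how $a_i$ and $a_j$ compare: color $\mathsf{I}$ (``increasing'') when $a_i \le a_j$; color $\mathsf{D}$ (``decreasing'') when $a_i \not\le a_j$ but $a_j \le a_i$; and color $\mathsf{N}$ (``incomparable'') when $a_i$ and $a_j$ are $\le$-incomparable. Since $\le$ is a partial order, these three cases are exhaustive and mutually exclusive for every pair, with equal elements falling under $\mathsf{I}$ (as $a_i \le a_j$ then holds). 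The infinite Ramsey theorem then yields an infinite increasing index sequence $\sigma = t_1 < t_2 < \cdots$ all of whose pairs carry a single common color.

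Next I would read the two desired outcomes directly off the monochromatic set. If the common color is $\mathsf{I}$, then $a_{t_k} \le a_{t_\ell}$ for all $k<\ell$, so $S'' = a_{t_1}, a_{t_2}, \dots$ is a Fully Ordered Sequence, i.e.\ an infinite Chain in $S$. If the common color is $\mathsf{N}$, then every two elements of $S''$ are incomparable, so $S''$ is an infinite Antichain in $S$. In either case the conclusion follows immediately.

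The remaining case, common color $\mathsf{D}$, is the one delicate point and the main obstacle to the statement as literally written: it produces an infinite strictly \emph{descending} subsequence $a_{t_1} > a_{t_2} > \cdots$, which is neither an (increasing) Chain nor an Antichain under the definitions given. To dispose of it I would invoke well-foundedness of the order in the setting where the lemma is actually used, namely sequences of network matrices. Their entries are non-negative integers, so the componentwise order $\le$ on network matrices admits no infinite strictly descending chain; hence color $\mathsf{D}$ cannot be homogeneous on an infinite index set, and only $\mathsf{I}$ or $\mathsf{N}$ can occur, giving the claimed Chain-or-Antichain dichotomy. (For a fully general partial order one either restricts to well-founded orders or reads the first alternative as ``a monotone chain in either orientation''; the extraction step is the same Ramsey argument, and the cited set-form of Dilworth's lemma followed by an Erd\H{o}s--Szekeres-style monotone-subsequence extraction would give the same result.)
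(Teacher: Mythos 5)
Your proof is correct, and it takes a genuinely different route from the paper's. The paper reduces to the cited set form of Dilworth's lemma (its Lemma~\ref{Lemma13}) by pairing each element with its index and ordering the tuples componentwise, $(i,a') \le (j,b')$ iff $i \le j$ and $a' \le b'$: an infinite chain in this product poset projects to a Chain in $S$, and the paper asserts the same projection works for an antichain. Your Ramsey-for-pairs argument with three colors is self-contained and, more importantly, more honest about the one real subtlety, your color $\mathsf{D}$. In the paper's argument that subtlety is hidden inside the antichain step, and it is in fact a gap there: for $(i,a')$ and $(j,b')$ with $i<j$, the comparison $i \le j$ always holds, so product-order incomparability only yields $a' \not\le b'$ and does not exclude $b' \le a'$. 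A strictly descending sequence $b_1 > b_2 > \cdots$ in a poset makes the entire product poset an antichain, while the projected sequence is neither a Chain (the definition requires $a_{t_k} \le a_{t_\ell}$ for $k<\ell$) nor an Antichain; so the lemma as literally stated is false for arbitrary partial orders. Both proofs therefore need what you supply explicitly and the paper uses only implicitly: in the intended application the order is the componentwise order on network matrices with non-negative integer entries, which is well-founded, so an infinite strictly descending subsequence (your homogeneous color $\mathsf{D}$) cannot occur. Your route buys a correct general mechanism plus an explicit statement of the needed hypothesis; the paper's route buys brevity by citing Dilworth but leaves the descending case unaddressed. Either repair works for the paper: restrict the lemma to well-founded orders, as you do, or weaken ``Chain'' to allow monotone subsequences of either orientation.
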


A short intuition regarding the proof: Let us assume we have a partial order sequence. It is always possible to match each element with its corresponding index (its location in the sequence). This way we define a partial order set. Thanks to Dilworth's lemma we know that this set must have a Chain or an Antichain. Let us assume it contains a chain. The elements composing the chain must also compose a chain in the original sequence. Similar arguments hold for an Antichain.

Looking at an infinite sequence of Network Matrices we can prove the following lemma:

\def\lemchainexists{
Let $S$ be an infinite sequence of network matrices. 
$S$ must contain an infinite Chain.
}
\begin{lemma}[An infinite Chain Existence]\label{Lemma16}
\lemchainexists
\end{lemma}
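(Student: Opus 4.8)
The plan is to combine \lemmaref{Lemma14} with the special structure of network matrices. First I would observe that a network matrix is nothing but a vector of non-negative integers: fixing an arbitrary enumeration of the $M\bar\ell$ entries, each matrix $N$ becomes a point of $\mathbb{N}^{d}$ with $d=M\bar\ell$ finite, and the coordinatewise order $N_1\le N_2$ of the excerpt is exactly the product order on $\mathbb{N}^{d}$. Since $S$ is an infinite Partial Ordered Sequence, \lemmaref{Lemma14} tells us that $S$ contains an infinite Chain or an infinite Antichain. Hence it suffices to rule out the second alternative: I would show that $(\mathbb{N}^{d},\le)$ admits no infinite Antichain, so the Chain alternative must hold.

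The heart of the argument is therefore Dickson's lemma: every infinite sequence $v_1,v_2,\dots$ in $\mathbb{N}^{d}$ contains an infinite non-decreasing subsequence (equivalently, the product order is a well-quasi-order, which in particular forbids infinite antichains). I would prove this by induction on the dimension $d$. For the base case $d=1$ I would extract a monotone subsequence from the integer sequence by the standard peak argument; because a strictly decreasing sequence of non-negative integers is finite, any infinite non-increasing subsequence is eventually constant, so in all cases an infinite non-decreasing subsequence exists. For the inductive step I would first apply the induction hypothesis to the projection onto the first $d-1$ coordinates to obtain an infinite subsequence that is non-decreasing there, and then apply the one-dimensional case to the last coordinate restricted to that subsequence; the resulting sub-subsequence is non-decreasing in every coordinate.

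Finally I would assemble the pieces. Any candidate infinite Antichain in $S$ is itself an infinite sequence of network matrices, hence an infinite sequence in $\mathbb{N}^{d}$; by the previous paragraph it contains two comparable elements $v_{t_i}\le v_{t_j}$ with $t_i<t_j$, contradicting the requirement that all pairs of an Antichain be incomparable. Thus $S$ has no infinite Antichain, and by \lemmaref{Lemma14} it must contain an infinite Chain, as claimed. The one genuinely non-routine step is this elimination of the Antichain case: \lemmaref{Lemma14} by itself leaves both possibilities open, and it is only the finiteness of the dimension $d$ together with the well-ordering of $\mathbb{N}$ (through Dickson's lemma) that closes off the Antichain. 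I note in passing that Dickson's lemma already produces the non-decreasing subsequence directly, so one could bypass \lemmaref{Lemma14} altogether and read off the infinite Chain immediately; I keep the \lemmaref{Lemma14} framing to match the surrounding development.
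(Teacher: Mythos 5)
Your proof is correct, and while it shares the paper's overall skeleton --- flatten the $M\times\bar\ell$ matrix into a vector in $\mathbb{N}^d$, induct on the dimension, and treat the last coordinate separately --- the key lemma is genuinely different. The paper proves only the negative statement (\lemmaref{Lemma15}: no infinite Antichain exists): its induction hypothesis is ``no infinite Antichain on $n$ cells,'' so inside the inductive step it must re-invoke \lemmaref{Lemma14} to convert that into an infinite Chain on the first $n$ cells, then locate the element of that Chain minimizing the $(n+1)$-st cell to exhibit two comparable elements; and the final conclusion of \lemmaref{Lemma16} needs \lemmaref{Lemma14} once more to pass from ``no infinite Antichain'' to ``infinite Chain exists.'' You instead prove the stronger positive statement (Dickson's lemma: every infinite sequence in $\mathbb{N}^d$ contains an infinite coordinatewise non-decreasing subsequence), which makes the induction self-propelling --- the inductive hypothesis hands you the monotone subsequence on the first $d-1$ coordinates directly, with no appeal to Dilworth's dichotomy inside the induction --- and, as you correctly note at the end, it produces the Chain outright, so \lemmaref{Lemma14} becomes optional framing rather than a load-bearing step. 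What the paper's route buys is economy of machinery, since \lemmaref{Lemma14} is developed anyway for the Chain/Antichain vocabulary used throughout; what yours buys is a shorter logical dependency chain, a stronger reusable fact (the product order on $\mathbb{N}^d$ is a well-quasi-order), and avoidance of the paper's slightly delicate ``element with minimal value in the $n+1$ cell'' step, replaced by the standard one-dimensional monotone-subsequence extraction.
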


The main idea is to show that there is no infinite Antichain which directly leads to the inevitable conclusion (using Lemma \ref{Lemma14}), that there must be an infinite chain. We used reduction to show that there is no infinite Antichain. 
We look at the matrix as a long vector. The case for $1 \times 1$ vectors is trivial, since the vectors cells are natural numbers.
Assume correctness on $n \times 1$ vectors and prove for $(n+1) \times 1$ vectors. Assume to the contrary that there is an infinite Antichain. Let us look at the first $n$ cells of this infinite Antichain, by the induction hypothesis those cells do not contain an infinite Antichain so they must contain an infinite chain. Looking at the element of this chain with the minimal value in the $n + 1$ cell, call it $min_{element}$. The next element in the chain that comes right after $min_{element}$, must be greater or equal than $min_{element}$. So we found two elements located in this infinite Antichain that are comparable. This contradicts the assumption that this is actually an infinite Antichain.

\subsection{SuRB Impossibility}\label{surb-imp}
We define the problematic sequence which we use to prove the impossibility result:

\begin{definition}[Incremental Sequence of Messages]
$M_{x,{\ell-repeated}}$ is a sequence $M = x, ...,  x$  that contain only message $x$ for $\ell$ times.
Let $M = {x_1}, ..., {x_n}, ...$ be an infinite sequence of messages, where $x_k \in \{0,1\}$. We say that $M$ is an {\em Incremental Sequence} of messages if for every sub-sequence $M_{x,{\ell-repeated}}$ that ends at some position $i$ in $M$ there exists $\ell' > \ell$ and a sub-sequence: $M_{(1-x),{\ell'-repeated}}$ that starts at some position $j$, where $j>  i$.
\end{definition}

This sequence contains an increasing number of consecutive ones and zeros. This sequence is problematic because for each sub-sequence of such sequence we can always find a later subsequence such that those two sub-sequences do not share a common divider. Using Lemma \ref{Lemma19} (appeared in the Appendix) we conclude that if there is no common divider, by switching these two subsequences we obtain a new and different sequence.


Now assume that there is a protocol that achieves the SuRB properties. The following lemma must hold for such a protocol (the full proof is in  the appendix, \sectionref{sec:unbounded}):

\def\lemlegalsys{
Let $PR$ be a protocol that satisfies the SuRB properties even when one party is prone to crash. Let $S = (C_S,N_S)$ be a reachable System State. There is a partial run $R$ of $PR$ such that $R$ starts from $C_S$, feasible at $N_S$ and in $R$ there are two honest parties who do not deliver the same sequence of messages.
}
\begin{lemma}\label{Lemma23}
\lemlegalsys
\end{lemma}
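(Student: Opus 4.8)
The plan is to prove this \emph{from an arbitrary reachable state} $S=(C_S,N_S)$ by constructing an explicit partial run that starts from $C_S$, is feasible at $N_S$, and drives two honest receivers to deliver two incomparable blocks of values, so that their delivered sequences cannot be equal. The construction combines three ingredients: a pumping argument, powered by the bounded memory together with \lemmaref{Lemma16}, that produces a recurring system configuration whose network matrix keeps growing; the block-switching property of an Incremental Sequence, so that the two blocks we hand to the two receivers are genuinely different (\lemmaref{Lemma19}); and the single permitted crash, which removes the global acknowledgement a correct protocol would otherwise use to keep the receivers in step.

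First I would extend the run from $C_S$ by having $p_{sender}$ read an Incremental Sequence as its remaining input and letting every sent message be eventually delivered, obtaining an infinite run $\hat R$ feasible at $N_S$. Because $PR$ satisfies the SuRB properties and the sender is honest and broadcasts infinitely, property S2 forces every honest party to deliver a sequence sharing a non-trivial suffix with the input, hence to deliver infinitely often; fix two honest receivers $p,q$. Recording the system states $(C_i,N_i)$ along $\hat R$, the configurations $C_i$ lie in the finite set determined by the bounded memory, while the network matrices form an infinite Partial Ordered Sequence. By \lemmaref{Lemma16} this sequence contains an infinite Chain $N_{t_1}\le N_{t_2}\le\cdots$, and by pigeonhole one configuration $C^{*}$ recurs infinitely often along the Chain. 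Two such occurrences delimit a loop segment $\pi$ that returns the system to $C^{*}$ while the network matrix grows from some $N_a$ to some $N_b\ge N_a$; the concatenation property recalled above then lets me replay or re-schedule the messages consumed in $\pi$, since a run feasible from a matrix stays feasible from any dominating matrix.

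Next I would use the defining property of the Incremental Sequence to select a block $B$ whose values are delivered during $\pi$ and a later block $B'$ sharing no common divider with $B$; by \lemmaref{Lemma19} substituting $B'$ for $B$ yields a genuinely different sequence, so the two blocks cannot be realigned into one another. I would then assemble the partial run $R$ so that, after the common prefix that first reaches $C^{*}$, receiver $p$ is scheduled the messages that make it deliver $B$, whereas receiver $q$ is scheduled, out of order (feasible since there is no FIFO and the dominating matrix $N_b$ already holds the required messages), the messages that make it deliver $B'$. Crashing the one permitted party at this point prevents the protocol from waiting for a confirmation that would let it reconcile $p$ and $q$ before they commit these deliveries; since $PR$ must make progress despite one crash, the two deliveries stand. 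Thus in $R$ party $p$ delivers a sub-sequence containing $B$ and party $q$ one containing $B'$, and as $B,B'$ share no common divider these delivered sequences are different, which is the claim.

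The hard part will be the feasibility bookkeeping of the switched schedule: I must certify that every message fed to $q$ in order to deliver $B'$ is genuinely present in the dominating matrix $N_b$ reached along the Chain, so that the re-ordered segment is feasible in the sense of the concatenation property, and I must check that the whole construction spends at most the single permitted crash while still preventing $p$ and $q$ from re-synchronising inside $R$. Making precise that $q$, upon re-entering the recurring configuration $C^{*}$ and being fed the $B'$-messages, deterministically delivers exactly $B'$ is where the bounded-memory recurrence, the concatenation property, and the liveness supplied by S2 have to be combined with care.
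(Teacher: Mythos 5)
Your pumping setup (bounded memory plus \lemmaref{Lemma16} yielding a recurring configuration $C^{*}$ with growing network matrices, and the no-common-divider blocks via \lemmaref{Lemma19}) matches the paper, but the divergence mechanism at the heart of your construction has a genuine gap. You try to make two \emph{awake} honest receivers $p$ and $q$ deliver incomparable blocks inside a \emph{single} run, by scheduling: after the prefix reaching $C^{*}$, you ``feed'' $p$ the messages that make it deliver $B$ and feed $q$, out of order, the messages that make it deliver $B'$. The adversary, however, controls only the schedule, never the content or the protocol's reaction to it, and $PR$ is an arbitrary SuRB protocol that by hypothesis already tolerates non-FIFO links. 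Nothing guarantees that \emph{any} reordering of genuinely sent messages causes $q$ to deliver $B'$ in place of $B$: within one run the honest sender reads one input stream, every honest party's deliveries are determined by the protocol's (possibly reorder-resistant, cross-validated) handling of real messages, and the protocol can keep $p$ and $q$ synchronized through the other $n-2$ receivers. Crashing one party at the end does not sever those channels, so ``the two deliveries stand'' is asserted rather than proved. Moreover, if the sender is honest and both $p,q$ are honest and active, the protocol's own correctness works against you inside a single run; a lower-bound argument must defeat every protocol, not assume the protocol delivers whatever content arrives in whatever order.

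The paper's proof avoids this by building \emph{two different runs} rather than one: $R_5 = R_1+R_2+R_2+R_3$ and $R_6 = R_1+R_2+R_3+R_2$, both feasible at $N_S$, in which the \emph{external input} to $p_{sender}$ differs (the blocks read during $R_2$ and $R_3$ are chosen via \lemmaref{Lemma20} to share no common divider, so by \lemmaref{Lemma19} the two input sequences differ). Replaying segments from a dominating matrix is justified by \lemmaref{Lemma18}, and --- this is exactly the piece your ``feasibility bookkeeping'' worry points at --- \lemmaref{Lemma30} shows the two interleavings commute, so $R_5$ and $R_6$ end in the \emph{same} system state. The single crash is used \emph{before}, not after: crash tolerance forces the existence of the run in which a dormant party $p'$ takes no steps while all others deliver. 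Then $p'$ wakes up in the common final state, cannot distinguish $R_5$ from $R_6$, and behaves identically in both continuations; since $M_5 \neq M_6$, in at least one of the two runs $p'$'s delivered sequence differs from the other honest parties'. To repair your proof you would need to replace the single-run switched schedule and terminal crash with this two-run indistinguishability argument, including the commutation step, which your proposal currently lacks.
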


To obtain some intuition regarding the proof of the lemma  assume that when the protocol reaches $S$ the sequence given to $p_{sender}$ via the input stream is an incremental sequence, $inc-seq$. 
Assume that some party $p$ crashes (or dormant) and takes no steps. 
There must be a run $R$ that starts at $C_{S}$ feasible at $N_S$ and in $R$ each honest party delivers a suffix of $inc-seq$. 
Now wait until we reach the System State, $S_{start}$, in which each party starts delivering the common suffix (except $p$ who crashed). Look at the sequence of System States starting right after $S_{start}$. 
Looking only at the Network Matrices of this sequence we gain an infinite sequence of Network Matrices. This infinite sequence must contain an infinite chain ( \lemmaref{Lemma16}). 
Looking at the configuration part of the infinite chain there must be a configuration that repeats itself infinitely many times (the memory is bounded so there is a finite number of possible configurations). 
We obtain an infinite sequence of System States located at $R$ in which the Network Matrices are increasing and the Configurations repeat. Call this sequence $repeated-seq$. Call the first, and the second elements of $repeated-seq$: $repeated-seq_1$, and $repeated-seq_2$ and call the partial run that starts at $repeated-seq_1$ and ends at $repeated-seq_2$: $repeated-seq_{1 - 2}$. 
Now we can take $repeated-seq_{1 - 2}$ and repeat it again from $repeated-seq_2$ (because the Network Matrix is larger and the Configurations are the same). 
Call this new run $R_1$. 
We can also repeat $repeated-seq_{1 - 2}$ from any system state in $repeated-seq$. 
Let us choose to repeat it when we reach a System State at which the sequence given to $p_{sender}$ in the partial run that starts at $repeated-seq_{2}$ does not share a common divider with the input sequence given to $p_{sender}$ during $repeated-seq_{1 - 2}$. 
Call this new run $R_2$. Notice that we build two runs that start at $C_S$ feasible at $N_S$ and in both of them the sequences that are given to $p_{sender}$ are different (\lemmaref{Lemma19}). 
In addition both runs are composed of the same partial runs so they both reach the same System State (we formally prove this in the Appendix). The crashed party $p$ who starts taking steps at the common configuration cannot distinguish between the two runs and cannot know which sequence to deliver.

\begin{proof}[Proof of \theoremref{thm21}]
%
Assume that there is a protocol that satisfies the SuRB properties even when one party is prone to crash. Let $PR$ be such a protocol. Let $S_{init} = (C_{init},N_{init})$ be the initial System State of $PR$. 
By  \lemmaref{Lemma23}, there is a run $R$ starting from $C_{init}$ and feasible at $N_{init}$ such that when applying $R$ on $N_{init}$ there are two honest parties, $p_1$ and $p_2$, that do not deliver the same sequence of messages. 
If the changes in the delivered sequences between $p_1$ and $p_2$ is infinite we are done (since there is no common suffix), otherwise  look at  System State, $S_1 = (C_1,N_1)$, in which $p_1$ and $p_2$ start delivering the same sequence of messages. Look at system state $S_2 = (C_2,N_2)$ in which all messages that were present in the link in $N_1$ already reached their destination (at some point all messages must reach their destination). We know that $S_2$ is a legal System State because we have a partial run that starts at the initial configuration, feasible at the initial Network State and reaches $S_2$. 
Now applying again  \lemmaref{Lemma23} on $S_2$ we receive another partial run, $R_1$, that starts at the $C_2$, feasible at $N_2$ and contains two honest parties that do not deliver the same sequence of messages. 

We can repeat this argument. 
If  we reach a point in which there are two parties that do not share a common suffix then we are done, otherwise we actually construct an infinite run in which there is always a pair of parties that do not deliver the same sequence of messages and all messages eventually had reached their destinations i.e. this is a legal run in which there is no common suffix.
\end{proof}

\section{Self-stabilized SuRB of a Single Repeated Message}
Self-stabilization is an important objective that robust systems should satisfy. 
It gives the system the ability to overcome transient faults and continue functioning even when the most unexpected transient fault takes place. 
A self-stabilized system does not assume any initial system configuration, and a proof of correctness is proving  that whatever the initial configuration is, the system converges to perform the desired behavior. It is assumed, though, that the external source generating the values given to the sender via the input stream is not subjected to transient faults.

The impossibility results of the previous section clearly hold also for the self-stabilizing case. 
Therefore, the question stands on 
what are the extra assumptions under which the SuRB properties may hold. 
It turns out that if the input stream is composed of a single value being repeated forever SuRB can be satisfied, even when facing transient faults.


In the fully-bounded model we show that when assuming that the input stream is a stream of a single message repeated forever, one can guarantee a suffix reliable broadcast with any number of crash faults. But there is no solution in the presence of even a single Byzantine fault, \ie this shows a clear separation between Byzantine to crash faults.

We decided to investigate the case of a single repeated message because when considering self-stabilization the sender needs to repeatedly broadcast the same message. If the sender does not follow this approach the system may start in a configuration in which each honest party believes that it already received the broadcasted message, in this case the right message may never be received by the honest parties.

\subsection{Self-stabilized SuRB with Crash Faults}
\vspace{-1em}
Before describing the protocol we  build a  link layer abstraction that satisfies some basic properties. The idea is inspired by~\cite{StabilizingDataLink} and 
can be expanded to a more complex model as described in~\cite{StabilizingEndToEndCommunication}. 
This link layer will allow us to bound the convergence time of the protocol.

\ \\
\noindent{\bf Link-Layer Definitions:}
\beginsmall{enumerate}
\vspace{-1em}
\item \textsl{Lost message:}
a message $m$ that was sent by party $P_{1}$ to party $P_{2}$ is called a `Lost Message' if $m$ was sent by $P_{1}$ but will never be received by $P_{2}$.

\item \textsl{Ghost message;}
a message $m$ that was sent by party $P_{1}$ to party $P_{2}$ is called a `Ghost Message' if $m$ was received by $P_{2}$ but was never sent by $P_{1}$.

\item \textsl{Duplicated message:}
a message $m$ that was sent by party $P_{1}$ to party $P_{2}$ is called a `Duplicated Message' if $m$ was sent by $P_{1}$ and arrived to $P_{2}$ more than once.

\item \textsl{Reorder Messages:}
messages $m_{1}$ and $m_{2}$ that were sent by party $P_{1}$ to party $P_{2}$ are called `Reordered Messages' if $m_{1}$ was sent by $P_{1}$ before $m_{2}$ but $P_{2}$ received $m_{2}$ before $m_{1}$.

\end{enumerate}


\begin{definition}[($\alpha,\beta,\gamma,\lambda$)-Stabilizing Data-Link]
We call a communication abstraction between two parties $P_{1}$ and $P_{2}$, where $P_{1}$ is the sender and $P_{2}$ is the receiver, an ($\alpha,\beta,\gamma,\lambda$)-Stabilizing Data-Link  if the following condition holds:
\beginsmall{enumerate}

\item If $P_{1}$ is honest- the sending operation eventually terminates.
\item If $P_{1}$ and $P_{2}$ are honest then:
	\beginsmall{enumerate}
		\item Only the first $\alpha$ messages after the last transient fault may be lost.
		\item Only the first $\beta$ messages after the last transient fault may be duplicated.
		\item Only the first $\gamma$ messages after the last transient fault may be ghost.
		\item Only the first $\lambda$ messages after the last transient fault may be reordered.
	\end{enumerate}
\item If $P_{1}$ and $P_{2}$ are honest and $P_{1}$ sends $m$ infinitely many times in a row then $P_{2}$ will deliver $m$ infinitely many times.

\end{enumerate}
\end{definition}


We  show a protocol that provides an ($\infty$, 0, 3, 0)-Stabilizing Data-Link, The protocol ensures  that at most {\bf three} ghost messages will be received following  a transient fault. The main idea in the protocol is  that a message that has been received $\bar c + 1$ times cannot be a ghost message (since there can only be $\bar c$ ghost messages in a link following a transient fault). 
So if the receiver receives a message $\bar c + 1$ times it delivers this message. 
To enable the receiver to receive the sent message $\bar c + 1$ times the sender sends each message $\bar c + 1$ times. 
Notice that we do not claim to achieve any reliability at this layer, therefore, no ACK messages are used. 
An infinite number of messages may still get lost. 
One last observation that we need to mention is that in \cite{StabilizingDataLink} it has been proven that there is no protocol that can guarantee zero duplicated messages. 
Here we claim to achieve such a protocol (with zero duplicated messages). 
The impossibility result obtains in \cite{StabilizingDataLink} assuming the duplication occur because of a ghost message. 
Here we simply count these messages as ghost messages and not as duplicate messages. 
It is simply a matter of interpretation that does not affect the protocol.  


\ \\
\noindent{\bf The Link Layer Protocol:}

%
%
%

\begin{algorithm}[!ht]
\footnotesize
\SetNlSty{textbf}{}{:}
 \setcounter{AlgoLine}{0}
\begin{tabular}{ r l }
  &   {\bf Function} Send($msg$)\\
  & \tb {\bf For} $i=0$ to $\bar c+1$ {\bf do}\hspace{0.5in}\hfill\textit{/* $\bar c$ is the link capacity */}\\
  & \tb\tb {\bf Send} $msg$;\\
  &\tb {\bf End}.
\end{tabular}
 \caption{The Link-Layer sender pseudocode}\label{alg:sender-link}
\end{algorithm}

\begin{algorithm}[!ht]
\footnotesize
\SetNlSty{textbf}{}{:}
 \setcounter{AlgoLine}{0}
\begin{tabular}{ r l }
  &   {\bf Data} $LastMessage$, $Counter$;\\
  &   {\bf On Receiving} $msg$;\\
  & \tb {\bf if} $msg = LastMessage$ {\bf then}\\
  & \tb\due $Counter++$;\\
  & \tb\due {\bf if} $Counter \geq \bar c+1$ {\bf then}\hspace{0.5in}\hfill\textit{/* $\bar c$ the link capacity  */}\\
   & \due\due deliver $msg$;\\
   & \due\due $LastMessage := null$;\\
    &\due\due $Counter := 0$;\\
  & \due {\bf else} $LastMessage:= msg$; \\
   &\tb\due $Counter := 1$;\\
  & \tb {\bf End}.
\end{tabular}
 \caption{The Link-Layer receiver pseudocode}\label{alg:receiver-link}
\end{algorithm}
%
%
%
%
%
%

\algref{alg:sender-link} is the link-layer sender algorithm. 
The sender simply sends the given message $\bar c + 1$ times and returns.
\algref{alg:receiver-link} is the link-layer receiver Algorithm. The algorithm maintains two variables: $LastMessage$ is the last message the receiver received from the sender, and $Counter$ that counts the number of times the receiver received $LastMessage$.
When receiving a message from the sender the receiver checks
whether the new message is the same as  $LastMessage$, if so it increases the $Counter$ by one and if the $Counter$ reaches $\bar c + 1$ it delivers the message, otherwise it sets $LastMessage$ to be the new message and resets the $Counter$ to 1.

\def\thmlinklayer{
In the fully bounded model, where the bound on any link capacity is $\bar c$, the Link Layer protocol (\algref{alg:sender-link}, \algref{alg:receiver-link}) satisfies the ($\infty$, 0, 3, 0)-Stabilizing Data-Link specifications.}
\begin{theorem}\label{thm:linklayer}
\thmlinklayer
\end{theorem}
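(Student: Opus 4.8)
The plan is to verify, one clause at a time, that the pair of protocols in \algref{alg:sender-link} and \algref{alg:receiver-link} meets every requirement of an $(\infty,0,3,0)$-Stabilizing Data-Link, dispatching the three structurally easy clauses first and then concentrating the real work on the ghost bound. Clause~1 (termination of $\mathit{Send}$) is immediate: the body of \algref{alg:sender-link} is a loop of exactly $\bar c+2$ iterations, so $\mathit{Send}$ always returns. The loss clause~2(a) is vacuous because $\alpha=\infty$ imposes no bound; indeed the protocol uses no acknowledgements and freely tolerates losses (a copy is dropped whenever the link is saturated). For the liveness clause~3, suppose $P_1$ honestly sends $m$ forever in a row. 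Each $\mathit{Send}(m)$ injects $\bar c+1$ physical copies, so $m$ is sent infinitely often and, by the fairness assumption that an infinitely-often-sent message arrives infinitely often, $m$ is received infinitely often. Since at most $\bar c$ fake messages survive the last transient fault, after finitely many receptions every received message equals $m$; from that point the receiver's $\mathit{Counter}$ climbs to $\bar c+1$ and resets repeatedly, delivering $m$ infinitely often.

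The technical core is the handling of the ghost, duplication and reorder clauses, and it rests on one fundamental observation. After the last transient fault the link holds at most $\bar c$ fake messages, each of which is received at most once, so the \emph{total} number of ghost receptions over the entire run is at most $\bar c$. On the other hand \algref{alg:receiver-link} delivers a value $v$ only when its $\mathit{Counter}$ reaches $\bar c+1$, i.e.\ only after $\bar c+1$ receptions of $v$ that are consecutive in the sense that no intervening reception of a different value has reset the counter. The first conclusion I would draw is that once the receiver has performed a single delivery---after which $\mathit{LastMessage}=\mathit{null}$ and $\mathit{Counter}=0$, a \emph{clean} state---no value can be delivered unless it was genuinely sent: a value never sent by $P_1$ would require $\bar c+1$ ghost receptions in a row, exceeding the global budget of $\bar c$ ghosts. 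This is exactly the intuition stated before the theorem (``a message received $\bar c+1$ times cannot be a ghost''), and it reduces every remaining anomaly to the transient window created by the corrupted initial values of $\mathit{LastMessage}$ and $\mathit{Counter}$ together with the residual ghosts.

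From here I would run a case analysis that tracks the receiver's state $(\mathit{LastMessage},\mathit{Counter})$ through the at most $\bar c$ ghost receptions and the first few genuine sends, charging each anomalous delivery either to the corrupted counter or to a consumed ghost. The count should establish $\beta=0$: a single $\mathit{Send}$ deposits only $\bar c+1$ copies, which even augmented by all $\bar c$ ghosts of the same value fall short of the $2(\bar c+1)$ consecutive receptions needed to cross the threshold twice, so no genuine message is ever delivered more than once. It should establish $\lambda=0$: a later value's copies enter the link only after the earlier $\mathit{Send}$ returns and, the link holding at most $\bar c<\bar c+1$ copies at a time, the receiver cannot accumulate a full threshold of a later value while copies of an earlier, not-yet-delivered value are still present to reset its counter; hence any skipped earlier value is simply lost rather than reordered. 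Finally it should bound the number of ghost deliveries by a small constant, shown here to be at most three: a corrupted $\mathit{Counter}$ close to $\bar c$ paired with a corrupted $\mathit{LastMessage}$ can force a premature first delivery, and the residual ghosts feeding the first genuine runs account for the remaining boundary effects, after which the clean-state lemma guarantees only correct deliveries.

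I expect pinning the constant to \emph{exactly} three to be the main obstacle. The ``$\le\bar c$ ghosts versus $\bar c+1$ threshold'' lemma rules out bogus deliveries only \emph{after} the counter has been reset once, so the delicate region is the initial stretch immediately following the fault, where an arbitrary $(\mathit{LastMessage},\mathit{Counter})$ interacts with up to $\bar c$ ghost messages and the onset of genuine sending, and where the matching of deliveries to sends is most fragile. I would make this rigorous by enumerating the possibilities for the corrupted pair $(\mathit{LastMessage},\mathit{Counter})$ and for whether each early delivery is driven by ghosts or by the counter, maintaining an invariant that bounds the number of not-yet-matched deliveries and checking that it never exceeds three before the receiver reaches a clean, fully synchronized state from which the fundamental lemma forces correct, in-order, non-duplicated delivery.
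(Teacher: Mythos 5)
There is a genuine gap at the crux of the theorem, the ghost bound. Your accounting admits only two sources of post-fault anomaly: the receiver's corrupted pair $(\mathit{LastMessage},\mathit{Counter})$ and the at most $\bar c$ fake messages residing in the link. The paper's proof of the ``at most three ghosts'' claim identifies a third source, and it is precisely what makes the constant three rather than two: a transient fault can leave the \emph{sender} in the middle of the loop of \algref{alg:sender-link} with a corrupted $msg$ variable, so that $P_1$ physically transmits up to $\bar c+1$ copies of a message that was never handed to the link layer --- a ghost under the paper's definition. These copies are not fake messages present in the link at fault time, so your ``global budget of at most $\bar c$ ghost receptions over the entire run'' is false, and with it your fundamental clean-state lemma: even after the receiver has delivered once and reset to $(\mathit{null},0)$, the corrupted-loop value can arrive $\bar c+1$ consecutive times and be delivered. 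Consequently the invariant-based enumeration you propose, which tracks only receiver-side corruption and link fakes, would establish a bound of \emph{two} ghost deliveries --- a statement that is both unsound (the sender-mid-loop scenario violates it) and inconsistent with the theorem, since the paper exhibits adversarial runs realizing all three ghosts (receiver poised to deliver; receiver counter/$\mathit{LastMessage}$ corruption plus $\bar c$ link fakes; sender mid-loop).

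Two secondary points. First, the decisive part of your argument is left as a plan (``I would run a case analysis \dots should establish''), so even setting aside the missing source, the bound is sketched rather than proven; the paper closes it with a short pigeonhole step you would also need: a fourth fake delivery cannot be fed by $\mathit{LastMessage}$ (overwritten by the third delivery) and therefore requires $\bar c+1$ receptions from a link that can supply at most $\bar c$ copies of a never-sent value. Second, your arithmetic for $\beta=0$ ($\bar c+1$ copies plus $\bar c$ ghosts falling short of $2(\bar c+1)$) overlooks that a corrupted initial counter can make the first threshold crossing cost as little as one reception, after which a once-sent value can cross the threshold a second time; the paper's Claim~3 handles exactly this by the explicit accounting convention that such an extra delivery is charged as one of the at most three ghosts, not as a duplication. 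Your charging scheme gestures at this convention, but your stated numeric argument does not implement it. Your treatment of termination, losses, liveness, and reordering matches the paper's Claims~1, 2 and~4 and is fine.
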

\begin{proof}
\begin{claim}[Link Layer Claim 1 - Sending termination (condition 1)]
If $P_{1}$ is honest then the sending function eventually terminates.
\end{claim}

The send function is sending the given message $\bar c + 1$ times. So eventually it terminates.

\begin{claim}[Link Layer Claim 2 - Non-triviality (condition 3)]
If $P_{1}$ and $P_{2}$ are honest and $P_{1}$ sends $m$  infinitely many times in a row then $P_{2}$ will deliver $m$ infinitely many times.
\end{claim}

From the link properties we know that if $P_{1}$ sends $m$ infinitely many times in a row then $P_{2}$ will receive $m$ infinitely many times. Which means that $P_{2}$ will see $m$ $\bar c + 1$ times in a row infinitely many times and will deliver $m$ infinitely many times.

\begin{claim}[Link Layer Claim 3 - No duplication (condition 2.b)]
Let $P_{1}$ and $P_{2}$ be two honest parties. Assume $P_{1}$ sends $m$ to $P_{2}$ $k$ times. $P_{2}$ will deliver $m$ no more than $k$ times (unless $m$ was delivered by $P_{2}$ as a ghost message but this can happen no more than 4 times as we will see later).
\end{claim}

If $P_{1}$ sends $m$ to $P_{2}$ $k$ times  then the message is  actually being sent $m$ $(\bar c + 1)k$ times. $P_{2}$ delivers $m$ when it sees it $\bar c + 1$ times in a row. The communication layer does not duplicate messages so $P_{2}$ will receive $m$ no more than $(\bar c + 1)k$ times and will deliver $m$ no more than $k$ times.

\begin{claim}[Link Layer Claim 4 - No reordering (condition 2.d)]
Let $P_{1}$ and $P_{2}$ be two honest parties and let $m_{1}$ and $m_{2}$ be messages sent by $P_{1}$ to $P_{2}$. If $P_{1}$ sends $m_{1}$ before $m_{2}$ then $P_{2}$ will not deliver $m_{2}$ before $m_{1}$.
\end{claim}

Assume that $P_{1}$ sends $m_{2}$ after it sends $m_{1}$. Once $P_{1}$ starts sending $m_{2}$, it stops sending $m_{1}$ so when $P_{1}$ starts sending $m_{2}$ there is at most $\bar c$ $m_{1}$ messages that may arrive to $P_{2}$ after $m_{2}$. So when $m_{2}$ first arrives to $P_{2}$ there is not enough messages left in the communication link for $m_{1}$ to be delivered by $P_{2}$. So if $m_{1}$ was delivered by $P_{2}$ it must be before $m_{2}$ is delivered.

\begin{claim}[Link Layer Claim 5 - At most three ghost messages (condition 2.c)]
Let $P_{1}$ and $P_{2}$ be two honest parties and let $m_i (i = 1, ... ,  4)$ be four messages delivered after the last transient fault by $P_{2}$ (in this order), then $m_4$ must be a real message that was sent by $P_{1}$.
\end{claim}

\noindent\textbf{First Ghost Message:}
Because of the transient fault $P_{2}$ can start believing that it has to deliver $m_{1}$, so  $m_{1}$ may be fake.\\
\\
\textbf{Second Ghost Message:}
Because of transient fault $P_{2}$ can start the run when its $LastMessage$ are set to $m_{2}$ and its counter is set to 1 (or higher). Also there could be $\bar c$ $m_{2}$ messages in the communication link. So $m_{2}$ may also be fake.\\
\\
\textbf{Third Ghost Message:}
Because of the transient fault $P_{1}$ can start the run inside the loop of the send function believing it is sending $m_3$ and so $m_3$ may be received by $P_{2}$. So the third message may also be fake.\\
\\ 
\textbf{The Fourth message cannot be Fake:}
Assume $m_4$ is also fake. That means that $P_{1}$ did not send $m_4$. $m_4$ must be in the system somehow before it was received by $P_{2}$. It is not in the $LastMessage$ variable of $P_{2}$ (because it was caught by $m_3$). So $m_4$ must arrive from the communication link but in this case there are at most $\bar c$ $m_4$ messages that may arrive so $m_4$ cannot be delivered by $P_{2}$. And so $m_4$ cannot be a fake message.
\end{proof}


\noindent{\bf The self-stabilized SuRB Protocol:}

%
%
%
%
%
%
%
%
%

\begin{algorithm}[!ht]
\footnotesize
\SetNlSty{textbf}{}{:}
 \setcounter{AlgoLine}{0}
\begin{tabular}{ r l }
  & {\bf While} $\true$ {\bf do}\\
    & \tb  {\bf Input} $Message$\\
  & \tb {\bf Send} $Message$ to all parties using the Link-Layer;\\
  & {\bf End}.
\end{tabular}
 \caption{$p_{sender}$ pseudo-code, $SuRB\!\_broadcast$}\label{alg:ssurb-sender}
\end{algorithm}

\begin{algorithm}[!ht]
\footnotesize
\SetNlSty{textbf}{}{:}
 \setcounter{AlgoLine}{0}
\begin{tabular}{ r l }
  &   {\bf On Receiving} from the Link-Layer $msg$ from $p_{sender}$:\\
  & \tb $SuRB\!\_deliver(msg)$.\\
\end{tabular}
 \caption{$p_i$ pseudo-code}\label{alg:ssurb-receiver}
\end{algorithm}

\def\thmsssurb{
When assuming an input stream of a single message repeated forever the self-stabilized SuRB protocol (\algref{alg:ssurb-sender}, \algref{alg:ssurb-receiver}) satisfies the SuRB properties for any number of crash fault.}

\begin{theorem}\label{thm:SuRB-crash}
\thmsssurb
\end{theorem}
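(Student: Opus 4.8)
The plan is to reduce properties (S1) and (S2) to the guarantees already established for the link layer in \theoremref{thm:linklayer}, exploiting the fact that the input stream consists of a single value, say $x$, repeated forever. The only message $p_{sender}$ ever hands to the link layer is $x$, so \emph{every genuine} message that any link-layer instance from $p_{sender}$ to an honest receiver can deliver is $x$. The $(\infty,0,3,0)$-guarantee then says that, after the last transient fault, the sequence the link layer delivers to each honest receiver consists of at most three arbitrary (ghost) values followed exclusively by copies of $x$. Since the receiver code (\algref{alg:ssurb-receiver}) simply SuRB-delivers whatever the link layer delivers, the same description applies verbatim to the SuRB-delivered sequence of each honest receiver.

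First I would handle the case in which $p_{sender}$ is honest (never crashes). Because \algref{alg:ssurb-sender} loops forever, reading the same $x$ from the uncorrupted input stream and calling the link-layer $Send$, and because each $Send$ invocation terminates (condition~1 of the Stabilizing Data-Link), $p_{sender}$ feeds $x$ to the link layer infinitely many times in a row on every outgoing link. The non-triviality guarantee (condition~3) then yields that each honest receiver's link layer delivers $x$ infinitely many times, so each honest receiver SuRB-delivers an infinite sequence. Combining this with the at-most-three-ghosts guarantee (condition~2c), the delivered sequence of every honest receiver has the form (at most three arbitrary values) followed by the infinite tail $x,x,x,\dots$. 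Hence all honest receivers share the non-trivial common suffix $x^{\infty}$, and this suffix is also a common suffix with $p_{sender}$'s input sequence $x,x,x,\dots$; this establishes both (S1) and (S2) in this case.

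Next I would dispatch the case in which $p_{sender}$ crashes. Then $p_{sender}$ hands only finitely many copies of $x$ to the link layer, so---using that the link layer introduces no duplications (condition~2b) and at most three ghosts after the last transient fault---every honest receiver delivers only finitely many messages. Consequently no honest party delivers an infinite sequence, so (S1) holds vacuously, and since $p_{sender}$ is not honest, (S2) holds vacuously as well. Note that crashes of receivers are irrelevant: a crashed receiver is not an honest party, and the argument above constrains only the non-crashed honest receivers, each of which is served by an independent link-layer instance.

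The step I expect to require the most care is the interface between self-stabilization and the link-layer guarantees: all of the above must be argued \emph{from the last transient fault onward}, so I must make explicit that the at-most-three ghosts are confined to a finite prefix of each receiver's delivered sequence and therefore cannot disturb the infinite suffix, and that any corruption of $p_{sender}$'s own state at the moment of the last fault (for instance, being caught in the middle of a $Send$ loop) is already absorbed by the third permitted ghost accounted for in \theoremref{thm:linklayer}. The remaining subtlety is to justify that $p_{sender}$ genuinely sends $x$ ``infinitely many times in a row'' in the precise sense required by condition~3, which follows because no other value is ever read from the input stream and each $Send$ call returns.
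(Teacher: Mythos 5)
Your proof is correct and takes essentially the same route as the paper's: both arguments reduce S1 and S2 directly to the $(\infty,0,3,0)$-Stabilizing Data-Link guarantees of \theoremref{thm:linklayer}, splitting on whether $p_{sender}$ is honest (ghosts confined to a length-three prefix, so the tail is $x,x,x,\dots$ for every honest receiver and matches the input stream) or crashed (only finitely many deliveries, so both conditions hold vacuously). Your handling of the crashed-sender case via the no-duplication guarantee, and your remark that mid-$Send$ corruption of $p_{sender}$'s state is absorbed by the third ghost, are in fact slightly more explicit than the paper's own wording.
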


\begin{proof}

We first prove the S2 property of SuRB:
If the sender is honest and the input stream contains a single message $m$ repeated forever, then each honest party must have a suffix of delivered messages that contains only $m$.
By the link layer properties, we know that at most three messages could be ghost messages. So from the fourth message on, all the messages delivered by each honest party must be real. $p_{sender}$ is honest so he sends $m$ infinitely many times. 
The link layer properties imply that eventually each honest party will receive $m$ and only $m$, which means that eventually each honest party will deliver $m$ and only $m$. In addition, after each honest party delivered three messages, at the latest, we can be sure that all following messages that will be delivered by every honest party will be $m$ and only $m$.

We now prove the S1 property.
If $p_{sender}$ is honest,  from S2 we  conclude that eventually each honest party will deliver only the message that has been given to the sender via the input stream and so all honest parties will share a common suffix. 

If $p_{sender}$ is has crashed (dishonest) then eventually each honest party will stop receiving messages from $p_{sender}$ and will stop delivering messages. This means that there is no party that delivers an infinite number of messages and so condition S1 clearly holds, since it considers only the case in which there is an honest party delivering an infinite number of messages.
\end{proof}



\subsection{ Self-Stabilize SuRB Impossibility When Facing a Byzantine Party}
In this section we will prove that even when the input stream given to the sender is a stream of a single message repeated forever, it is impossible to achieve the SuRB propertiesת assuming transient faults and a single Byzantine party. 
The SuRB impossibility proof (\theoremref{thm21}) does not directly apply here, since we do not consider all possible inputs, but rather a stream of a single message.
We  introduce few changes to the definitions we used in the SuRB impossibility proof. 
Self-stabilization does not require an initial System State and therefore any System State may be the initial System State. 
Also we need to consider the possibility of transient faults and the maliciousness of one party so now a run may contain the possibility to move from one configuration to another without applying any step. 
The only way such things may happen is because of transient faults, which cause  parties to change their internal states, or because of maliciousness of one party, which also causes it to change its internal state. 
All the missing proofs appear in the appendix, \sectionref{sec:ssurb-lower-apndx}.

%
%
%



Self-Stabilization obligates us to continuously keep delivering messages. If at some point the protocol stops delivering messages then it could be that the run would start at this point  and  no messages will ever be delivered and there is no way that any party will share a common suffix with the input stream. This  leads to the following lemma: 

\def\leminf{
Assume that the input stream given to $p_{sender}$ is a steam of a single message repeated forever. Let $PR$ be a protocol that satisfies the conditions of the  SuRB problem and is resistant to one malicious party. Let $S = (C,N)$ be an arbitrary System State. For each party $p' \in P$,  $p'\not=p_{sender}$, there is always a run $R$ that starts from $C$ and is feasible at $N$ such that $p'$ takes no steps in $R$ and each honest party delivers an infinite number of messages.
}
\begin{lemma}\label{Lemma7}
\leminf
\end{lemma}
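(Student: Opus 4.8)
The goal is Lemma~\ref{Lemma7}: from \emph{any} system state $S=(C,N)$, and for any fixed non-sender party $p'$, we must exhibit a feasible run from $C$ in which $p'$ takes no steps yet every honest party still delivers infinitely many messages. The plan is to treat $p'$ exactly as a crashed party --- since SuRB must tolerate an arbitrary number of crash faults (by \theoremref{thm:SuRB-crash}, the crash-resilient version exists, but more to the point the impossibility we are heading toward must respect the crash-tolerance hypothesis) --- and argue that the protocol cannot ``stall'' all honest parties into silence. The key reason delivery cannot stop is self-stabilization combined with property S2: if from $S$ there were a feasible $p'$-free run in which some honest party stops delivering after finitely many messages, we will derive a contradiction with S2 applied to a suitably re-started run.

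First I would make $p'$ silent: define the run $R$ to be any maximal feasible run from $C$ at $N$ in which $p'$ is never scheduled and $p_{sender}$ keeps reading its (single repeated) input value $m$ and broadcasting via the link layer. Because $p'$ is simply never chosen by the scheduler, and because the remaining parties together with the sender form a perfectly legal crash-scenario (here $p'$ plays the role of the one crashed party, which is permitted since the protocol tolerates any number of crashes), such a run is feasible. The only thing left to verify is that in this run every honest party delivers infinitely often, rather than delivering finitely many messages and then going quiet.

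The heart of the argument is ruling out a ``silent tail.'' Suppose, toward a contradiction, that every feasible $p'$-free run from $C$ has some honest party $q$ that delivers only finitely many messages. Take the partial run up to the point where $q$'s last delivery occurs and reaches some state $S' = (C', N')$; this $S'$ is itself a reachable (indeed an arbitrary, by self-stabilization) system state. Now invoke self-stabilization: $S'$ may be taken as an initial system state, and from it the sender (honest) feeds the single repeated message $m$ forever. By property S2, every honest party --- in particular $q$ --- must eventually share a non-trivial common suffix with the sender's input, which is the infinite all-$m$ stream; hence $q$ must deliver infinitely many messages from $S'$ onward. Since the continuation from $S'$ with $p'$ still silent is a feasible $p'$-free run, this contradicts the assumption that $q$ delivers only finitely many times. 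Therefore no honest party can have a finite delivery tail, and we can choose $R$ so that every honest party delivers infinitely often while $p'$ takes no steps.

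The main obstacle I anticipate is the interaction between self-stabilization and the scheduler's freedom: I must be careful that designating $S'$ as an initial configuration is legitimate (self-stabilization guarantees correctness from \emph{every} configuration, so this is sound) and that the continuation keeping $p'$ silent really is feasible and does not secretly rely on $p'$'s messages. Concretely, the subtle point is that some honest party's future deliveries might depend on messages that only $p'$ would ever send; but since the protocol must tolerate $p'$ crashing outright, S2 must still force an infinite all-$m$ suffix at every honest party without any help from $p'$, which is exactly what closes the gap. I would spell out that the fairness assumption (a message sent infinitely often arrives infinitely often) is what lets the sender's repeated broadcasts of $m$ actually reach the honest receivers infinitely often, thereby guaranteeing infinitely many deliveries and completing the contradiction.
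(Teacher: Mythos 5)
Your overall strategy is the same as the paper's: treat $p'$ as the single tolerated faulty party behaving as crashed, use self-stabilization to regard an arbitrary state as an initial one, and let property S2 force delivery to continue. But your contradiction step has a genuine quantifier gap. The negation of the lemma is: in \emph{every} feasible $p'$-free run from $C$, \emph{some} honest party --- which may depend on the run --- delivers only finitely many messages. You fix one run, extract its stalled party $q$, truncate at $q$'s last delivery to reach $S'$, and splice on a $p'$-free continuation in which $q$ delivers infinitely often. The spliced run is a \emph{different} run, and under the assumption it may simply have a different stalled party $q_2$; repairing $q_2$'s tail may in turn produce yet another run, and nothing you have written prevents this from cycling forever. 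So the sentence ``this contradicts the assumption that $q$ delivers only finitely many times'' does not close, and your final claim that ``we can choose $R$ so that every honest party delivers infinitely often'' is exactly the simultaneity statement that still needs proof.

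The paper closes this gap constructively: \lemmaref{Lemma6} shows that from \emph{any} system state there is a finite $p'$-free run segment in which \emph{every} honest party delivers a new message (S2 is invoked for all honest parties at once, not for a single $q$), and the proof of \lemmaref{Lemma7} then concatenates infinitely many such segments, the state reached by segment $i$ serving as the start of segment $i+1$, so that in the limit run each honest party delivers infinitely often while $p'$ never moves. Your argument can be repaired the same way (iterate the tail-extension round-robin over all honest parties, ensuring each delivers a fresh message in each phase), or more directly by noting that S2 is a property of every fair admissible run of the protocol: any fair $p'$-free run from $(C,N)$, with $p'$ counted as the one Byzantine party acting crashed and the sender honest, already forces all honest parties \emph{simultaneously} to share an infinite suffix with the sender's infinite repeated input. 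Two smaller points: your appeal to \theoremref{thm:SuRB-crash} and to tolerance of ``any number of crashes'' is off target --- what licenses silencing $p'$ is the lemma's hypothesis that $PR$ withstands one Byzantine party, which subsumes one silent party; and note that the lemma is stated for an \emph{arbitrary} (not merely reachable) state $S$, which your restart at $S'$ uses correctly but your framing of $S'$ as ``reachable'' slightly obscures.
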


A transient fault may cause parties to change their internal state illegally, which will cause the System State to change without any step taken. 
In addition, a Byzantine party is also able to change its internal state illegally, which will also cause the System State to change without any step taken. These observations imply the following lemma:

\def\lemconcat{
Assume that the input stream given to $p_{sender}$ is a steam of a single message repeated forever. 
Let $PR$ be a protocol that satisfies the conditions of a SuRB problem and resistant to one Byzantine party. Let $R_1 = [c^{(1)}_1,s^{(1)}_1, ...,  c^{(1)}_n]$ and $R_2 = [c^{(2)}_1,s^{(2)}_1, ...,  c^{(2)}_n]$ be two runs such that $c^{(1)}_n$ and $c^{(2)}_1$ are different only in the internal state of $p_{sender}$. 
Let $N_1$ be a network state such that $R_1$ is feasible at $N_1$ and let $N_2$ be the resulting network state of applying $R_1$ on $N_1$. Assume that $R_2$ is feasible at $N_2$. 
If $p_{sender}$ is Byzantine or a transient fault causes the last configuration of $R_1$ to be the same as the first configuration of $R_2$, then we can concatenate $R_1$ and $R_2$ and receive a new run $R = [c^{(1)}_1,s^{(1)}_1, ...,  c^{(1)}_n,c^{(2)}_1,s^{(2)}_1, ...,  c^{(2)}_n]$ feasible at $N_1$.
}
\begin{lemma}[Concatenation of runs using a transient fault or a Byzantine sender]\label{Lemma8}
\lemconcat
\end{lemma}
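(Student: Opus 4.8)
The plan is to assemble the concatenated run $R$ in three segments and verify feasibility of each in turn. First I would apply $R_1$ to the network state $N_1$: this is feasible by hypothesis, and by definition it drives the system from $(c^{(1)}_1,N_1)$ to the system state $(c^{(1)}_n,N_2)$, where $N_2 = N_1 R_1$ is the resulting network state. Next comes the heart of the argument, a single transition from the configuration $c^{(1)}_n$ to the configuration $c^{(2)}_1$ that applies \emph{no} step at all — this is precisely the stepless junction that appears in the statement of $R$, where $c^{(1)}_n$ and $c^{(2)}_1$ are listed consecutively with no $s_i$ between them. Finally I would apply $R_2$ starting from $(c^{(2)}_1,N_2)$, which is feasible exactly because $R_2$ is feasible at $N_2$ by assumption and begins at configuration $c^{(2)}_1$.

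The key step is justifying the stepless jump from $c^{(1)}_n$ to $c^{(2)}_1$, and this is where the self-stabilizing/Byzantine setting departs from the earlier concatenation argument of the semi-bounded model (which demanded exact configuration equality $c^{(1)}_n = c^{(2)}_1$ together with a network-matrix inequality). In the extended model a run may move from one configuration to another without applying any step, and such a move can arise in two ways: a transient fault, which may reset the internal state of any party, or the maliciousness of the single faulty party, which may reset its own internal state. By hypothesis, $c^{(1)}_n$ and $c^{(2)}_1$ agree on the internal states of every party other than $p_{sender}$. I would therefore match the two mechanisms to the lemma's two hypotheses: if $p_{sender}$ is Byzantine, then since the only differing coordinate is $p_{sender}$'s internal state, the adversary controlling $p_{sender}$ may simply overwrite that state to match $c^{(2)}_1$, a legal Byzantine action; and if instead we invoke a transient fault, then because such a fault may alter internal states arbitrarily, it may in particular change only $p_{sender}$'s state to its value in $c^{(2)}_1$ while leaving all other states fixed. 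In either case the configuration after the jump is exactly $c^{(2)}_1$.

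Crucially, this transition is an internal-state change only; it does not correspond to sending or receiving any message, so it leaves the network state untouched at $N_2$. Hence after the jump the system is in state $(c^{(2)}_1,N_2)$, and since $R_2$ is feasible at $N_2$, every step $s^{(2)}_1,\ldots,s^{(2)}_{n-1}$ can be applied with no further assumption (here the network condition is even simpler than in the semi-bounded case, since $N_2$ is exactly the network state reached by $R_1$, so no monotonicity argument is needed). Stitching the three segments together yields $R = [c^{(1)}_1,s^{(1)}_1,\ldots,c^{(1)}_n,c^{(2)}_1,s^{(2)}_1,\ldots,c^{(2)}_n]$, feasible at $N_1$ in the extended sense that permits the single stepless junction.

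The main obstacle is not any calculation but a modeling subtlety, so I would take care to state it explicitly: a Byzantine party may rewrite only its own internal state, whereas a transient fault may rewrite any internal state. The lemma's disjunctive hypothesis is tailored exactly to this asymmetry — given that $c^{(1)}_n$ and $c^{(2)}_1$ differ only in $p_{sender}$, the restriction is what makes the Byzantine case legitimate, while the transient-fault case would in fact go through even without it. I would therefore emphasize that the junction used in $R$ is genuinely one of the allowed stepless moves of the model, realized by whichever of the two mechanisms the hypothesis supplies, and that it disturbs neither the network state nor the feasibility of $R_2$.
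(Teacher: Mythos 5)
Your proof is correct and follows essentially the same route as the paper's: apply $R_1$ to reach $(c^{(1)}_n,N_2)$, use the Byzantine sender or a transient fault to overwrite only $p_{sender}$'s internal state (yielding exactly $c^{(2)}_1$ since the configurations differ in no other coordinate), observe that this jump leaves the network state at $N_2$ untouched, and then run $R_2$, which is feasible at $N_2$ by assumption. Your explicit remark on the asymmetry between the two mechanisms (a Byzantine party may rewrite only its own state, a transient fault any state) is a nice clarification but does not change the argument.
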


\lemmaref{Lemma7} and  \lemmaref{Lemma8} allow us to concatenate an infinite number of runs together and create a run in which each honest party delivers an incremental sequence.


\def\lemssinfinc{
Assume that the input stream given to $p_{sender}$ is a steam of a single message repeated forever. 
Let $PR$ be a protocol that satisfies the conditions of the  {\em SuRB} problem and resistant to one Byzantine party. 
For each party  $p' \in P$,  $p'\not=p_{sender}$, and for each System State $S = (C,N)$ there is a run $R$ that starts from $C$ and is feasible at $N$ such that $p'$ takes no steps in $R$ and each honest party delivers an infinite incremental sequence. 
Also, the only party that might change its internal state illegally in $R$ (because of transient fault or maliciousness) is $p_{sender}$.
}

\begin{lemma}\label{Lemma9}
\lemssinfinc
\end{lemma}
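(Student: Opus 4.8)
The plan is to build the infinite run $R$ as a concatenation of finite phases $R_1, R_2, R_3, \dots$, where phase $k$ drives every honest party to deliver a long pure block of a single bit $x_k$, with the bits alternating ($x_k = k \bmod 2$) and the block lengths growing without bound. The point is that $PR$ must satisfy SuRB for \emph{every} input stream consisting of one value repeated forever, so the messages that the corrupted $p_{sender}$ emits during phase $k$ can be made identical to those an honest sender with input $x_k^\infty$ would emit. The honest receivers see only these messages and therefore behave exactly as they would against that honest sender. Between two consecutive phases I change \emph{only} the internal state of $p_{sender}$, resetting it from ``broadcasting $x_k$'' to ``broadcasting $x_{k+1}$''; this is precisely the illegal state change (by transient fault or maliciousness) permitted by \lemmaref{Lemma8}, and since all other parties keep their states, the two bordering configurations differ only in $p_{sender}$, matching that lemma's hypothesis.

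To produce phase $k$ I would apply \lemmaref{Lemma7} to the current system state with the single-value input $x_k^\infty$, obtaining a run in which $p'$ takes no step and every honest party delivers infinitely many messages. At the start of phase $k$ the non-sender parties are in whatever ``$x_{k-1}$-tuned'' state they inherited from the previous phase, but self-stabilization guarantees convergence from this arbitrary configuration, and property S2 applied to the honest sender broadcasting $x_k^\infty$ forces the delivered suffix of every honest party to consist solely of $x_k$. I then truncate this run to a finite prefix long enough that each of the finitely many honest parties has delivered a pure $x_k$-block strictly longer than every block it delivered in phases $1,\dots,k-1$; this finite prefix is $R_k$. Concatenating $R_1, R_2, \dots$ through \lemmaref{Lemma8} yields the infinite run $R$, in which $p'$ never takes a step and the only internal state ever changed illegally is that of $p_{sender}$, exactly as required.

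It then remains to verify that each honest party's delivered sequence is incremental. Each party delivers, in order, a bounded amount of ``junk'' emitted during the re-stabilization at the start of each phase, followed by a pure block $B_k$ whose length grows without bound, with the $B_k$ alternating between $0$ and $1$. Consequently pure blocks of both values occur infinitely often with unboundedly increasing lengths, so for any maximal block of a value $v$ and length $\ell$ ending at a position $i$ there is always a later pure block of the opposite value $1-v$ longer than $\ell$. This is exactly the defining property of an Incremental Sequence, and the bounded junk cannot destroy it, since the dominating pure blocks appear after every position.

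The main obstacle is the interaction between the convergence-across-switches argument and the feasibility of the infinite concatenation. I must argue that after resetting $p_{sender}$ the honest receivers genuinely re-stabilize onto $x_k$ starting from an arbitrary inherited configuration, which is where self-stabilization together with S2 is essential rather than incidental, and I must ensure the limiting infinite run is feasible, in particular that every message sent in a phase is eventually received and that the network-state domination hypothesis of \lemmaref{Lemma8} holds at each boundary. Checking that finite prefixes can always be chosen long enough to meet the block-growth requirement simultaneously for all honest parties while preserving feasibility is the delicate part; everything else reduces to the repeated, disciplined application of \lemmaref{Lemma7} and \lemmaref{Lemma8}.
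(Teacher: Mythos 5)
Your proposal is correct and takes essentially the same route as the paper's proof: alternating phases in which \lemmaref{Lemma7} together with property S2 force every honest party to deliver ever-longer pure blocks of the two values, each phase truncated once its block exceeds all earlier blocks and spliced to the next by changing only the internal state of $p_{sender}$ via \lemmaref{Lemma8}, with $p'$ asleep throughout and the resulting delivered sequences incremental by the growing alternating blocks. Your one superfluous worry is the ``network-state domination'' at phase boundaries --- \lemmaref{Lemma8} requires only that the next run be feasible at the exact network state reached, which holds automatically here because each phase is obtained by applying \lemmaref{Lemma7} at precisely that state.
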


In the proof of \lemmaref{Lemma9} we build a run  $R_{final}$ 
that causes every honest party (other that $p'$) to deliver an
incremental sequence
by concatenating an infinite number of runs together. 
The basic idea is that we can let $p_{sender}$ sends some message, say $m_0$, repeatedly, until (using condition S2) every honest party delivers that message.  
Then, using a transient fault, cause $p_{sender}$ to start sending a different message, say $m_1$, until every honest party delivers a longer suffix of $m_1$ than it delivered $m_0$ before.
Now we can switch back to $m_0$ to deliver even a longer suffix.
Because in all those runs $p'$ takes no steps it means that $p'$ takes no steps in $R_{final}$. 
So $p'$ clearly assumed to be failed (or Byzantine), since an honest party must start, at some point, taking steps.

The only way we could concatenate those runs is by using  transient faults that occasionally change the internal state of the sender. This means that $R_{final}$ contains an infinite number of transient faults.
But it should not concern us since in the proof below we  use $R_{final}$ only to construct an adversarial behavior of a Byzantine sender, and we wake up the sleeping party $p'$.



\begin{theorem}[SuRB impossibility assuming a transient fault and a single Byzantine party]\label{thm22}
Assuming that the input stream is a stream of a single messages repeated forever. In an asynchronous system and fully bounded model, there is no self-stabilizing protocol that solves the SuRB problem in the presence of a single Byzantine party.
\end{theorem}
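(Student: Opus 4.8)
The plan is to argue by contradiction, assuming a self-stabilizing protocol $PR$ that solves SuRB in the presence of transient faults and one Byzantine party, and then to drive $PR$ with a Byzantine \emph{sender} that faithfully replays the adversarial schedule produced by \lemmaref{Lemma9}. First I would fix an arbitrary non-sender party $p'$ and invoke \lemmaref{Lemma9} to obtain a run $R_{final}$, feasible from an arbitrary system state, in which $p'$ takes no steps, every other honest party delivers an infinite incremental sequence, and the \emph{only} party whose internal state ever changes illegally is $p_{sender}$. The crucial observation — and the reason we may move from the transient-fault world to the single-Byzantine world even though the input stream is restricted to one repeated message — is that these illegal sender transitions, which \lemmaref{Lemma9} realizes through transient faults, can instead be performed deliberately by a Byzantine $p_{sender}$. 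Hence there is a run $\hat R$ in which $p_{sender}$ is Byzantine, no party other than the sender is ever faulty, $p'$ is still dormant, and all remaining honest parties deliver the same infinite incremental sequence.

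Next I would extract a repeatable block from $\hat R$ exactly as in the proof of \theoremref{thm21}. Restricting attention to the network matrices along $\hat R$ yields an infinite partial-ordered sequence which, by \lemmaref{Lemma16}, contains an infinite chain; since memory is bounded, pigeonhole on configurations gives two system states with equal configuration and increasing network matrix, so the partial run $\rho$ between them may be re-inserted at any later point of the chain. This re-insertion is precisely the concatenation licensed by \lemmaref{Lemma8} (with the Byzantine sender playing the role of the transient fault that bridges the two configurations), while \lemmaref{Lemma7} keeps $p'$ dormant throughout. Because the delivered stream is incremental, for any block emitted during $\rho$ there is a later block sharing no common divider with it; re-inserting $\rho$ at two such points produces two runs $R_A,R_B$ that are composed of the same partial runs — hence meet at a common system state $S^\ast$ — yet in which the active honest parties deliver genuinely different incremental sequences $\mathit{seq}_A\neq\mathit{seq}_B$, this last step being exactly where \lemmaref{Lemma19} is used.

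Finally I would wake $p'$ at $S^\ast$ and iterate, following the outer loop of the proof of \theoremref{thm21}: at each stage, because $p'$ was dormant and $R_A,R_B$ reach the identical state $S^\ast$, the party $p'$ cannot distinguish the two runs over the relevant segment, so it is forced to commit to the same deliveries while the other honest parties have delivered incompatible blocks, manufacturing a fresh disagreement. After letting all pending messages settle into a clean state, the same construction is reapplied, producing disagreements infinitely often; the resulting infinite run therefore admits no non-trivial common suffix among the honest parties, contradicting property S1. I expect the main obstacle to be making this indistinguishability \emph{persistent} rather than merely one-shot: a single switch yields only a finite disagreement (the runs may re-converge after $S^\ast$), so one must chain infinitely many switches while re-establishing, at each stage, both feasibility of the glued partial runs and the inability of the bounded-memory honest parties to reconcile the sender's inconsistent stream. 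This is exactly the point at which Byzantine behavior diverges from crash behavior: a crashed sender must eventually halt and thereby satisfy S1 vacuously (as exploited in \theoremref{thm:SuRB-crash}), whereas a Byzantine sender can inject inconsistent incremental streams forever, and arguing that no bounded protocol can ever detect and repair this is the crux of the separation.
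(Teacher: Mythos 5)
Your overall strategy coincides with the paper's: take the run of \lemmaref{Lemma9} in which only $p_{sender}$'s internal state ever changes illegally, reinterpret those transient-fault transitions as deliberate moves of a Byzantine sender, use boundedness to find a repeatable block, splice it so as to obtain two runs that end in a common system state yet deliver different incremental sequences, wake the dormant party $p'$ there, and iterate as in \theoremref{thm21} to rule out any common suffix. The upfront derivation of $\hat R$ (Byzantine sender replaying the transient-fault schedule) is the same observation the paper makes at the end of its proof, just placed earlier, and your closing remarks on why a crashed sender escapes the argument (S1 holds vacuously, as in \theoremref{thm:SuRB-crash}) correctly locate the separation.

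There is, however, one concrete mismatch in how you extract the repeatable block. You invoke \lemmaref{Lemma16} to get an infinite chain of network matrices and then re-insert the block $\rho$ at a later chain element with equal configuration but merely \emph{larger} network matrix. That is the semi-bounded toolkit: the monotonicity and commutation facts this re-insertion rests on (\lemmaref{Lemma18} and \lemmaref{Lemma30}) are proved for unbounded \emph{lossless} links, whereas \theoremref{thm22} is stated in the fully bounded model, where each link has capacity $\bar c$ and an overflowing send loses an arbitrary message. Replaying $\rho$ at a state with strictly more pending messages can therefore overflow a link and alter the resulting network state, so your two spliced runs $R_A,R_B$ are not guaranteed to meet at a common $S^\ast$, which is exactly what the indistinguishability step needs. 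The repair is immediate and is what the paper does: in the fully bounded model both memory and links are bounded, hence there are only finitely many system states, and pigeonhole among the states at which $p_{sender}$ is about to read input yields a \emph{full} system state (configuration and network state together) recurring infinitely often; the loop between two identical occurrences can then be duplicated (your \lemmaref{Lemma19} common-divider route, mirroring the semi-bounded proof of \lemmaref{Lemma23}) or deleted (the paper's fully-bounded \lemmaref{Lemma23}, which argues via \lemmaref{Lemma22}). With that substitution --- exact state repetition in place of chain-plus-monotone re-insertion --- your argument goes through and is essentially the paper's proof.
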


\begin{proof}[Proof outline:]
Let $p'$ ($p'\not=p_{sender}$) be a party in $P$ and let $S = (C,N)$ be an arbitrary system state. 
By \lemmaref{Lemma9} we know that there is a run $R$ that starts in $C$ and is feasible at $N$ such that $p'$ takes no steps in $R$ and each honest party delivers an infinite incremental sequence of messages. 


Let's look at the sequence $S_{seq} = S_1, ... , S_n , ...$ of System States we receive by applying $R$ on $N$. Let $S_{seq_1}$ be a sequence of System States from $S_{seq}$ such that the next step that follows each of the System States of $S_{seq_1}$ is that $p_{sender}$ reads an input from the input queue. 
The memory of each party is bounded and the network links are also bounded so there must be a System State that repeats itself infinitely many times during $S_{seq_1}$, let's call this System State $S_{inf}$ and let $S_{inf-seq}$ be the sequence of $S_{inf}$'s in $S_{seq}$. Notice that all the configurations in $S_{inf-seq}$ are the same and the network states are also the same.
Let $seq$ be the sequence of input values used in the construction of the run $R$ in the proof of \lemmaref{Lemma9}.

Now we can continue the same way as \lemmaref{Lemma23} and \theoremref{thm21} (SuRB impossibility in the fully-bounded model) and prove that there is a run $R_{final}$ in which there is no common suffix. The only problem is that we compose $R_{final}$ from $R$ and in $R$ there is an infinite number of transient faults, which means that $R_{final}$ does not necessarily contain a common suffix. 
Looking at the proofs of \lemmaref{Lemma23} and \theoremref{thm21} we will notice that during the proof we woke up the sleepy party, which means that now we can assume that the ``illegal" changes in the internal state of $p_{sender}$ is because of maliciousness and not because of transient faults, \ie the Byzantine sender invents these values ($seq$) whenever a value needs to be read to produce $R$. 
Now $R_{final}$ contains no transient faults, which means that it must contain a common suffix but it actually doesn't.
\end{proof}

\section{Discussion of the Impossibility Results}
\vspace{-1em}
The  SuRB impossibility  shows that self-stabilization has serious limitations even in the fully bounded model. It surfaces the main difference between Byzantine behavior and weaker fault models.
It  shows that there is no general technique that can take a Byzantine tolerant distributed protocol and stabilize it. 
Asynchrony, Byzantine and self-stabilization are necessary to prove the impossibility result and without any one of them the problem is  solvable.


The  SuRB problem does not require honest parties to deliver an infinite sequence of messages if the sender is faulty. 
Hypothetically, someone may find a protocol that satisfies the  conditions and stop delivering messages at some point. 
The  self-stabilizing SuRB impossibility result shows that no such protocol exists. 
Notice that the requirement that each honest party will deliver an infinite sequence of messages does not come from the problem definition but from the system model. This is why the self-stabilizing SuRB impossibility result  is more surprising than the impossibility result of the repeated \RR. In the \RR problem we can intuitively understand that we cannot save the delivered sequence forever,  
while in the self-stabilized SuRB problem with transient faults we only have to agree on one message. Our proof shows that we cannot agree on a single message when we consider asynchrony, self-stabilization, and Byzantine concurrently.

In~\cite{SelfStabilizePaxos}, 
a ``practically'' self-stabilizing Paxos with crashed faults is introduced.
%
It looks like this result contradicts the impossibility result of the SuRB problem. 
Our result actually reinforces the protocol presented in~\cite{SelfStabilizePaxos}. 
\cite{SelfStabilizePaxos} doesn't solve  
a regular self-stabilization,
but rather 
requires stabilized run to last only long enough for any concrete system's time-scale. 
A run is divided into epochs. Each epoch is  a stabilized run. When an epoch terminates, the history 
is  cleared and a ``blank'' history is started. 

In this article we showed that when considering the case in which the sender sends a single repeated message the SuRB problem is solvable considering transient and crash faults. Also we showed that the problem is not solvable considering all possible input sequences. It is interesting to understand what are the minimal assumptions about the input sequence in which the problem is solvable. We leave it open to draw the line that will turn the problem to be solvable. 

\commentout{
\ms{An epoch is considered infinite, it is like counting from one to $2^{64}$. It is mentioned that such counting at every 
nanosecond will take us about 500 years to finish. 
The weakness is that 
if a run starts from some configuration in which the epoch is in its ``youth'', it will take about 500 years for the epoch to terminate, which means that in this time the history will contain fake values. 
These values remain in the history because of transient faults - reviewer 3 has a problem with this sentence, I still think its correct}. 
} 






\bibliographystyle{plain}
\bibliography{mylib}

\newpage

\appendix

\noindent {\Large \bf Appendix}

\section{SuRB Impossibilities}\label{sec:unbounded}
\subsection{SuRB Impossibility with Unbounded Lossless links}
\commentout{
In this section we will examine the Sequential Reliable Broadcast when the links are not bounded and are lossless. In this model the capacity of each link is infinite and messages do not get lost. We may prevent a message from reaching its destination for as long as we want but eventually all messages must reach their destinations. We are going to show that this problem is also unsolvable in this model even when one party is prone to crash.

\subsection{Definitions}
The definitions we defined in the model and in \sectionref{Definitions} also apply here. We only point out an important insight that must be taken to consideration when considering unbounded lossless links. The System Configuration as defined in the model is the combination of the internal state of each party along with the messages in the communication links. Since the communication links are unbounded there is an infinite number of possible configurations. Let $m_1,...,m_M$ be a set of size $M$ of the possible messages in the system (we assumed that there is a finite number of possible messages in the system so $M$ is not infinite). Let $ch_1, ... ch_{CH}$ be a set of size $CH$ of all the links in the system ($CH$ is also finite because there is a finite number of links in the system, two links between every two parties). We define a network matrix, $N$, to be a matrix of size $M \times CH$ such that in the $i,j$ position of $N$ there is a number indicating how many $m_i$ messages located in $ch_j$ link. We will redefine the system state to be a tuple $(C,N)$ such that $C$ is a tuple $(c_1,...,c_n)$ of internal states of $p_1,...,p_n$, respectively (we will call it system internal state) and $N$ is a network matrix. Let $N_1$ and $N_2$ be two network matrices, we say that $N_1 \leq N_2$ if for each $0 \langle  i \leq M$, $0 \langle  j \langle  CH$, it holds that $N_1[i,j] \leq N_2[i,j]$. We expand the concatenation principle as follow: Let $R_1 = [c^{(1)}_1, s^{(1)}_1, ...,  c^{(1)}_n]$ and $R_2 = [c^{(2)}_1, s^{(2)}_1, ..., c^{(2)}_n]$ be two partial runs. Assume that $c^{(1)}_n = (C_{c^{(1)}_n},N_{c^{(1)}_n})$ and $c^{(2)}_1 = (C_{c^{(2)}_1},N_{c^{(2)}_1})$. Assume that $C_{c^{(1)}_n} = C_{c^{(2)}_1}$ and $N_{c^{(2)}_1} \leq N_{c^{(1)}_n}$. It is possible to concatenate $R_1$ and $R_2$ and receive a new partial run $R = [c^{(1)}_1, s^{(1)}_1, ..., c^{(2)}_1, s^{(2)}_1, ..., c^{(2)}_n]$. Notice that $R_2$ could be infinite only if it also true that $N_{c^{(1)}_n} \leq N_{c^{(2)}_n}$ otherwise we must at some point deliver  messages that are not used by $R_2$ and so $R_2$ must end in order for us to do that.
}

\begin{lemma}\label{Lemma17}
Let $R$ be a run and let $S_1 = (c,N_1),S_2 = (c,N_2)$ be two System State such that $R$ is feasible at $S_1$ and at $S_2$. 
Assume that after applying $R$ on $S_1$ we reach System State $S_1' = (c_1',N_1')$ and after applying $R$ on $S_2$ we reach System State $S_2' = (c_2',N_2')$. Assume that for some $i,j$, $N_1[i,j] = k_1$ and $N_1'[i,j] = k_1 + t_1$ (the $i,j$ location of $N_1$ changes by $t_1$ after applying $R$ on $S_1$, $t_1$ can be negative). Assume that $N_2[i,j] = k_2$ and $N_2'[i,j] = k_2 + t_2$ (the $i,j$ location of $N_2$ changes by $t_2$ after applying $R$ on $S_2$, $t_2$ can be negative). It must be that $t_1 = t_2$. 
\end{lemma}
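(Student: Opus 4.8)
The plan is to show that the effect of $R$ on any single entry of the network matrix is determined by the sequence of steps alone, independent of the starting network state, and then to add up these per-step effects. Since a run records only its configurations and steps and not the network states, applying $R$ at $S_1$ and at $S_2$ traverses the \emph{same} list of configurations (starting at the common $c$ and ending at the common final configuration) and performs the \emph{same} steps $s_1,\dots,s_{|R|}$. Feasibility of $R$ at both $S_1$ and $S_2$, which is given, guarantees that each step can actually be carried out in both executions. In particular the final configuration coincides in the two executions, so $c_1'=c_2'$, and the only thing that can differ between $S_1'$ and $S_2'$ is the network matrix.

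First I would isolate the contribution of a single step to the chosen entry $[i,j]$. By the model's definition, a step $s_k$ selects a party and, as a function of that party's internal state in the (common) configuration preceding $s_k$ and of the one specific message it receives, performs an instructed task: it removes that received message from its incoming link and sends a fixed multiset of messages onto the links. Because both the internal state and the received message are fixed by the step, the number of copies of $m_i$ that $s_k$ places on link $j$, and whether $s_k$ consumes an $m_i$ off link $j$, are identical in the two executions. Hence the increment it induces on coordinate $[i,j]$,
\[
\delta_k \;=\; (\text{copies of } m_i \text{ that } s_k \text{ sends on link } j)\;-\;\mathbf{1}[\,s_k \text{ receives } m_i \text{ from link } j\,],
\]
depends only on $s_k$ and on the current configuration, and not on how many $m_i$'s happen to sit on link $j$ in $N_1$ versus $N_2$.

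Finally I would telescope: both $N_1'[i,j]-N_1[i,j]$ and $N_2'[i,j]-N_2[i,j]$ equal $\sum_k \delta_k$, so $t_1=\sum_k\delta_k=t_2$, as required (a clean alternative is induction on the length of $R$, using the one-step claim as the inductive step). The only real obstacle is this one-step claim, namely that a step's read/write behavior is a function of the step and the current configuration but not of the ambient network state. This is exactly what feasibility buys us: feasibility at $S_2$ ensures that the same message $s_k$ reads in the $S_1$-execution is also available to be read in the $S_2$-execution, so neither execution is ever forced to deviate and the two stay in lockstep on every matrix coordinate. This is the same observation already used to justify concatenation (a run feasible at a given matrix is feasible at any larger one), here applied coordinate-by-coordinate to pin down the exact change rather than merely its feasibility.
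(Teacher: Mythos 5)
Your proof is correct and takes essentially the same approach as the paper, whose one-line argument is exactly your observation that $R$ adds and removes the same number of $m_i$ messages on link $j$ regardless of the starting network state; you merely make the per-step bookkeeping and the telescoping sum explicit. Note only that your one-step claim relies on links being lossless (the semi-bounded setting in which this lemma is stated and used), since with bounded links a send could overflow in one execution but not the other.
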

\begin{proof}
The run $R$ added and removed the same number of message $m_i$ from link $j$ so after applying it on $S_2$ and on $S_1$ the difference of the number of $m_i$ messages on link $j$ must be the same.
\end{proof}

\begin{lemma}\label{Lemma18}
Let $N$ be some network state. Let $R$ be a partial run. Assume that $R$ is feasible at $N$ and let $N_1 = NR$. Assume that $N \leq N_1$. $R$ is feasible at $N_1$ and $N_2 = N_1R$ satisfies $N_1 \leq N_2$.
\end{lemma}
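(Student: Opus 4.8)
The plan is to treat the two assertions of the lemma separately, since the first supplies the feasibility hypothesis that the second one silently relies on.

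For the claim that $R$ is feasible at $N_1$, I would simply invoke the monotonicity of feasibility established earlier in this section: a partial run that is feasible at a given network matrix remains feasible at any pointwise-larger matrix, because every message consumed along the run is still present in the larger state. Since $R$ is feasible at $N$ by hypothesis and $N \leq N_1$, feasibility of $R$ at $N_1$ follows immediately. This is the only place the hypothesis $N \leq N_1$ is used to license running $R$ a second time.

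For the inequality $N_1 \leq N_2$, the key observation is that the net effect of $R$ on each matrix entry is independent of the starting network state. Concretely, $R$ begins from a fixed first configuration $c_1$, and it is feasible both at $(c_1, N)$ and at $(c_1, N_1)$; hence \lemmaref{Lemma17} applies and tells us that for every $(i,j)$ the change $t_{ij}$ in entry $(i,j)$ is the same whether $R$ is applied to $N$ or to $N_1$. Writing this out, $N_1[i,j] = N[i,j] + t_{ij}$ and $N_2[i,j] = N_1[i,j] + t_{ij}$. The hypothesis $N \leq N_1$ now forces $t_{ij} = N_1[i,j] - N[i,j] \geq 0$ for every $(i,j)$, and therefore $N_2[i,j] = N_1[i,j] + t_{ij} \geq N_1[i,j]$, that is, $N_1 \leq N_2$.

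The argument is almost entirely bookkeeping; the one step that does the real work is the appeal to \lemmaref{Lemma17}, which converts the statement ``$R$ had a nonnegative net effect when run from $N$'' into ``$R$ has exactly that same nonnegative net effect when run from $N_1$.'' The only point needing care is verifying \lemmaref{Lemma17}'s hypotheses: that $R$ starts from the same configuration in both applications (true, since a run fixes its configuration sequence), and that $R$ is feasible at both $(c_1,N)$ and $(c_1,N_1)$ (the latter being exactly the first part of this lemma). Once these are in place the conclusion is immediate, and the lemma then iterates — applying $R$ repeatedly yields a monotonically nondecreasing chain of network matrices, which is presumably how it is consumed by the infinite-chain construction in \lemmaref{Lemma16}.
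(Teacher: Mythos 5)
Your proof is correct and follows essentially the same route as the paper's: feasibility at $N_1$ via the monotonicity of feasibility in the network matrix, and $N_1 \leq N_2$ by applying \lemmaref{Lemma17} to transfer the entrywise net change $k \geq 0$ (nonnegative because $N \leq N_1$) from the first application of $R$ to the second. Your explicit verification of \lemmaref{Lemma17}'s hypotheses is a small improvement in rigor over the paper's terser wording, but the argument is the same.
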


\begin{proof}
Since $N \leq N_1$ it is clear that  $R$ is feasible at $N_1$. 
All that is left to show is that $N_1 \leq N_2$. For each $i,j$ there is a $k \geq 0$ such that after applying $R$ on $N$ the $(i,j)$ cell of $N$ changes by $k$ ($k$ is greater than or equal to zero because $N \leq N_1$). So we conclude that $N_1[i,j] = N[i,j] + k$. Now when applying $R$ on $N_1$, by \lemmaref{Lemma17} the $(i,j)$ cell of $N_1$ changes by $k$, so $N_2[i,j] = N_1[i,j] + k$. And this is true for each valid $(i,j)$ so it must be that $N_1 \leq N_2$.
\end{proof}

\begin{lemma}\label{Lemma12}
Let $S$ be a Set of network matrices along with the binary relation '$\leq$' as defined above. $S$ is  a partial ordered set.
\end{lemma}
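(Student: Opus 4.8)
The plan is to verify directly that the entrywise relation $\leq$ on network matrices satisfies the three defining axioms of a partial order --- reflexivity, antisymmetry, and transitivity --- and to observe that each one reduces immediately to the corresponding property of the usual order $\leq$ on the natural numbers. This reduction is available precisely because, by the definition given above, $N_1 \leq N_2$ means $N_1[i,j] \leq N_2[i,j]$ for every entry $(i,j)$, and each such entry records a (non-negative) number of messages in transit, i.e.\ an element of $\mathbb{N}$ on which $\leq$ is already known to be a partial (in fact total) order.

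First I would check reflexivity: for any network matrix $N$ and any entry $(i,j)$ we have $N[i,j] \leq N[i,j]$, hence $N \leq N$. Next, for antisymmetry, I would assume $N_1 \leq N_2$ and $N_2 \leq N_1$; then for each $(i,j)$ both $N_1[i,j] \leq N_2[i,j]$ and $N_2[i,j] \leq N_1[i,j]$ hold, so antisymmetry of $\leq$ on $\mathbb{N}$ forces $N_1[i,j] = N_2[i,j]$, and since this holds for every entry we conclude $N_1 = N_2$. Finally, for transitivity, assuming $N_1 \leq N_2$ and $N_2 \leq N_3$, for each $(i,j)$ the chain $N_1[i,j] \leq N_2[i,j] \leq N_3[i,j]$ yields $N_1[i,j] \leq N_3[i,j]$ by transitivity of $\leq$ on $\mathbb{N}$, whence $N_1 \leq N_3$.

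There is no genuine obstacle here: the statement amounts to the fact that the coordinatewise (product) order on a finite family of copies of $(\mathbb{N}, \leq)$ is again a partial order, with all three axioms inherited entry by entry. The only point worth stating explicitly is that all matrices under consideration share the same fixed dimensions $M \times \bar\ell$, so that comparing them entrywise is well defined; once that is noted, the three verifications above complete the proof.
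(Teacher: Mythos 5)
Your proof is correct and follows essentially the same route as the paper's: both verify reflexivity, antisymmetry, and transitivity entrywise, reducing each axiom to the corresponding property of $\leq$ on the natural numbers. Your added remark that all matrices share the fixed dimensions $M \times \bar\ell$ is a small but welcome precision the paper leaves implicit.
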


\begin{proof}
In order to show that $S$ is a partial order set we need to prove three properties: reflexivity, antisymmetry, transitivity.

Reflexivity: let $a$ be a network matrix. It is clear that each cell in $a$ is less than or equal to itself and so $a \leq a$.

Antisymmetry: let $a,b$ be two network matrices. Assume that $a \leq b$ and $b \leq a$. It is clear the $a[i,j] \leq b[i,j]$ and $b[i,j] \leq a[i,j]$ for each valid $i$ and $j$ ($i$ and $j$ that do not exceed  the matrix size). So it is clear that for each such $i$ and $j$, it holds that $a[i,j] = b[i,j]$ so it is clear that $a = b$.

Transitivity: let $a,b,c$ be three network matrices such that $a \leq b$ and $b \leq c$. It is clear the $a[i,j] \leq b[i,j]$ and $b[i,j] \leq a[i,j]$ for each valid $i$ and $j$ ($i$ and $j$ that do not exceed the the matrix size). So it is clear that for each such $i$ and $j$, it holds that $a[i,j] \leq c[i,j]$ and so $a \leq c$.
\end{proof}

\begin{lemma}[Dilworth's lemma (infinite version \cite{IntroductionToCombinatoric} section 3.5)]\label{Lemma13}
A partial order set with infinite number of elements must have an infinite Chain or an infinite Antichain.
\end{lemma}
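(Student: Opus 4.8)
The plan is to prove this as a Ramsey-type dichotomy rather than appealing to the finite Dilworth decomposition theorem. Since any infinite partially ordered set contains a countably infinite subset, and a chain (resp.\ antichain) inside a subset is also a chain (resp.\ antichain) of the whole set, I would first reduce to the case where the ground set is countably infinite, say $P=\{x_1,x_2,\dots\}$. The goal is then to extract an infinite subsequence all of whose pairs are comparable (a chain) or all of whose pairs are incomparable (an antichain); recall that a subset in which every two elements are comparable is, by definition, fully ordered, hence a chain.

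First I would run a nested-extraction argument. Set $P_0=P$. Having chosen an infinite set $P_{i-1}$, pick any element $y_i\in P_{i-1}$ and split $P_{i-1}\setminus\{y_i\}$ into the elements comparable to $y_i$ and the elements incomparable to $y_i$. Their union is infinite, so at least one of the two parts is infinite; let $P_i$ be an infinite such part, and record a tag $\tau_i\in\{C,A\}$ according to whether $P_i$ consists of elements comparable to, or incomparable to, $y_i$. By construction the sets are nested, $P_0\supseteq P_1\supseteq\cdots$, so for every $j>i$ we have $y_j\in P_i$, and therefore $y_j$ is comparable to $y_i$ when $\tau_i=C$ and incomparable to $y_i$ when $\tau_i=A$. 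This is the crucial bookkeeping: the tag of an element controls its relation to \emph{all} later chosen elements.

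Next I would apply the pigeonhole principle: the sequence of tags $\tau_1,\tau_2,\dots$ takes values in the two-element set $\{C,A\}$, so one value occurs infinitely often. Selecting the corresponding subsequence $y_{i_1},y_{i_2},\dots$ gives an infinite set in which, by the previous paragraph, every earlier element relates to every later one in the fixed way dictated by the common tag. If that tag is $C$ every pair is comparable, so the subsequence is an infinite chain; if it is $A$ every pair is incomparable, so it is an infinite antichain. Either outcome proves the lemma.

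I expect the only delicate point to be the homogeneity bookkeeping in the extraction step, namely verifying that the nesting $P_j\subseteq P_i$ for $j>i$ really forces each later element into the part of $P_{i-1}$ selected by $\tau_i$; once this invariant is stated cleanly the pigeonhole conclusion is immediate. As a shortcut I could instead invoke the infinite Ramsey theorem for pairs directly: $2$-colour each unordered pair $\{a,b\}\subseteq P$ by whether $a$ and $b$ are comparable, and take the guaranteed infinite monochromatic set, which is a chain or an antichain exactly as above; the extraction argument is simply a self-contained proof of this special case.
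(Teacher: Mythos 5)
Your proof is correct. Note first that the paper itself gives no argument for this lemma at all: it is imported by citation (to section 3.5 of the combinatorics text) and used as a black box in Lemmas \ref{Lemma14} and \ref{Lemma15}, so any sound self-contained proof is by construction ``a different route.'' Your route is the canonical one: you correctly recognize that, despite the name, the infinite dichotomy is not Dilworth's finite chain-decomposition theorem but the instance of infinite Ramsey for pairs with the two-coloring $\{\text{comparable},\text{incomparable}\}$, and your nested-extraction argument is a clean self-contained proof of exactly that instance. The bookkeeping you flag as delicate does go through: $y_j\in P_{j-1}\subseteq P_i$ for $j>i$, so the tag $\tau_i$ governs the relation of $y_i$ to every later chosen element, and the pigeonhole on tags finishes it. Two small points worth making explicit if you write this up. First, in the tag-$C$ case you obtain a set in which every two elements are comparable; this is a chain in the set sense because pairwise comparability plus the inherited poset axioms makes the restriction of $\leq$ a total order --- but the individual comparisons may point in either direction, so the extracted \emph{sequence} $y_{i_1},y_{i_2},\dots$ need not satisfy $y_{i_k}\leq y_{i_l}$ for $k<l$. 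That is fine for the lemma as stated (it is about posets, not sequences), and the paper's Lemma \ref{Lemma14} handles the sequence version separately by passing to index--element pairs, where comparability in the product order forces the index-increasing direction automatically; your set-version proof is precisely the input that reduction needs. Second, both the reduction to a countable subset and the repeated choice of $y_i$ and of an infinite part $P_i$ use (dependent) choice, which is standard but worth a word. What your approach buys over the paper's citation is self-containedness at essentially no cost; what the citation buys is brevity.
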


\noindent\relem{Lemma14}[repeated - Dilworth's lemma for sequences (infinite version)]
\DilworthsLemmaSeq
\erelem

\begin{proof}
Let $S$ be an infinite Partial Ordered Sequence with the binary relation `$\leq$'. We will build a set $S'$ of tuples $(n,a)$ in the following way: for each $i > 0$ let $a$ be the $i$-th element in $S$. Add to $S'$ the element $(i,a)$. We define the binary relation `$\leq$' on $S'$ in the following way: let $a = (i,a'), b = (j,b')$ be two elements in $S'$, $a \leq b$ iff $i \leq j$ and $a' \leq b'$. 
First we will show that $S'$ along with the binary relation `$\leq$' is a partial order set. 

Reflexivity: let $a = (i,a')$ be an element in $S'$. It is clear that $i \leq i$ and since $S$ is a partial order it is also clear that $a' \leq a'$ so it is clear that $a \leq a$.

Antisymmetry: let $a = (i,a'), b = (j,b')$ be two elements in $S'$. Assume that $a \leq b$ and $b \leq a$. So it is clear that $i \leq j$ and $j \leq i$ so $i = j$. 
It is also clear that $a' \leq b'$ and $b' \leq a'$ so $a' = b'$. So we conclude that $a = b$.

Transitivity: let $a = (i,a'), b = (j,b'), c = (t,c')$ be three elements in $S'$. Assume that $a \leq b$ and $b \leq c$. So it is clear that $i \leq j$ and $j \leq t$ so $i \leq t$. It is also clear that $a' \leq b'$ and $b' \leq c'$ so $a' \leq c'$. So we conclude that $a \leq c$.

Now by \lemmaref{Lemma13} we get that $S'$ contains an infinite Chain or an infinite Antichain. Assume  that $S'$ contains an infinite Chain. Let $S'' = (i_1,a_1), ..., (i_l,a_l), ...$ be that chain. Let's order $S''$ by the $i$-th entry of each tuple and get only the $a$'s entries. We receive a Chain in $S$. We can do the same  on an Antichain and receive an Antichain in $S$.
\end{proof}

\begin{lemma}[An infinite Antichain Inexistence]\label{Lemma15}
Let $S$ be an infinite sequence of network matrices along with the binary relation '$\leq$' as defined above. $S$ does not contain an infinite Antichain.
\end{lemma}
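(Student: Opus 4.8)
The plan is to prove the statement by induction on the number of cells of the network matrix, viewing each $M\times\bar\ell$ matrix as a single vector of length $d=M\bar\ell$ in $\mathbb{N}^d$ equipped with the componentwise order $\leq$; this is exactly the ``long vector'' viewpoint already foreshadowed in the intuition accompanying \lemmaref{Lemma16}. For the base case $d=1$, a one-cell matrix is just a natural number, so any two such matrices are comparable and no antichain can contain more than one element, in particular no infinite one.

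For the inductive step I would assume the claim in dimension $n$ and suppose, toward a contradiction, that $S$ is an infinite antichain of $(n+1)$-cell vectors $a_1,a_2,\dots$. First project each $a_i$ onto its first $n$ coordinates, producing an infinite partial ordered sequence over $\mathbb{N}^n$. By \lemmaref{Lemma14} this projected sequence contains either an infinite Chain or an infinite Antichain, and the induction hypothesis rules out an infinite Antichain in dimension $n$; hence the projected sequence contains an infinite Chain. Let $t_1<t_2<\cdots$ be the indices realizing it, so that the first $n$ coordinates of $a_{t_1},a_{t_2},\dots$ are nondecreasing componentwise. Now exploit the well-ordering of $\mathbb{N}$ in the remaining coordinate: the $(n+1)$-th cell values of the chain elements form a nonempty subset of $\mathbb{N}$ and therefore attain a minimum, achieved by some $a_{t_k}$. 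Since the chain is infinite, $a_{t_k}$ has a successor $a_{t_{k+1}}$ in the chain. In the first $n$ coordinates $a_{t_k}\leq a_{t_{k+1}}$ by the chain property, and in the $(n+1)$-th coordinate $a_{t_k}\leq a_{t_{k+1}}$ because $a_{t_k}$ attains the minimum; hence $a_{t_k}\leq a_{t_{k+1}}$ as full $(n+1)$-vectors. But both lie in $S$ and must be incomparable, a contradiction, completing the induction.

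The point that needs the most care is that projecting onto the first $n$ coordinates does \emph{not} preserve the antichain structure: the projected sequence is merely some partial ordered sequence, not an antichain, so I cannot reuse incomparability directly and must instead feed it to \lemmaref{Lemma14} as a sequence and only then invoke the induction hypothesis to exclude an infinite projected antichain. A smaller subtlety, worth stating explicitly, is that the minimizing element is guaranteed a successor within the chain precisely because the chain is infinite, which is what lets the final comparison go through. Combined with \lemmaref{Lemma14}, this lemma is exactly what yields the infinite Chain asserted in \lemmaref{Lemma16}.
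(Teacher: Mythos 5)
Your proof is correct and follows essentially the same route as the paper's: induction on the vector length, projecting onto the first $n$ cells, combining the induction hypothesis with \lemmaref{Lemma14} to extract an infinite Chain, and then using the minimum of the $(n+1)$-st coordinate together with a later chain element to produce two comparable members of the alleged antichain. Your explicit remarks---that the projection is only a partial ordered sequence rather than an antichain, and that the minimizing element has a successor because the chain is infinite---are exactly the points the paper's proof relies on implicitly.
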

\begin{proof}
We will prove the lemma by induction on the network matrix size. 
We can map the network matrices to vectors such that each cell of the network matrix is mapped to a cell on this vector. 
Vector $a$ is greater than or equal to vector $b$ iff its corresponding matrix, $a'$, is greater than or equal to the corresponding matrix of $b$. 

Assume that $S$  contains an infinite Antichain. This means that there is an infinite sequence of elements from $S$ such that each two distinct elements in this sequence are incomparable. Call this sequence $S'$. 

We prove the lemma by induction of the vector length. First, assume that the vector size is $1$ (\ie Network matrix of size $1 \times 1$). Since network matrix entries are non-negative natural number, it is clear that there is no infinite Antichain, 
$S'$. 

Now assume correctness for vectors of size $n$ and prove it on vectors of size $n + 1$. 
Assume to the contrary that such $S'$ exists.
Let's look first only at the first $n$ cells of all vectors in $S'$ (the infinite Antichain). 
From the inductive hypothesis we know that when considering only those $n$ cells, there is no infinite Antichain. 
By \lemmaref{Lemma14}, we know that each infinite Partial Ordered Sequence must contain an infinite Chain or an infinite Antichain. 
Since looking at the first $n$ cells we do not get an infinite Antichain, there is an infinite Chain $S''$ such that when looking only at the first $n$ cells of the elements in $S''$ we get a totally ordered sequence (remember that $S''$ is a sequence of elements from $S'$ and $S''$ is infinite). 
Now look at the $n + 1$-st cell of the elements of $S''$.
Call the element with the minimum value in the $n + 1$ cell $min_{element}$. The element that appears in $S''$ right after $min_{element}$ must be greater or equal $min_{element}$ in the $n+1$-st cell and in the other $n$ cells. This way we actually find two comparable elements in $S'$ which implies that $S'$ is not an infinite Antichain.
\end{proof}

%

\noindent\relem{Lemma16}[repeated]
\lemchainexists
\erelem

\begin{proof}
By \lemmaref{Lemma14} we conclude that $S$ must contain an infinite Chain or an infinite Antichain. By \lemmaref{Lemma15} we conclude that $S$ does not contain an infinite Antichain, so $S$ must contain an infinite Chain.
\end{proof}

\begin{lemma}\label{Lemma30}
Let $S = (c,N)$ be some configuration and let $R_1$ and $R_2$ be two partial runs feasible at $S$. Assume that the initial and final internal configurations of both $R_1$ and $R_2$ are the same and equal $c$. Assume that  $R_2$ is feasible at $NR_1$ and  $R_1$ is feasible at $NR_2$. 
Then $NR_1R_2\equiv NR_2R_1$.
\end{lemma}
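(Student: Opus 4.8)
The plan is to reduce everything to \lemmaref{Lemma17}, which already isolates the essential fact: a partial run, viewed as a fixed sequence of steps, prescribes exactly which messages are placed on and removed from each link, so its net effect on the network matrix is a fixed integer vector that depends only on the run (and on the internal configuration at which it begins), not on the particular network matrix it is applied to. First I would make this explicit by associating to each run $R_k$ ($k=1,2$) a \emph{change vector} $\Delta_k$, where $\Delta_k[i,j]$ is the net number of $m_i$ messages added to link $j$ during $R_k$. By \lemmaref{Lemma17}, whenever $R_k$ is feasible and begins at the internal configuration $c$, applying it transforms any network matrix $N'$ into $N' + \Delta_k$.

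Next I would track the internal configurations along the two orders of composition and show that the change vectors are reused unchanged. Since $R_1$ starts and ends at internal configuration $c$, applying $R_1$ to $S=(c,N)$ yields the state $(c,\, N+\Delta_1)$; the internal configuration is again $c$, so the run $R_2$---which is feasible at $NR_1$ by hypothesis---begins at $c$ exactly as it does when applied directly to $N$, and by \lemmaref{Lemma17} it contributes the same vector $\Delta_2$. Hence $NR_1R_2 = N + \Delta_1 + \Delta_2$. The symmetric argument, using that $R_2$ also returns to $c$ and that $R_1$ is feasible at $NR_2$, gives $NR_2R_1 = N + \Delta_2 + \Delta_1$.

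Finally, because the entries of the network matrix are integers and entrywise addition is commutative, $N + \Delta_1 + \Delta_2 = N + \Delta_2 + \Delta_1$, so $NR_1R_2 \equiv NR_2R_1$, as claimed.

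I expect the only delicate point---rather than a genuine obstacle---to be the bookkeeping in the middle step: one must invoke the hypothesis that both runs begin \emph{and} end at $c$, since this is precisely what guarantees that the second run in each order encounters the same internal configuration it would when applied directly to $N$, which is what licenses the second application of \lemmaref{Lemma17} and hence the reuse of the same change vector. The feasibility hypotheses ($R_2$ feasible at $NR_1$ and $R_1$ feasible at $NR_2$) are needed only to ensure that both iterated applications $NR_1R_2$ and $NR_2R_1$ are well-defined in the first place, after which the commutativity argument finishes the proof.
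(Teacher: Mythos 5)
Your proposal is correct and takes essentially the same approach as the paper: both reduce the claim to \lemmaref{Lemma17} (the net change a run induces in each cell of the network matrix is fixed, given that it starts from the same internal configuration $c$ and is feasible) and conclude by commutativity of integer addition, with the hypothesis that both runs begin and end at $c$ licensing the second application of that lemma in each order. Your change-vector notation $\Delta_k$ is just a compact packaging of the paper's cell-by-cell bookkeeping with the increments $k$ and $d$.
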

%


\begin{proof}
We assumed that $R_1$ and $R_2$ ended in the same system internal state so $NR_1R_2$ and $NR_2R_1$ system internal states are the same and equal $c$. 
All that is left to show is that the network matrices are the same. 
Let $N[i,j]$ be the value of the $i,j$ cell of $N$. Let $N_1$ be the resulting network matrix of applying $R_1$ on $C$. 
There is a value $k$ such that $N_1[i,j] = N[i,j] + k$ ($k$ could be negative). Let $N_2$ be the network matrix of $N_1R_2$. 
There exists a value $d$ such that $N_2[i,j] = N_1[i,j] + d = N[i,j] + k + d$ ($k+d$ could be negative). Let $N_3$ be the resulting network matrix of applying $R_2$ on $C$. 
By \lemmaref{Lemma17} $R_2$ changes the $i,j$ cell of $N$ by $d$ so $N_3[i,j] = N[i,j] + d$. Let $N_4$ be the network matrix of $N_3R_1$. By \lemmaref{Lemma17} $R_1$ changes the $i,j$ cell of $N_3$ by $k$ so $N_4[i,j] = N_3[i,j] + k = N[i,j] + d + k$. This is true for each valid $i,j$ so $N_4 = N_2$.
\end{proof}


\begin{definition}
Let $Seq$ be a sequence of elements and let $q$ be a natural number. $qSeq$ is a sequence of concatenating $Seq$ to itself $q$ times.
\end{definition}

\def\lemsubseqeq{
Let $a_1$ and $a_2$ be two non-empty sequences such that the size of $a_2$ is greater than the size of $a_1$. Let's look at two compositions of $a_1$ and $a_2$, $b_1 = a_1 + a_1 + a_2$ and $b_2 = a_1 + a_2 + a_1$. $b_1 = b_2$ iff there exists a sequence $S$ of size less than or equal the size of $a_1$ and two natural numbers $n,r > 0$ such that $a_1 = nS$ and $a_2 = rS$, we call $S$ the common divider.
}
\begin{lemma}[The Common Divider Lemma]\label{Lemma19}
\lemsubseqeq
\end{lemma}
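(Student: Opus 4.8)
The plan is to treat the equivalence as two implications, with essentially all of the content concentrated in the forward direction, which I would reduce to the classical fact that two sequences commuting under concatenation must be powers of a common sequence.

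The reverse implication I would dispatch first, since it is immediate: if $a_1 = nS$ and $a_2 = rS$, then because repeated concatenation of $S$ is associative, $b_1 = a_1 + a_1 + a_2 = (2n+r)S$ and $b_2 = a_1 + a_2 + a_1 = (2n+r)S$, so $b_1 = b_2$.

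For the forward implication, the first step is to simplify the hypothesis. Both $b_1 = a_1 + a_1 + a_2$ and $b_2 = a_1 + a_2 + a_1$ have length $2|a_1| + |a_2|$ and both begin with the block $a_1$, so after stripping this common prefix, $b_1 = b_2$ holds if and only if $a_1 + a_2 = a_2 + a_1$. Thus it suffices to prove: for non-empty sequences $u, v$ with $u + v = v + u$, there is a non-empty $S$ and integers $n, r \ge 1$ with $u = nS$ and $v = rS$; applying this with $u = a_1$ and $v = a_2$ then yields the lemma, and the size bound $|S| \le |a_1|$ is automatic since $a_1 = nS$ with $n \ge 1$. I would prove this commutation statement by strong induction on $|u| + |v|$. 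In the case $|u| = |v|$, comparing the length-$|u|$ prefixes of the equal sequences $u+v$ and $v+u$ forces $u = v$, so $S = u$ with $n = r = 1$ works. Otherwise, assuming without loss of generality $|u| < |v|$, the equality $u+v = v+u$ with $|u| \le |v|$ makes $u$ a prefix of $v$, so $v = u + w$ with $w$ non-empty and $|w| < |v|$; substituting into $u + (u + w) = (u + w) + u$ and cancelling the leading $u$ gives $u + w = w + u$. Now $(u, w)$ is a shorter pair of commuting non-empty sequences, so the induction hypothesis supplies a non-empty $S$ and $a, b \ge 1$ with $u = aS$ and $w = bS$, whence $v = u + w = (a+b)S$, closing the induction.

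The hard part is this inductive commutation argument; the reverse direction and the prefix-cancellation reduction are routine. The two points I would take care with are the prefix claim used in the inductive step --- that $u + v = v + u$ with $|u| \le |v|$ forces $u$ to be a prefix of $v$, which holds because the two equal sequences share their length-$|u|$ prefix, equal to $u$ on one side and to the length-$|u|$ prefix of $v$ on the other --- and the fact that $S$ stays non-empty throughout (it cannot be empty, since $u = aS$ with $a \ge 1$ is non-empty), which is exactly what delivers $n, r > 0$ in the conclusion.
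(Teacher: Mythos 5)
Your proof is correct. It shares the paper's overall skeleton --- the reverse direction via $b_1 = b_2 = (2n+r)S$, and the reduction of $b_1 = b_2$ to the commutation equation $a_1 + a_2 = a_2 + a_1$ by stripping the common prefix $a_1$ --- but your inductive step is a genuinely different decomposition. The paper's proof performs a full division with remainder: it matches successive blocks $a_2[0:u], a_2[u:2u], \dots$ against $a_1$ until a remainder of length $z < u$ is left, deduces the rotation identity $a_1 = a_1[0:u-z] + a_1[u-z:u] = a_1[u-z:u] + a_1[0:u-z]$, and applies the induction hypothesis to the two pieces of $a_1$; it also needs a separate base case $|a_1| = 1$, argued by locating a position where the two concatenations disagree. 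You instead peel off a single copy of $u$ (writing $v = u + w$ with $w$ non-empty), cancel it from the commutation equation to obtain $u + w = w + u$, and recurse on the commuting pair $(u,w)$ under the measure $|u| + |v|$; your base case $|u| = |v|$ forces $u = v$ by prefix comparison and absorbs the paper's singleton case. Your single-subtraction Euclidean variant buys a shorter argument with essentially no index bookkeeping, at the small cost of needing the symmetric WLOG $|u| < |v|$ inside the induction (legitimate, since the hypothesis $u + v = v + u$ is symmetric in $u$ and $v$, even though the lemma statement fixes $|a_1| < |a_2|$); the paper's division-with-remainder variant reaches the divider in fewer recursive calls but via the more delicate rotation identity. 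Both arguments obtain the size bound $|S| \le |a_1|$ for the same reason, namely $a_1 = nS$ with $n \ge 1$.
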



\begin{figure}[t]
\centering
\includegraphics[scale=0.6]{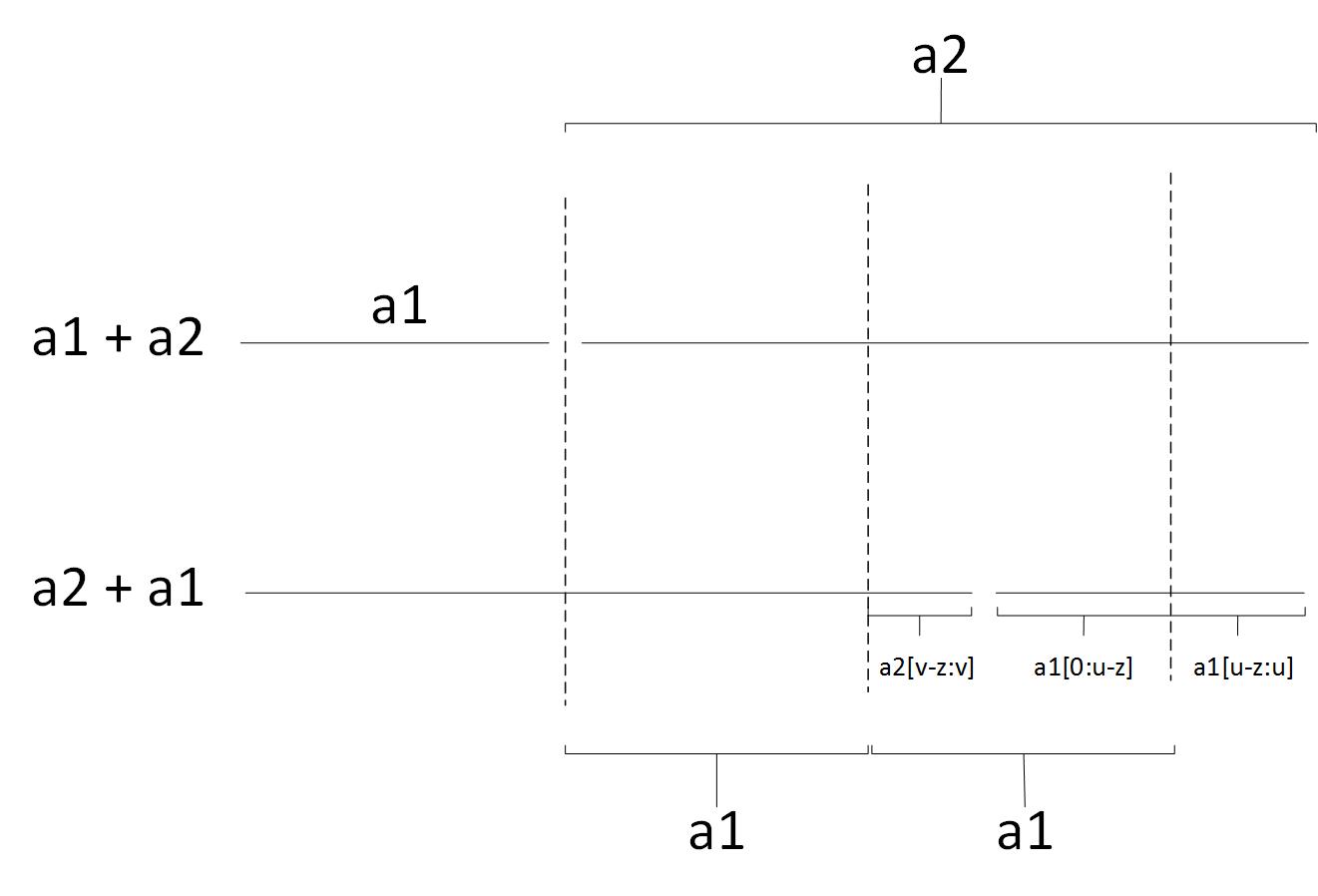}
\caption{$a_1 + a_2$ is the same as $a_2 + a_1$}
\label{fig:figure5}
\end{figure}

\begin{proof}
If $a_1 = nS$ and $a_2 = rS$ then $b_1 = a_1 + a_1 + a_2 = (2n + r)S$ and  $b_2 = a_1 + a_2 + a_1 = (2n + r)S$ so $b_1 = b_2$.
Now assume that $b_1 = b_2$. 
Sequences $b_1$ and $b_2$  are the same iff $a_1 + a_2$ is the same as $a_2 + a_1$. 
We will show that for this to happened there must be a sequence, $S$, and two natural number $n,r$ such that $a_1 = nS$ and $a_2 = rS$. We will use recursion. We first assume that the size of $a_1$ is 1 and it equals $x$. 
If there is no natural number $n > 1$ such that $a_2 = nx$ then there must be a location, $i$, in $a_2 + a_1$ such that the value in $i$ is not $x$ and the value in $i+1$ is $x$, in $a_1 + a_2$ this location moves one index foreword, so the value in the location $i+1$ of $a_1 + a_2$ and $a_2 + a_1$ is not the same and so the sequences are not the same. 
Let $u < v$ be two numbers. Assume that for each $i,j$ such that $i < u, j < v, i < j$ and the size of $a_1$ and $a_2$ is $i$ and $j$ respectively, $a_1 + a_2$ is the same as $a_2 + a_1$ if there is a sequence, $S$, and two natural number $n,r$ such that $a_1 = nS$ and $a_2 = rS$. We will prove the same result for $a_1$ of size $u$ and $a_2$ of size $v$. Let $a_2[0:u]$ be a sub-sequence of $a_2$ that starts at position 0 and ends at position $u$, it must be that $a_1 = a_2[0:u]$. 
Now for $a_1 + a_2$ to be the same as $a_2 + a_1$, $a_2[u:2u]$ must be the same as $a_2[0:u]$ which equals $a_1$. 
We can repeatedly continue until we reach the end of $a_2$, if $a_1$ fits exactly to $a_2$ we are done, otherwise there must be $z < u$ such that $a_2[v - z,v] + a_1[0,u - z] = a_1$ (notice that in such case there exists $q$ such that $a_2 = qa_1 + a_2[v - z,v]$, see figure \ref{fig:figure5}). But we also know that $a_1[u - z,u] = a_2[v - z,v]$. 
So we know that there exits $z$ such that $a_1 = a_1[0:u - z] + a_1[u - z:u] = a_1[u - z:u] + a_1[0:u - z]$ but by induction we know that for this to be true there must be a sequence $S$ and two natural numbers $l,h$ such that $a_1[u - z:u] = lS$ and $a_1[0:u - z] = hS$. So we conclude that $a_1 = (l + h)S$ and $a_2 = (q(l + h))S + lS = (q(l + h) + l)S$ and we are done.\end{proof}

\def\lemnosubseqeq{
%
Let $S$ be an incremental sequence of messages. Let $S[i,j]$ be a sub-sequence of $S$ that starts at position $i$ and ends at position $j$. There exists $t > j$ such for each $r > t$ there are no natural numbers $n,k>0$ for which there exists a sequence $Seq$ such that $S[i,j] = nSeq$ and $S[j,r] = kSeq$ \ie there is no common divider.
}
\begin{lemma}\label{Lemma20}
\lemnosubseqeq
\end{lemma}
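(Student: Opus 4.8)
The plan is to reduce the non-existence of a common divider to a statement about periodicity, and then to exploit the fact that an incremental sequence contains arbitrarily long runs of both symbols. Write $L$ for the length of the fixed window $S[i,j]$. If $Seq$ is any common divider, then $S[i,j] = nSeq$ with $n \ge 1$, so $|Seq| = L/n \le L$; in particular, a common divider would force the tail $S[j,r] = kSeq$ to be \emph{periodic} with period $p = |Seq| \le L$. Thus it suffices to show that, for all sufficiently large $r$, the segment $S[j,r]$ cannot be periodic with any period $p \le L$.

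First I would record the elementary consequence of the incremental-sequence definition that the runs of $0$'s and of $1$'s grow without bound. Starting from any block of value $x$ and length $\ell$, the definition produces a strictly longer later block of value $1-x$; iterating yields blocks of strictly increasing lengths whose values alternate, hence arbitrarily long runs of both symbols occur beyond every position. Using this, I would fix a position $j_0 > j$ at which $S$ contains a run of $0$'s of length $> L$, and a later position $j_1 > j_0$ at which $S$ contains a run of $1$'s of length $> L$. Setting $t$ to be any index past the end of both of these runs guarantees that for every $r > t$ the segment $S[j,r]$ contains a $0$-run of length $> L$ and a $1$-run of length $> L$.

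The core step is the periodicity argument. Suppose toward a contradiction that for some $r > t$ a common divider $Seq$ exists, so $w := S[j,r]$ has period $p = |Seq| \le L$. Since $w$ contains a $0$-run of length $> L \ge p$, that run covers at least one full period; the characters of a length-$p$ window lying entirely inside the run meet every residue class modulo $p$ and are all $0$, so by periodicity $Seq = 0^p$ and hence $w = Seq^k = 0^{pk}$ is the all-zero word. By the identical argument applied to the $1$-run of length $> L \ge p$, $w$ is the all-one word. These cannot both hold for the nonempty $w$, a contradiction. Therefore no common divider exists for any $r > t$, which is exactly the claim.

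The main obstacle is making the periodicity-forces-constant step fully rigorous: one must argue carefully that a run of length $\ge p$ inside a $p$-periodic word meets every residue class modulo $p$ and therefore pins down $Seq$ completely. The remaining care is bookkeeping --- confirming that $|Seq| \le L$ under the stated definition of a divider, and that the chosen threshold $t$ really places both long runs inside $S[j,r]$ for all $r > t$; both are routine once the growth of runs in an incremental sequence is made explicit.
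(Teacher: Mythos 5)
Your proof is correct, and it rests on the same two pillars as the paper's own argument: any common divider $Seq$ has length at most $L=|S[i,j]|$, and an incremental sequence supplies arbitrarily long constant runs after every position, which pin such a short divider down to a constant word. The execution, however, is genuinely tidier than the paper's. The paper splits into three cases according to whether $S[i,j]$ is all zeros, all ones, or mixed; in the mixed case it plants a single ones-run of length greater than $2(j-i)$ in the tail --- long enough to trap a fully aligned copy of $Seq$ --- and derives the contradiction by comparing the (all-ones) $Seq$ against the mixed composition of $S[i,j]$, while in the constant cases a single later occurrence of the opposite symbol already breaks divisibility. You avoid the case analysis entirely: by placing both a $0$-run and a $1$-run of length exceeding $L$ inside $S[j,r]$, your residue-class argument (a $p$-periodic word containing a constant window of $p$ consecutive positions is constant) forces $S[j,r]$ to be simultaneously all zeros and all ones, a contradiction internal to the tail that never looks back at the content of $S[i,j]$ beyond the bound $|Seq|\leq L$. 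Your periodicity step is also slightly sharper on constants: a constant run of length at least $p$ suffices, where the paper's aligned-copy argument requests length greater than $2(j-i)$. Finally, your iteration of the incremental-sequence definition to obtain alternating runs of strictly increasing lengths starting at ever later positions --- hence runs of both symbols of length greater than $L$ beginning after $j$ --- is exactly the property the paper itself relies on and isolates as \lemmaref{Lemma21}, so there is no gap there; only make sure, as you note, that $t$ is chosen past the \emph{ends} of both runs so that they lie wholly inside $S[j,r]$ for every $r>t$.
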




\begin{proof}
Let's look at some $i,j$ such that $i <  j$. Assume that $S[i,j]$ contains only zeros. Since $S$ is an incremental sequence of messages,  there must be a subsequence of $S$, $S'$, that contains only ones and appears in $S$ at position $x$ after $j$. For each $x' > x$ there is no natural numbers $n,k>0$ for which there exists a sequence $Seq$ such that $S[i,j] = nSeq$ and $S[j,x'] = kSeq$. The proof for the case in which $S[i,j]$ contains only ones is the same. Now if $S[i,j]$ contains also ones and zeros. Since $S$ is an incremental sequence of messages then there must be a subsequence of $S$, $S'$, of size greater then $2(j - i)$ that contains only ones and appears in $S$ at position $x$ after $j$. Let $x'$ be a position in $S$ in which $S'$ already finished. For each $x'' > x'$ there is no natural numbers $n,k>0$ for which there exists a sequence $Seq$ such that $S[i,j] = nSeq$ and $S[j,x'] = kSeq$ because $S[j,x'']$ contains a subsequence, $S_{sub}$, of size greater than $2(j - i)$ that contains only ones. For $Seq$ to exists, $S_{sub}$ must contain it but $S[i,j]$ contains ones and zeros and $S_{sub}$ contains only ones. So if $S_{sub}$ contains $Seq$ it cannot be that $S[i,j]$ also contains it.
\end{proof}




\begin{definition}[Legal System State]
Let $PR$ be a protocol. Let $S$ be a System State. We say that $S$ is a legal System State (or reachable system state) if there is a partial run $R$ that starts from the initial configuration of $PR$, feasible at the initial network state of $PR$ and ends in $S$.
\end{definition}

\noindent\relem{Lemma23}[repeated]
\lemlegalsys
\erelem

\begin{proof}
First, we know that there is a run that starts from the initial configuration and reached $S$. 
Assume that when we reach $S$ the input sequence, $M$, of $p_{sender}$ is an infinite incremental sequence of messages. 
Let $R_S$ be the run that starts at $C_S$, feasible at $N_S$ and causes each party (except one party, $p' \neq p_{sender}$, that takes no steps in $R$, crashed) to deliver $M$. 
We know that $R_S$ exists because $S$ is legal System State so there must be a run feasible at $S$ satisfying the $SuRB$ conditions even when one party may crash and the input sequence of $p_{sender}$ is $M$. 
Let $C_{R_S}$ be a sequence of the System States obtained by applying $R_S$ on $N_S$. Let $CR$ be a sequence of System States from $C_{R_S}$ such that the next step that follows each of those System States is that $p_{sender}$ reads a value from the external input queue. $CR$ contains an infinite number of configurations. 
We know that the memory of each party is bounded so there must be at least one configuration that repeats itself infinitely many times during $CR$, let's call it $C_{repeated}$. 
Let $CR_{repeated}$ be the sequence that contains all System States of $CR$ that their configuration equals $C_{repeated}$. 
Let's look at the sequence of network matrices appearing in $CR_{repeated}$, by \lemmaref{Lemma16} we know that this sequence contains an infinite Chain. Let $CR_{repeated-chain}$ be the sequence of elements from $CR_{repeated}$ that composes the infinite Chain. $CR_{repeated-chain}$ is a sequence of the form $(C_{repeated}, N_1), ..., (C_{repeated}, N_n), ...$ such that for each $i < j$, $N_i \leq N_j$.

Let $S_{\inf-1} = (C_{repeated-1}, N_1)$ (\ie $S_{\inf-1}$ is the first System State in $CR_{repeated-chain}$). 
Let $x_1$ be an index of a message $m_1$ that is about to be read by $p_{sender}$ after $S_{\inf-1}$. 
Let $S_{\inf-2} = (C_{repeated-2}, N_2)$ be the second System State in $CR_{repeated-chain}$ and let $x_2$ be an index of a message $m_2$ that is about to be read by $p_{sender}$ after $C_{\inf-2}$. 
Let $x_3$ be an index of messages in $m_3$ such that the size of $M[x_1,x_2]$ is less than the size of $M[x_2,x_3]$. 
By \lemmaref{Lemma20}, we know that there is an index $x_4$ such that for each index $x_5 > x_4$ the sub-sequences $M[x_1,x_2]$ and $M[x_2,x_5]$ do not share a common divider. Let $S_{\inf-3} = (C_{repeated-3}, N_3)$ be the System State in $CR_{repeated-chain}$ in which $p_{sender}$ already received the message in position $Max(x_3,x_4)$ as input.

Notice that by the construction of $CR_{repeated-chain}$ the configurations of $S_{\inf-1}$, $S_{\inf-2}$, $S_{\inf-3}$ are the same and the network matrix of $S_{\inf-3}$ is greater than or equal to the network matrix of $S_{\inf-2}$, which is also greater than or equal to the network matrix of $S_{\inf-1}$.

Let's now separate $R$ to four runs. The first, $R_1$, starts at  $C_S$, feasible at $N_S$ and ends in $C_{repeated-1}$. The second, $R_2$, starts at $C_{repeated-1}$, feasible at $N_2$ and ends in $C_{repeated-2}$, the third, $R_3$, starts at $C_{repeated-2}$, feasible at $N_2$ and ends in $C_{repeated-3}$ and the last, $R_4$, starts at $C_{repeated-3}$, feasible at $N_3$ and continues infinitely with some legal run.

By \lemmaref{Lemma18}  and the fact that the network matrices are increasing, after reaching $S_{\inf-2}$ we can run $R_2$ again, reaching a new System State $S_{\inf-2-1}$ such that the network matrix of $S_{\inf-2}$ is less than or equal to the network matrix of $S_{\inf-2-1}$. Now we can run $R_3$, reaching a new System State $S_{\inf-3-1}$. To sum up we constructed a new partial run $R_5 = R_1 + R_2 + R_2 + R_3$ feasible at $N_S$.

By \lemmaref{Lemma18} and the fact that the network matrices are increasing, after reaching $S_{\inf-3}$ we can run $R_2$ again and reach a new System State $S_{\inf-3-2}$ such that the network matrix of $S_{\inf-3}$ is less than or equal to the network matrix of $S_{\inf-3-2}$. To sum up we construct a new partial run $R_6 = R_1 + R_2 + R_3 + R_2$ feasible at $N_S$.

Let $M_{5}$, $M_{6}$ be the sequence of messages delivered by applying $R_5$, $R_6$ on $N_S$ respectively. 
By \lemmaref{Lemma30} we conclude that both $R_5$ and $R_6$ ends in the same System State after applying them on $N_S$ so $S_{\inf-3-2}$ = $S_{\inf-3-1}$. Now if $p'$ wakes up in $S_{\inf-3-1}$ (or $S_{\inf-3-2}$) it cannot distinguish between $R_5$ and $R_6$. There is a scenario in which $p'$ delivers $M_{5}$ while all other parties deliver $M_{6}$ and there is a scenario in which $p'$ delivers $M_{6}$ while all other parties deliver $M_{5}$. So it is enough to show that $M_{5}$ and $M_{6}$ are different. Let $M_{R_2}$ be the sequence given to $p_{sender}$ in $R_2$ and let $M_{R_3}$ be the sequence given to $p_{sender}$ in $R_3$. For $M_{5}$ and $M_{6}$ to be the same, $M_{R_2} + M_{R_2} + M_{R_3}$ must be the same as $M_{R_2} + M_{R_3} + M_{R_2}$ (notice that by construction of $R_2$ and $R_3$, the size of $M_{R_2}$ is less than the size of $M_{R_3}$). By \lemmaref{Lemma19} we know that for $M_{R_2} + M_{R_2} + M_{R_3}$ to be the same as $M_{R_2} + M_{R_3} + M_{R_2}$, those two sequences must share a common divider. But we constructed those two sequences such that this scenario will not happen and so $M_{5}$ and $M_{6}$ must be different.
\end{proof}

\subsection{SuRB Impossibility with Bounded links}
The proof of \theoremref{thm21} uses only \lemmaref{Lemma23},
therefore in order to prove the impossibility result, it is enough to show that \lemmaref{Lemma23} holds when links are bounded.

\begin{lemma}\label{Lemma21}
Let $M$ be an infinite incremental sequence of messages. For each index $x_1$ there are two sub-sequences of $M$, $M_{0,{\ell-repeated}}$ and $M_{1,{t-repeated}}$, such that the first time those sequences appears in $M$ is after $x_1$.
\end{lemma}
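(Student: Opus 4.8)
The plan is to reduce the statement to two facts about incremental sequences: that runs of each symbol grow without bound, and that only finitely many positions precede the fixed index $x_1$, so that a sufficiently long run of either symbol cannot yet have appeared.

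First I would establish unboundedness of runs. Since $M$ is an infinite binary sequence, its first position carries some value $a$, giving a length-$1$ block of $a$ ending at position $1$. Applying the defining property of an incremental sequence repeatedly, I build an infinite chain of blocks whose symbols alternate $a, 1-a, a, \dots$ and whose lengths strictly increase: a length-$\ell$ block of $x$ forces a strictly later length-$\ell'$ block of $1-x$ with $\ell' > \ell$, and iterating this flips the symbol at each step. Consequently both $0$ and $1$ occur in blocks of arbitrarily large length; in particular $M$ is not eventually constant and each symbol appears infinitely often.

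Next, fix the index $x_1$ and let $\rho$ be the maximal run of equal symbols containing position $x_1$, ending at some position $e$ (finite, because $M$ is not eventually constant). The key observation is that every block of a single symbol that begins at a position $\le x_1$ is contained in positions $1, \dots, e$: a block of $\rho$'s symbol lies inside $\rho$ or in an earlier run of that same symbol, while a block of the opposite symbol must end strictly before $\rho$ begins (a position $\le x_1$ carrying the opposite symbol cannot lie inside $\rho$). Hence every such block has length at most $e$.

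Finally I would set $\ell = t = e + 1$. By the unboundedness step there exist a block of $\ell$ consecutive $0$'s and a block of $t$ consecutive $1$'s somewhere in $M$, so $M_{0,{\ell-repeated}}$ and $M_{1,{t-repeated}}$ do occur in $M$; by the containment step no block of $e+1$ identical symbols can begin at any position $\le x_1$, so the first occurrence of each of these two sub-sequences begins, and hence ends, strictly after $x_1$, which is exactly the claim. I expect the only delicate point to be the unboundedness argument, specifically verifying that the chain of forced blocks genuinely alternates symbols so that both $0$ and $1$ receive arbitrarily long runs rather than only one of them; once $e$ is fixed, the containment step (handling a straddling run of either symbol) is routine.
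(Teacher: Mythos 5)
Your proof is correct and takes essentially the same route as the paper's: both arguments establish that each symbol occurs in arbitrarily long blocks, observe that any block beginning at a position $\le x_1$ has bounded length, and then choose a longer block of each symbol whose first occurrence must therefore start after $x_1$. You simply make explicit two points the paper asserts without detail, namely the alternating-blocks induction giving unbounded runs of \emph{both} symbols (the paper just says ``we can assume'') and the concrete bound $e$ coming from the maximal run containing $x_1$.
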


\begin{proof}
Because $M$ is an infinite incremental sequence of messages we can assume that there is an infinite increasing sequence $L = \ell_1,\ell_2, ...$ such that  $M_{0,{\ell-repeated}}$ appears in $M$ for each $\ell \in L$ and there is an infinite increasing sequence $T = t_1,t_2, ...$ such that  $M_{1,{t-repeated}}$ appears in $M$ for each $t \in T$. There is only a finite number of messages that appear before $x_1$ so there is a finite number of $\ell \in L$ such that $M_{0,{\ell-repeated}}$ starts before $x_1$ and so we can always find $\ell \in L$ such that the first time that $M_{0,{\ell-repeated}}$ appears in $M$ is after $x_1$. Also there is a finite number of $t \in T$ such that $M_{1,{t-repeated}}$ starts before $x_1$ and so we can always find $t \in T$ such that the first time that $M_{1,{t-repeated}}$ appears in $M$ is after $x_1$.
\end{proof}

The following lemma is a general claim about sequences. These sequences may not reflect any possible run. When it is used later we will make sure that the sequences belong to feasible runs.

\begin{lemma}\label{Lemma22}
Let $M$ be an infinite sequence of messages.
Let $M'$ be the same sequence as $M$ without $M_{sub}$, then
$M$ is different from $M'$.
\end{lemma}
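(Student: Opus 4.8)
The plan is to read the statement in the form it is actually used: $M$ is the incremental sequence fixed earlier (so $M$ is a $\{0,1\}$-sequence whose runs of equal symbols grow without bound), and $M_{sub}$ is a \emph{finite, nonempty, contiguous} block of $M$ that is deleted to produce $M'$. This reading is forced, since for an arbitrary $M$ the claim is false — deleting a block of $0$'s from the constant sequence $0,0,0,\dots$ changes nothing — so the incremental structure is exactly the hypothesis that must be exploited, which is why the neighbouring lemmas confine themselves to incremental sequences. Write $M_{sub}=M[i,i+k]$ with $k\ge 1$, so that $M'[m]=M[m]$ for $m<i$ and $M'[m]=M[m+k]$ for $m\ge i$.

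First I would reduce $M=M'$ to pure periodicity of a suffix. By the displayed description of $M'$, the equality $M=M'$ holds exactly when $M[m]=M[m+k]$ for every $m\ge i$; that is, the suffix of $M$ from position $i$ is periodic with period word $M_{sub}=M[i,i+k]$, and hence $M$ is eventually periodic. So it suffices to show that an incremental sequence is never eventually periodic.

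Next I would contradict the incremental property. An eventually periodic sequence with period $k$ is either eventually constant or has every run of equal symbols inside its periodic tail bounded by some $B=B(k)$ (a run can straddle at most two consecutive copies of the period, so $B\le 2k$ suffices once the period word is not a single repeated symbol). In the eventually-constant case $M$ would contain only finitely many occurrences of one of the two symbols, which an incremental sequence cannot; in the bounded-run case, I would apply \lemmaref{Lemma21} with index $x_1=i+k$ together with the defining growth of the incremental sequence to produce a run of a single symbol of length greater than $B$ whose first occurrence is past position $i+k$, hence lying entirely inside the periodic tail. Either outcome contradicts eventual periodicity, so $M\ne M'$.

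The only delicate point is pinning down the two quantitative facts so that they meet: that period-$k$ periodicity caps the tail runs by an explicit $B(k)$ (with the eventually-constant case separated out), and that incrementality delivers a run exceeding that cap at a position beyond $i+k$. Both are routine once the reduction ``$M=M'$ iff the suffix from $i$ is periodic with period word $M_{sub}$'' is recorded; the conceptual content lies in that reduction plus the observation that the incremental hypothesis \emph{must} be invoked. If one prefers to stay inside the paper's common-divider machinery, the same contradiction follows from \lemmaref{Lemma20}: taking $j=i+k$, eventual periodicity would make the early block $M[i,j]$ and the arbitrarily long later block $M[j,r]$ (with $r=j+mk$) share the common divider $M_{sub}$, which \lemmaref{Lemma20} explicitly forbids for incremental sequences once $r$ is large.
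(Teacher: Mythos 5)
Your proof is correct, but it takes a genuinely different route from the paper's. You first observe (rightly) that the statement as written omits its real hypotheses --- $M$ incremental, $M_{sub}$ a nonempty finite contiguous block --- and then reduce the equality $M=M'$ to the structural fact that the suffix of $M$ from position $i$ is periodic with period word $M_{sub}$, which you refute by showing incremental sequences are never eventually periodic (bounded runs in a periodic tail versus the arbitrarily long runs that incrementality forces, with the eventually-constant case split off; your fallback via \lemmaref{Lemma20} with $j=i+k$ and $r=j+mk$ is also sound). The paper instead argues directly with first occurrences: by \lemmaref{Lemma21} it picks a run $M_{i,\ell\text{-repeated}}$ whose \emph{first} appearance in $M$ is at some $x_1$ after the end of $M_{sub}$; deleting the nonempty block shifts that occurrence to $x_1-k$ in $M'$, so the first occurrence $x_2$ in $M'$ satisfies $x_2<x_1$, and minimality of $x_1$ then yields an explicit index $x_3\in[x_2,x_2+\ell]$ where $M$ is not $m_i$ while $M'$ is, hence $M\neq M'$. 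The paper's argument is shorter and pinpoints a concrete differing position using only \lemmaref{Lemma21}; yours is more conceptual and more general --- it isolates the clean equivalence ``deletion-invariance iff suffix periodicity,'' works for any sequence that is not eventually periodic, and makes transparent exactly why the incremental hypothesis is indispensable (your constant-sequence counterexample shows the unrestricted statement is false, a gap the paper's formulation papers over). Your quantitative bound $B\le 2k$ is safe; in fact $B\le k-1$ already works, since a constant window of length $k$ in the periodic tail is a cyclic rotation of the period word and would force it to be constant.
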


\begin{proof}
Let $x$ be the index in $M$ at which $M_{sub}$ ends.
We know that there is a sub-sequence of $M$, $m_{i,{\ell-repeated}}$, that appears in $M$ for the first time after $x$ (\lemmaref{Lemma21}). 
Let $x_1$ be the index at which $M_{i,{\ell-repeated}}$ first appears in $M$. In $M'$ we delete the sequence $M_{sub}$ which comes before $M_{i,{\ell-repeated}}$ and is not empty, so the first time $M_{i,{\ell-repeated}}$ appears in $M'$ must be before it appears in $M$. 

Let $x_2$ be the index in $M'$ at which $M_{i,{\ell-repeated}}$ first appears in $M'$ (remember that by definition $M_{i,{\ell-repeated}}$ contains only messages $m_i$). There must be an index $x_3$ such that $ x_2 \leq x_3 \leq x_2 + \ell$ such that the value of $M$ at $x_3$ is not $m_i$, otherwise, the first time the sequence $M_{i,{\ell-repeated}}$ appears in $M$ is at $x_2$, where $x_2 < x_1$, but we assumed that the first time $M_{i,{\ell-repeated}}$ appears in $M$ is at $x_1$ so we reach a contradiction. So there must be an index $x_3$ such that the value of $M$ in this index is different from the value of $M'$ at this index and so it must be that $M$ is different from $M'$.
\end{proof}

\noindent\relem{Lemma23}[repeated]
\lemlegalsys
\erelem

\begin{proof}
First, we know that there is a run that starts from the initial configuration and reached $S$. 
Assuming that when we reach $S$ the input sequence, $M$, of $p_{sender}$ is an infinite incremental sequence of messages. 
Let $R_S$ be the run that starts at $C_S$, feasible at $N_S$ and causes each party (except one party, $p' \neq p_{sender}$, that takes no steps in $R$, crashed) to deliver $M$. 
We know that $R_S$ exists because $S$ is legal System State so there must be a run feasible at $S$ satisfying the $SuRB$ conditions even when one party may crash and the input sequence of $p_{sender}$ is $M$. 
Let $C_{R_S}$ be a sequence of the System States obtained by applying $R_S$ on $N_S$. Let $CR$ be a sequence of System States from $C_{R_S}$ such that the next step that follows each of those System States is that $p_{sender}$ reads a value from the input queue. $CR$ contains an infinite number of system states. 
We know that the memory of each party is bounded and so is the links, so there must be at least one system state that repeats itself infinitely many times during $CR$, let's call it $S_{\inf} = (C_{\inf},N_{\inf})$.

\begin{figure}[t]
\center
\includegraphics[scale=0.6]{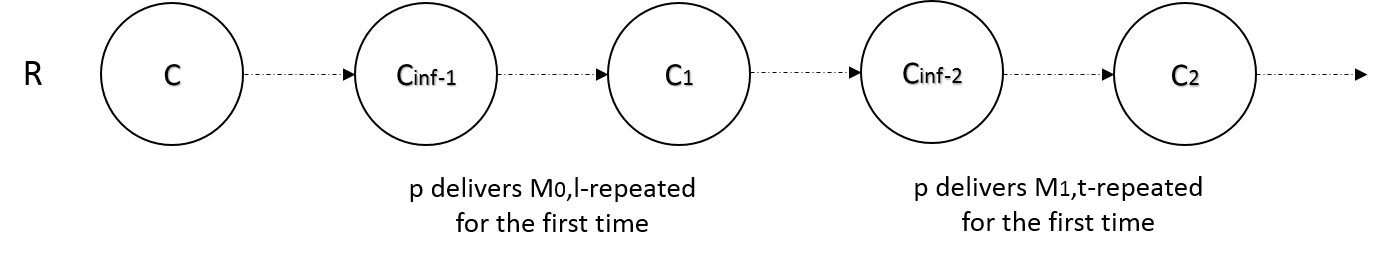}
\caption{In $R$ each honest party delivers an infinite incremental sequence of messages}
\label{fig:figure4}
\end{figure}

\begin{figure}[t]
\center
\includegraphics[scale=0.6]{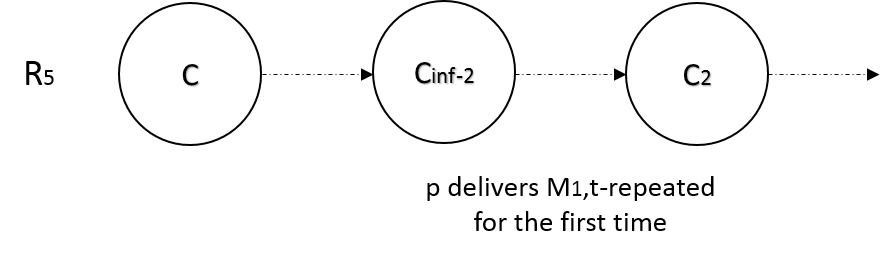}
\caption{Creating $R_5$ by concatenating $R_1$, $R_3$ and $R_4$ }
\label{fig:figure6}
\end{figure}

Let $S_{\inf-1} = (C_{\inf-1},N_{\inf-1})$ be the first time $S_{\inf}$ appears in $R_S$. Let $x_1$ be an index of a message, $m_0$, in $M$ such that in $S_{\inf-1}$, $p_{sender}$ didn't already read $m_0$. From \lemmaref{Lemma21} we know that there exists $\ell > 0$ and a sequence $M_{0,{\ell-repeated}}$ in $M$ such that the first time that $m_{0,{\ell-repeated}}$ appears in $M$ is after $x_1$. Let $S_1$ be a system state in $R_S$ in which $p_{sender}$ already read all the messages in $M_{0,{\ell-repeated}}$ and let $S_{\inf-2} = (C_{\inf-2},N_{\inf-2})$ be the first time $S_{\inf}$ appears in $R_S$ after $S_1$. Let $x_2$ be an index of a message, $m_1$, in $M$ such that in $S_{\inf-2}$, $p_{sender}$ already read $m_1$. From \lemmaref{Lemma21} we know that there exists $t > 0$ and a sequence $M_{1,{t-repeated}}$ in $M$ such that the first time that $M_{1,{t-repeated}}$ appears in $M$ is after $x_2$. Let $S_2 = (C_2,N_2)$ be a system state in $R_S$ in which $p_{sender}$ already read all the messages in $M_{1,{t-repeated}}$ (see  \figureref{fig:figure4}).

Let's now separate $R$ to four runs. The first, $R_1$, starts at  $C_S$, feasible at $N_S$ and ends in $C_{\inf-1}$. The second, $R_2$, starts at $C_{\inf-1}$, feasible at $N_2$ and ends in $C_{\inf-2}$, the third, $R_3$, starts at $C_{\inf-2}$, feasible at $N_2$ and ends in $C_{2}$ and the last, $R_4$, starts at $C_{2}$, feasible at $N_3$ and continues infinitely as $R_S$.

\begin{figure}[t]
\center
\includegraphics[scale=0.6]{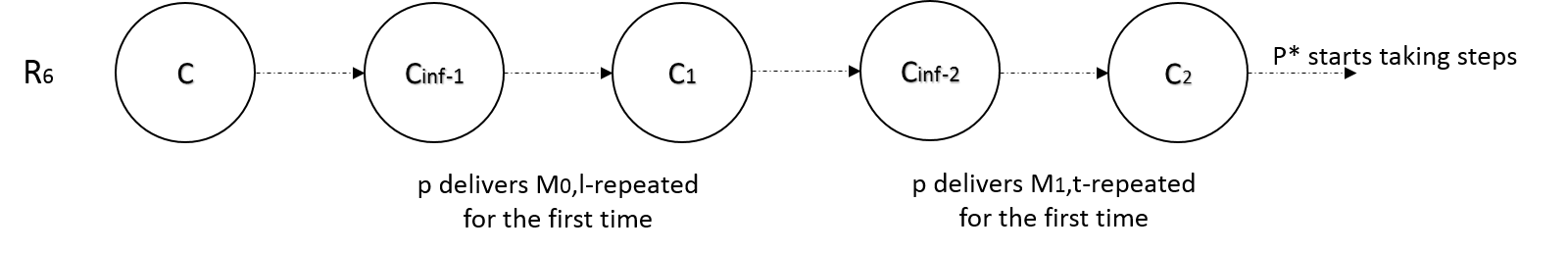}
\includegraphics[scale=0.6]{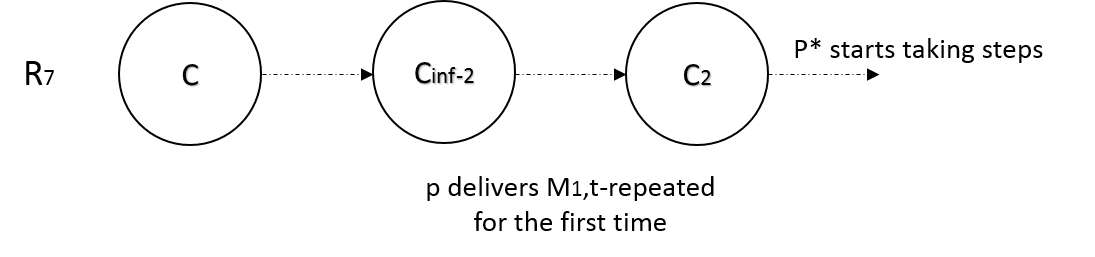}
\caption{Cut $R$ and $R_5$ in $C_2$ and receiving $R_6$ and $R_7$}
\label{fig:figure7}
\end{figure}

Now we can concatenate $R_1$, $R_3$, $R_4$ and receive $R_5$ feasible at $N_S$ such that in $N_SR_5$, $p$ does not deliver the sequence of messages that it delivered during $R_2$ which is not empty (see \figureref{fig:figure6}).

Now let's cut $R$ and $R_5$ in configuration $C_2$ and receive two new partial runs such that $p'$ is honest and does not take steps until we reach configuration $C_2$. We call those partial runs $R_6$ and $R_7$, respectively (see \figureref{fig:figure7}).

Let $M_{5}$, $M_{6}$ be the sequence of messages that $p$ delivers in $N_SR_5$, $N_SR$, respectively. Let $M_{R_2}$ be the sequence of messages that $p$ delivers during $R_2$. Now if $p'$ wakes up in $S_{2}$  it cannot distinguish between $R_5$ and $R$. There is a scenario in which $p'$ delivers $M_{5}$ while all other parties deliver $M_{6}$ and there is a scenario in which $p'$ delivers $M_{6}$ while all other parties deliver $M_{5}$. So it is enough to show that $M_{5}$ and $M_{6}$ are different. From \lemmaref{Lemma22} we know that those to sequences must be different so we are done.
\end{proof}

\commentout{
\rethm{thm21}[repeated]
\thmsurbtxt
\erethm
\begin{proof}
\ms{this is the old proof, it also appears in the article so can we remove it?}
Assume that there is a protocol that satisfies the SuRB properties even when one party is prone to crash. Let $PR$ be such a protocol. 
Let $S_{init} = (C_{init},N_{init})$ be the initial System State of $PR$. 
By lemma \ref{Lemma23}, there is a run $R$ starts from $C_{init}$ and is feasible at $N_{init}$ such that when applying $R$ on $N_{init}$ there is an honest party that does not deliver the same sequence as all other parties. 
If the changes in the delivered sequence is infinite we are done, otherwise lets look at the System State, $S_1 = (C_1,N_1)$, in which the changes in the delivered massages stops.
We know that $S_1$ is a legal System State, because we have a partial run that starts at the initial configuration, feasible at the initial Network State and reaches $S_1$. 
New we can apply again lemma \ref{Lemma23} on $S_1$ and obtain another partial run, $R_1$, that starts at the $C_1$, feasible at $S_1$ and contains two honest parties that do not deliver the same sequence of messages. 
We can do this process again and again. 
If finally we reach a point in which the changes in the delivered sequence is infinite we are done, otherwise we will construct an infinite run in which the changes in the delivered sequence is infinite.
\end{proof}

\section{Meaning and Consequences of the Impossibility Results}
\subsection{Do our Impossibility Results Contradict the Result Achieved in~\cite{SelfStabilizePaxos}}
\subsubsection{Practical Self-Stabilize Paxos overview~\cite{SelfStabilizePaxos}}
The main problem of stabilizing the original Paxos (\sectionref{PaxosConsensusandReliableBroadcast}) is the proposal number, which in the original Paxos is assumed to be an unbounded integer. If the system starts with some known configuration then it is reasonable to assume that, but in the case of self-stabilization the system could start in a configuration in which the proposal number variable are set to $max\_int$, and so we will not be able to increase it any more.

A finite labelling scheme intends to solve this problem. The main idea in such a system is that given $n$ elements, we can always find an element that is greater than each of those $n$ elements. This way we will always be able to increase the proposal number. The idea is to make sure the system will always reach a point in which the rest of the run is identical to the original Paxos and so will achieve all the properties of the original Paxos. The proposal number in self-stabilize Paxos is a tuple of size $n$ (the number of parties) such that each index in the tuple is also a tuple of size 5. The first element in the inner tuple is a finite labelling scheme, the second and the third are the step and the trial (such as in Paxos), the fourth is the proposer id, and the last one is called the cancelling label. This is the label that is greater than the label in the first index of the tuple (or null if no such label exists). If a proposer wants to increase a label, it first tries to increase the step or the trial, but if those are equal to max\_int it increases the label (remember that it is always possible to increase the label). The proof shows that eventually the system will reach a state in which the proposal number of each party has the same label and the step and trial will be small compared to the max\_int value. From this point, the protocol continues as the original Paxos in \sectionref{PaxosConsensusandReliableBroadcast}.

\subsubsection{Relation Between Practical Self-Stabilize Paxos and our Impossibility Results}
In~\cite{SelfStabilizePaxos} Peva Blanchard, Shlomi Dolev, Joffroy Beauquier and Sylvie Dela??t describe how they can build a self-stabilizing Paxos and receive a self-stabilizing distributed state machine in the bounded model. Paxos promise  that when there is a single honest leader all honest parties will agree on the same value. Replicated state machine using multiple instances of Paxos such that in the $i$-th Paxos instance all honest parties decide on the $i$-th action to perform and so all honest parties perform the exact same action in the exact same order.

It looks like this result contradicts the impossibility result of the SuRB problem. In the SuRB problem we have a single leader that wants to send a sequence of values to all honest parties. We showed that it is impossible to achieve a protocol that satisfies the SuRB conditions so how in~\cite{SelfStabilizePaxos} Peva Blanchard, Shlomi Dolev, Joffroy Beauquier and Sylvie Dela??t show a protocol that achieves a replicated state machine in which the honest leader sends a sequence of actions (values) to all honest parties.

Although it looks like a contradiction it actually is not. Our result actually reinforces the protocol presented in~\cite{SelfStabilizePaxos}. In~\cite{SelfStabilizePaxos} they didn't consider the self-stabilization as the original definition of stabilization instead they define what they called practically self-stabilization. In practically self-stabilization we do not require that the stabilize run lasts forever but only long enough for any concrete system'??s timescale. If an honest party does not participate all along the stabilize run then it is not guaranteed that he will perform the exact same action as all other honest parties. The practically self-stabilization approach was first introduced in~\cite{WhenConsensusMeetsSelfStabilization} where a practically self-stabilizing consensus protocol achieved in an asynchronous shared memory environment.

In the self-stabilizing Paxos the run is divided into epochs. Each epoch is considered as a stabilized run. When an epoch terminates, the history of the replicated state machines is  cleared and a blank history is started. The meaning of the stabilization in this article is that the system always reaches a very long epoch. This epoch is considered infinite, it is like counting from one to $2^{64}$. It is mentioned that counting from one to $2^{64}$ and increasing the counter each nanosecond will take us about 500 years to finish. So if the system always converges to such an epoch, practically we do not have to worry that some honest party will miss this long epoch and sleep for 500 years. But if some honest party will  sleep for 500 years he will miss the entire epoch and the history of the actions performed in this epoch will disappear from the system.
This means that the sleepy party will not be able to overcome the actions that have been lost, just like we proved in the SuRB impossibility result. In addition the convergence time of the protocol might be as long as the stabilize run. If the run starts from some configuration in which the epoch is in its youth, it will take about 500 years for the epoch to terminate which means that in this time our history will contain fake values, those values are located in the history because of transient faults. It will take about 500 years for them to disappear from the system.
\dd{should we discuss the convergence time of that paper?}
\ms{fixed}


\subsection{The  SuRB Impossibility Insight}
The  SuRB impossibility is very interesting. It shows that self-stabilization in not a trivial property that can be achieved under any conditions. It also shows that there is no technique that takes a distributed protocol and stabilizes it. Notice that without the transient fault and the need to self-stabilize we could solve this problem with the original Reliable Broadcast. Also when Byzantine does not come into consideration we could also solve this problem by using the SuRB protocol we showed earlier in this article. 
Additionally it is clear that without  asynchrony we wouldn't be able to prove the impossibility result. 
So all these three properties are necessary to prove the impossibility result and without any one of them the problem is  solvable.

One more important insight must be clearly specified. Unlike the SuRB problem the  SuRB problem does not require  each honest party to deliver an infinite sequence of messages. Theoretically if someone could find a protocol that satisfies the  SuRB conditions and stop delivering messages in some point, then it is absolutely OK. The  SuRB impossibility result shows that no such protocol exists. But it is important to understand that the requirement that each honest party will deliver an infinite sequence of messages does not come from the problem definition as in the SuRB problem but from the system model. This is why the impossibility result here is more surprising than the impossibility result of the SuRB problem. In the SuRB problem we can intuitively understand that we cannot save the delivered sequence forever and at some point we must forget what messages were delivered, while in the  SuRB this is not the case, we only have to agree on one message and stop. Our proof shows that we cannot agree on a single message when we consider asynchrony, self-stabilization, and Byzantine assumptions.

\newpage
\section{Further Research}
There are further research questions that can be asked regarding the following result, for example:

\beginsmall{enumerate}

\item What is the minimal assumption that is required in order to make those problems solvable?
\item Is it possible to build a self-stabilizing mechanism that replaces digital signature (such as Lamport uses the original Reliable Broadcast to replace digital signature~\cite{ByzantinePaxos})?
\item Is it possible to Byzantine the practically self-stabilize Paxos presented in~\cite{SelfStabilizePaxos}? 
\end{enumerate}
}

\section{ Self-Stabilize Reliable Broadcast}\label{sec:ssurb}
\hide{
\noindent\rethm{thm:linklayer}[repeated]
\thmlinklayer
\erelem
\begin{proof}
\begin{claim}[Link Layer Claim 1 - Sending termination (condition 1)]
If $P_{1}$ is honest then the sending function eventually terminates.
\end{claim}

The send function is sending the given message $\bar c + 1$ times. So eventually it terminates.

\begin{claim}[Link Layer Claim 2 - Non-triviality (condition 3)]
If $P_{1}$ and $P_{2}$ are honest and $P_{1}$ sends $m$  infinitely many times in a row then $P_{2}$ will deliver $m$ infinitely many times.
\end{claim}

From the link properties we know that if $P_{1}$ sends $m$ infinitely many times in a row then $P_{2}$ will receive $m$ infinitely many times. Which means that $P_{2}$ will see $m$ $\bar c + 1$ times in a row infinitely many times and will deliver $m$ infinitely many times.

\begin{claim}[Link Layer Claim 3 - No duplication (condition 2.b)]
Let $P_{1}$ and $P_{2}$ be two honest parties. Assume $P_{1}$ sends $m$ to $P_{2}$ $k$ times. $P_{2}$ will deliver $m$ no more than $k$ times (unless $m$ was delivered by $P_{2}$ as a ghost message but this can happen no more than 4 times as we will see later).
\end{claim}

If $P_{1}$ sends $m$ to $P_{2}$ $k$ times  then the message is  actually being sent $m$ $(\bar c + 1)k$ times. $P_{2}$ delivers $m$ when it sees it $\bar c + 1$ times in a row. The communication layer does not duplicate messages so $P_{2}$ will receive $m$ no more than $(\bar c + 1)k$ times and will deliver $m$ no more than $k$ times.

\begin{claim}[Link Layer Claim 4 - No reordering (condition 2.d)]
Let $P_{1}$ and $P_{2}$ be two honest parties and let $m_{1}$ and $m_{2}$ be messages sent by $P_{1}$ to $P_{2}$. If $P_{1}$ sends $m_{1}$ before $m_{2}$ then $P_{2}$ will not deliver $m_{2}$ before $m_{1}$.
\end{claim}

Assume that $P_{1}$ sends $m_{2}$ after it sends $m_{1}$. Once $P_{1}$ starts sending $m_{2}$, it stops sending $m_{1}$ so when $P_{1}$ starts sending $m_{2}$ there is at most $\bar c$ $m_{1}$ messages that may arrive to $P_{2}$ after $m_{2}$. So when $m_{2}$ first arrives to $P_{2}$ there is not enough messages left in the communication link for $m_{1}$ to be delivered by $P_{2}$. So if $m_{1}$ was delivered by $P_{2}$ it must be before $m_{2}$ is delivered.

\begin{claim}[Link Layer Claim 5 - At most three ghost messages (condition 2.c)]
Let $P_{1}$ and $P_{2}$ be two honest parties and let $m_i (i = 1, ... ,  4)$ be four messages delivered after the last transient fault by $P_{2}$ (in this order), then $m_4$ must be a real message that was sent by $P_{1}$.
\end{claim}

\noindent\textbf{First Ghost Message:}
Because of the transient fault $P_{2}$ can start believing that it has to deliver $m_{1}$, so  $m_{1}$ may be fake.\\
\\
\textbf{Second Ghost Message:}
Because of transient fault $P_{2}$ can start the run when its $LastMessage$ are set to $m_{2}$ and its counter is set to 1 (or higher). Also there could be $\bar c$ $m_{2}$ messages in the communication link. So $m_{2}$ may also be fake.\\
\\
\textbf{Third Ghost Message:}
Because of the transient fault $P_{1}$ can start the run inside the loop of the send function believing it is sending $m_3$ and so $m_3$ may be received by $P_{2}$. So the third message may also be fake.\\
\\ 
\textbf{The Fourth message cannot be Fake:}
Assume $m_4$ is also fake. That means that $P_{1}$ did not send $m_4$. $m_4$ must be in the system somehow before it was received by $P_{2}$. It is not in the $LastMessage$ variable of $P_{2}$ (because it was caught by $m_3$). So $m_4$ must arrive from the communication link but in this case there are at most $\bar c$ $m_4$ messages that may arrive so $m_4$ cannot be delivered by $P_{2}$. And so $m_4$ cannot be a fake message.
\end{proof}

\noindent\rethm{thm:SuRB-crash}[repeated]
\thmsssurb
\erelem

\begin{proof}

We first prove the S2 property of SuRB:
If the sender is honest and the input stream contains a single message $m$ repeated forever, then each honest party must have a suffix of delivered messages that contains only $m$.
By the link layer properties, we know that at most three messages could be ghost messages. So from the fourth message on, all the messages delivered by each honest party must be real. $p_{sender}$ is honest so he sends $m$ infinitely many times. 
The link layer properties imply that eventually each honest party will receive $m$ and only $m$, which means that eventually each honest party will deliver $m$ and only $m$. In addition, after each honest party delivered three messages, at the latest, we can be sure that all following messages that will be delivered by every honest party will be $m$ and only $m$.

We now prove the S1 property.
If $p_{sender}$ is honest,  from S2 we  conclude that eventually each honest party will deliver only the message that has been given to the sender via the input stream and so all honest parties will share a common suffix. 

If $p_{sender}$ is has crashed (dishonest) then eventually each honest party will stop receiving messages from $p_{sender}$ and will stop delivering messages. This means that there is no party that delivers an infinite number of messages and so condition S1 clearly holds, since it considers only the case in which there is an honest party delivering an infinite number of messages.
\end{proof}
}

\subsection{ Self-Stabilize SuRB Impossibility with a Single Byzantine Party}\label{sec:ssurb-lower-apndx}


\def\lemnewmsg{
Assume that the input stream given to $p_{sender}$ is a stream of a single message repeated forever. Let $PR$ be a protocol that satisfies the conditions of the  SuRB problem and is resistant to one Byzantine party. Let $S = (C,N)$ be an arbitrary System State. For each party $p' \in P$,  $p'\not=p_{sender}$, there is a run $R$ that starts at $C$ and is feasible at $N$ such that $p'$ takes no steps and each honest party delivers a new message.
}

\begin{lemma}\label{Lemma6}
\lemnewmsg
\end{lemma}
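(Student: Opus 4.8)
The plan is to obtain the run $R$ as a finite prefix of a single, fair, infinite run in which $p'$ plays the role of the (at most) one faulty party while $p_{sender}$ behaves honestly. Since $p'\neq p_{sender}$, I can let $p_{sender}$ follow the protocol, reading its single repeated input $m$ and broadcasting it forever, while the scheduler simply never selects $p'$. A party that takes no steps is a special case of a Byzantine (indeed, of a crashed) party, so this is a legitimate execution with at most one faulty party; and because $PR$ is self-stabilizing it must satisfy the SuRB conditions when started from the arbitrary configuration $C$ and network state $N$.

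First I would construct one concrete infinite run $\widehat{R}$ from $C$, feasible at $N$: schedule $p_{sender}$ and all parties other than $p'$ in a fair round-robin, at each of the sender's turns reading $m$ and broadcasting it via the protocol, and at every party's turn delivering (consuming) a currently pending incoming message whenever one is available and otherwise receiving nothing. The party $p'$ is never scheduled. This run is feasible at $N$ by construction, since every received message is pending when it is consumed, and it is fair in the sense that $p_{sender}$ sends $m$ infinitely often, so by the link assumption that a message sent infinitely often arrives infinitely often, the messages from $p_{sender}$ keep reaching the other parties.

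Next I would invoke property S2. In $\widehat{R}$ the sender is honest and broadcasts the infinite sequence $m,m,\dots$, and the only non-honest party is $p'$, so S2 applies: every honest party (that is, every party other than $p'$) shares a non-trivial common suffix with the input sequence of $p_{sender}$. Because the input sequence is infinite, every suffix of it is infinite; hence the common suffix is infinite, and each honest party delivers infinitely many messages during $\widehat{R}$. In particular there is a finite point after which every honest party has already delivered at least one message. I would take $R$ to be the prefix of $\widehat{R}$ up to and including the first configuration by which all honest parties have delivered a message. This $R$ starts at $C$, is feasible at $N$, contains no step of $p'$, and makes every honest party deliver a new message, as required.

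The main obstacle is not the mechanical construction but the justification that S2 can legitimately be applied at an \emph{arbitrary} state $S$ and forces genuine progress. The three delicate points are: (i) that self-stabilization lets us treat the arbitrary pair $(C,N)$ as an initial state to which the SuRB guarantee applies; (ii) that a perpetually silent $p'$ is a permissible single fault, subsumed by resistance to one Byzantine party; and (iii) that the suffix promised by S2 is necessarily \emph{infinite} rather than merely nonempty — this is precisely what excludes a degenerate execution in which some honest party delivers nothing new, and it is why a finite prefix of $\widehat{R}$ already suffices to force one new delivery at every honest party.
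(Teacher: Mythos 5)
Your proof is correct and is essentially the paper's own argument: both treat the arbitrary pair $(C,N)$ as an initial state via self-stabilization, count the perpetually silent $p'$ as the one permitted Byzantine fault, and invoke condition S2 --- whose common suffix with the infinite input stream is necessarily infinite under the paper's definition of suffix --- to force every honest party to deliver, then take a finite prefix. The paper merely phrases this contrapositively (assuming every run from $S$ with $p'$ silent leaves some honest party without a new delivery, and deriving a violation of S2), whereas you construct one fair run directly and truncate it; the substance is the same.
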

\begin{proof}
Assume that $p'$ is Byzantine and takes no steps at all. $PR$ is tolerant to one Byzantine party so it must satisfy the  SuRB conditions even when $p'$ takes no steps ($p'$ might be Byzantine). 
Assume that there exists a System State $S = (C,N)$ such that for each run $R$ that starts at $C$ and is feasible at $N$ there is a party $p$ that does not deliver a new message. 
This contradicts condition 2 of  SuRB, because each honest party must have a suffix of delivered messages that contains only the input of $p_{sender}$, the run may start at $S$ and so $p$ will not deliver any messages at all in particular $p$ will not deliver the message sent by $p_{sender}$. So from any System State $S$ there is always a run $R$ such that $p'$ takes no steps in $R$ and each honest party delivers a new message.
\end{proof}

\noindent\relem{Lemma7}[repeated]
\leminf
\erelem
\begin{proof}
This is a direct result of \lemmaref{Lemma6}. Let $S = (C,N)$ be an arbitrary System State. We know that there is a run $R_1$ that starts at $C$ and is feasible at $N$ which causes each honest party to deliver a new message and $p'$ takes no steps in $R_1$. Let $C_{R_1}$ be the configuration in $R_1$ in which each honest party already delivered a new message and let $N_{R_1}$ be the corresponding network state. 
Again from \lemmaref{Lemma6} we know that there is a run $R_2$ that starts in $C_{R_1}$ and is feasible at $N_{R_1}$ which causes each honest party to deliver a new message and $p'$ takes no steps in $R_2$. 
Let $C_{R_2}$ be the configuration in $R_2$ in which each honest party already delivered a new message and let $N_{R_2}$ be the corresponding network state. 
We can now repeat the same arguments over and over again and receive $R_3, ...,  R_n, ...$. 
So we have an infinite sequence of runs such that each honest party delivers at least one new message in each run and each run starts with the same configuration in which the last run ended. 
In addition for each $i$ the network state reached by applying $R_i$ on $N_i$ is $N_{i+1}$. 
We can now concatenate all these runs and receive a new run that starts at $C$ and is feasible at $N$ in which each honest party delivers an infinite number of messages and $p'$ takes no steps in it. 
\end{proof}
\noindent\relem{Lemma8}[repeated]
\lemconcat
\erelem

\begin{proof}
If $p_{sender}$ is Byzantine it can change its internal state whenever it wants, alternatively, by using a transient fault we can also change the internal state of $p_{sender}$ whenever we want. Let's look at $R_1$ and assume that when we reach configuration $c^{(1)}_n$, $p_{sender}$ changes its internal state to be the same as in $c^{(2)}_1$ (could be because $p_{sender}$ is Byzantine or because a transient fault happened).
Let's call this new configuration $c^{(1)}_{fake}$ and let's call this new run $R_{fake}$. 
Notice that we assumed that $c^{(1)}_n$ and $c^{(2)}_1$ are different only in the internal state of $p_{sender}$, so we can conclude that $c^{(1)}_{fake} = c^{(2)}_1$. 
We assumed that $R_2$ is feasible at $N_2$. In $R_{fake}$ we only changed the internal state of $p_{sender}$, this step does not change the network state, so the resulting network state of applying $R_{fake}$ on $N_1$ is also $N_2$. Now we can concatenate $R_{fake}$ and $R_2$ and receive a new run $R = [c^{(1)}_1,s^{(1)}_1, ...,  c^{(1)}_n,c^{(2)}_1,s^{(2)}_1, ...,  c^{(2)}_n]$ feasible at $N_1$.
\end{proof}

\begin{figure}[ht]
\begin{minipage}[b]{0.45\linewidth}
\centering
\includegraphics[scale=0.6]{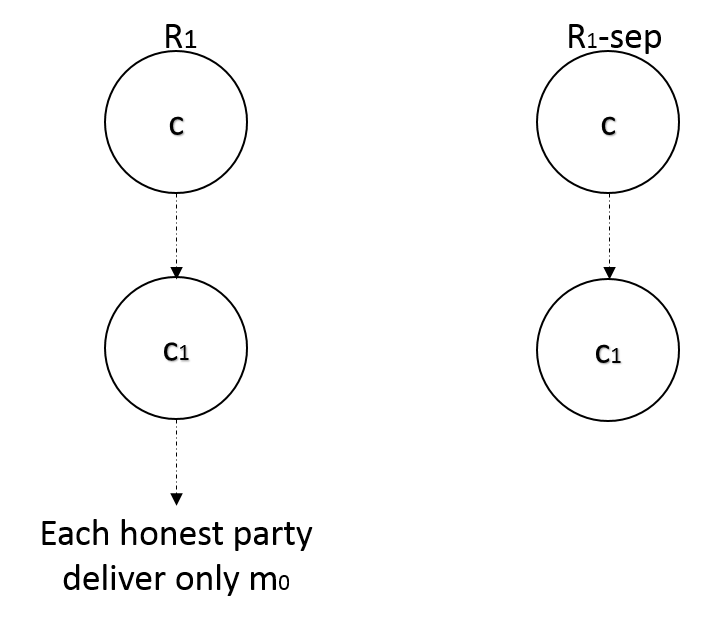}
\caption{In $C_1$ each honest party delivers only $m_1$}
\label{fig:figure1}
\end{minipage}
\hspace{0.5cm}
\begin{minipage}[b]{0.45\linewidth}
\centering
\includegraphics[scale=0.6]{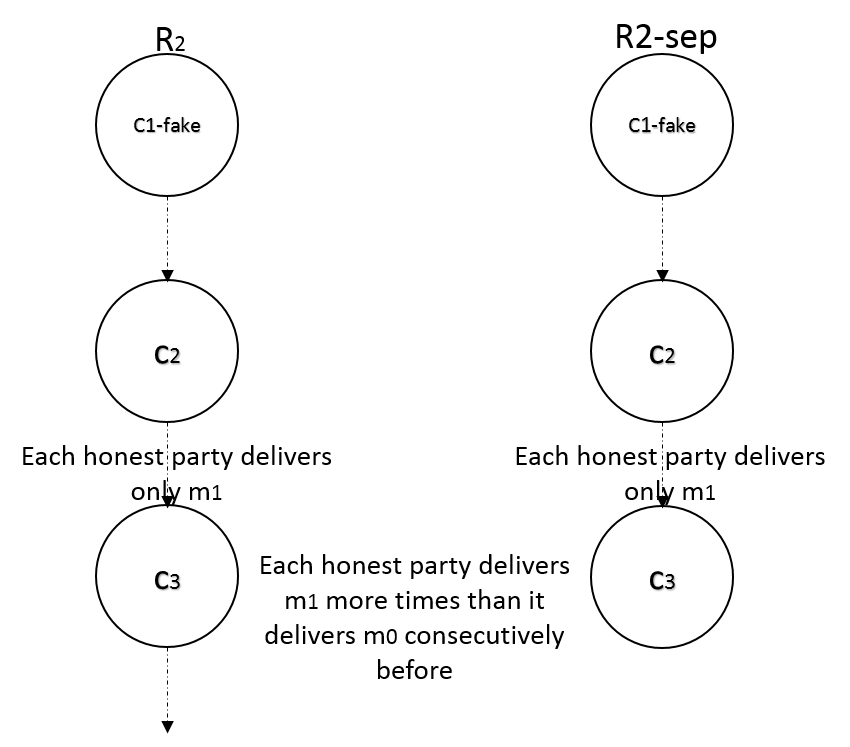}
\caption{In $C_3$ each honest party delivers $m_2$ more times than it delivers $m_1$ consecutively before}
\label{fig:figure2}
\end{minipage}
\end{figure}

\begin{figure}[t]
\centering
\includegraphics[scale=0.6]{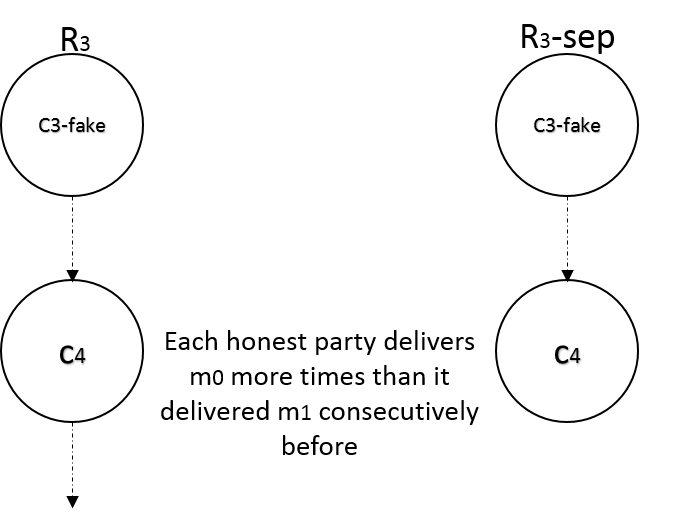}
\caption{In $C_4$ each honest party delivers $m_1$ more times than it delivers $m_2$ consecutively before}
\label{fig:figure3}
\end{figure}

\noindent\relem{Lemma9}[repeated]
\lemssinfinc
\erelem

\begin{proof}
Let $S = (C,N)$ be an arbitrary System State and $p'$ be a party that is not $p_{sender}$. By \lemmaref{Lemma7},  there is a run, $R_1$, that starts from $C$ and is feasible at $N$ such that $p'$ takes no steps in $R$ and each honest party delivers an infinite number of messages. 
Let's look at the sequence of configurations and network states obtained by applying $R_1$ on $N$. By condition S2 of SuRB, 
each honest party must have a suffix of the delivered messages that contains only the message given to $p_{sender}$ via the input stream. 
Assume that $p_{sender}$ is honest and the messages given to him via the input stream is $m_0$. 
Eventually we will reach a System State $S_1 = (C_1,N_1)$ such that in $S_1$, each honest party already performed deliver $m_0$ and from $S_1$ the only message that each honest party will deliver is $m_0$, also each honest party will deliver $m_0$ infinitely many times. 
Now we can stop $R_1$ in configuration $C_1$ and receive $R_{1-sep}$ (see \figureref{fig:figure1}).

Let $C_{1-fake}$ be the configuration that is different from $C_1$ only in the internal state of $p_{sender}$, Let $S_{1-fake} = (C_{1-fake},N_1)$. 
Assume that in $C_{1-fake}$ each time $p_{sender}$ reads from the input stream a transient fault causes the  message read to change inside $p_{sender}$ memory to $m_1$ (just like $p_{sender}$ actually reads $m_1$ instead of $m_0$). 
Let's start from $S_{1-fake}$. We know that there is a run, $R_2$, that starts from $C_{1-fake}$ and is feasible at $N_1$ such that $p'$ takes no steps in $R_2$ and each honest party delivers an infinite number of messages (\lemmaref{Lemma7}), also there is a suffix of delivered messages that contains only the input of $p_{sender}$ (condition S2). 
Eventually we will reach a System State $S_2 = (C_2,N_2)$ such that from $S_2$ the only message that each honest party will deliver is $m_1$ and each honest party will deliver $m_1$ infinitely many times. 
Because each honest party will deliver $m_1$ infinitely many times, eventually we will reach a System State in which each honest party delivered $m_1$ more times than it delivered $m_0$ consecutively before. 
Let's call this System State $S_3 = (C_3,N_3)$. Now we can stop $R_2$ in configuration $C_3$ and receive $R_{2-sep}$ (see \figureref{fig:figure2}).

We can repeat the process from $C_3$ and receive a new System State $S_{3-fake} = (C_{3-fake},N_3)$ and a run $R_3$ that starts in $C_{3-fake}$ and is feasible at $N_3$ such that each honest party will have a suffix of delivered messages that contain only $m_0$ (we assume that in $C_{3-fake}$, $p_{sender}$ reads the real value, $m_0$), also there must be a System State $S_4$ in which each honest party already delivered $m_0$ more times than it delivered $m_1$ consecutively before. Again we can obtain $R_{3-sep}$ in the same way (see \figureref{fig:figure3}).

We can repeat the process over and over again and receive an infinite sequence of runs such that when applying $R_{i-sep}$ on $N_i$, each honest party delivered $m_j$ consecutively more times than it delivered $m_{1-j}$ consecutively before ($j \in \{1,2\}$). Also for each $i > 0$ the last configuration in $R_{i-sep}$ and the first configuration in $R_{{i+1}-sep}$ are different only in the internal state of $p_{sender}$ and their network state are the same so by \lemmaref{Lemma8} we can concatenate all those runs (by using transient faults or maliciousness of the sender) and receive a run, $R_{final}$, feasible at $N$ such that $p'$ takes no steps in $R_{final}$ and when applying $R_{final}$ on $N$ each honest party delivers an infinite incremental sequence.
\end{proof}

\end{document}